\chardef\@x10\chardef\@xv60
\def\tcitime{
\def\@time{%
  \@minute\time\@hour\@minute\divide\@hour\@xv
  \ifnum\@hour<\@x 0\fi\the\@hour:%
  \multiply\@hour\@xv\advance\@minute-\@hour
  \ifnum\@minute<\@x 0\fi\the\@minute
  }}%
\def\QCTOpt[#1]#2{%
  \def\QCTOptB{#1}
  \def\QCTOptA{#2}
}
\def\QCTNOpt#1{%
  \def\QCTOptA{#1}
  \let\QCTOptB\empty
}
\def\Qct{%
  \@ifnextchar[{%
    \QCTOpt}{\QCTNOpt}
}
\def\QCBOpt[#1]#2{%
  \def\QCBOptB{#1}
  \def\QCBOptA{#2}
}
\def\QCBNOpt#1{%
  \def\QCBOptA{#1}
  \let\QCBOptB\empty
}
\def\Qcb{%
  \@ifnextchar[{%
    \QCBOpt}{\QCBNOpt}
}
\def\PrepCapArgs{%
  \ifx\QCBOptA\empty
    \ifx\QCTOptA\empty
      {}%
    \else
      \ifx\QCTOptB\empty
        {\QCTOptA}%
      \else
        [\QCTOptB]{\QCTOptA}%
      \fi
    \fi
  \else
    \ifx\QCBOptA\empty
      {}%
    \else
      \ifx\QCBOptB\empty
        {\QCBOptA}%
      \else
        [\QCBOptB]{\QCBOptA}%
      \fi
    \fi
  \fi
}
\def\GRAPHICSPS#1{%
 \ifcase\GRAPHICSTYPE%\GRAPHICSTYPE=0
   \special{ps: #1}%
 \or%\GRAPHICSTYPE=1
   \special{language "PS", include "#1"}%
%%%\or%\GRAPHICSTYPE=2
%%%  #1%
 \fi
}%
\def\graffile#1#2#3#4{%
%%% \ifnum\GRAPHICSTYPE=\tw@
%%%  %Following if using psfig
%%%  \@ifundefined{psfig}{\input psfig.tex}{}%
%%%  \psfig{file=#1, height=#3, width=#2}%
%%% \else
  %Following for all others
  % JCS - added BOXTHEFRAME, see below
    \leavevmode
    \raise -#4 \BOXTHEFRAME{%
        \hbox to #2{\raise #3\hbox to #2{\null #1\hfil}}}%
}%
\def\draftbox#1#2#3#4{%
 \leavevmode\raise -#4 \hbox{%
  \frame{\rlap{\protect\tiny #1}\hbox to #2%
   {\vrule height#3 width\z@ depth\z@\hfil}%
  }%
 }%
}%
\newif\ifwasdraft
\def\GRAPHIC#1#2#3#4#5{%
 \ifnum\draft=\@ne\draftbox{#2}{#3}{#4}{#5}%
  \else\graffile{#1}{#3}{#4}{#5}%
  \fi
 }%
\def\addtoLaTeXparams#1{%
    \edef\LaTeXparams{\LaTeXparams #1}}%
\newif\ifBoxFrame \BoxFramefalse
\newif\ifOverFrame \OverFramefalse
\newif\ifUnderFrame \UnderFramefalse
\def\BOXTHEFRAME#1{%
   \hbox{%
      \ifBoxFrame
         \frame{#1}%
      \else
         {#1}%
      \fi
   }%
}
\def\doFRAMEparams#1{\BoxFramefalse\OverFramefalse\UnderFramefalse\readFRAMEparams#1\end}%
\def\readFRAMEparams#1{%
 \ifx#1\end%
  \let\next=\relax
  \else
  \ifx#1i\dispkind=\z@\fi
  \ifx#1d\dispkind=\@ne\fi
  \ifx#1f\dispkind=\tw@\fi
  \ifx#1t\addtoLaTeXparams{t}\fi
  \ifx#1b\addtoLaTeXparams{b}\fi
  \ifx#1p\addtoLaTeXparams{p}\fi
  \ifx#1h\addtoLaTeXparams{h}\fi
  \ifx#1X\BoxFrametrue\fi
  \ifx#1O\OverFrametrue\fi
  \ifx#1U\UnderFrametrue\fi
  \ifx#1w
    \ifnum\draft=1\wasdrafttrue\else\wasdraftfalse\fi
    \draft=\@ne
  \fi
  \let\next=\readFRAMEparams
  \fi
 \next
 }%
\def\IFRAME#1#2#3#4#5#6{%
      \bgroup
      \let\QCTOptA\empty
      \let\QCTOptB\empty
      \let\QCBOptA\empty
      \let\QCBOptB\empty
      #6%
      \parindent=0pt%
      \leftskip=0pt
      \rightskip=0pt
      \setbox0 = \hbox{\QCBOptA}%
      \@tempdima = #1\relax
      \ifOverFrame
          % Do this later
          \typeout{This is not implemented yet}%
          \show\HELP
      \else
         \ifdim\wd0>\@tempdima
            \advance\@tempdima by \@tempdima
            \ifdim\wd0 >\@tempdima
               \textwidth=\@tempdima
               \setbox1 =\vbox{%
                  \noindent\hbox to \@tempdima{\hfill\GRAPHIC{#5}{#4}{#1}{#2}{#3}\hfill}\\%
                  \noindent\hbox to \@tempdima{\parbox[b]{\@tempdima}{\QCBOptA}}%
               }%
               \wd1=\@tempdima
            \else
               \textwidth=\wd0
               \setbox1 =\vbox{%
                 \noindent\hbox to \wd0{\hfill\GRAPHIC{#5}{#4}{#1}{#2}{#3}\hfill}\\%
                 \noindent\hbox{\QCBOptA}%
               }%
               \wd1=\wd0
            \fi
         \else
            %\show\BBB
            \ifdim\wd0>0pt
              \hsize=\@tempdima
              \setbox1 =\vbox{%
                \unskip\GRAPHIC{#5}{#4}{#1}{#2}{0pt}%
                \break
                \unskip\hbox to \@tempdima{\hfill \QCBOptA\hfill}%
              }%
              \wd1=\@tempdima
           \else
              \hsize=\@tempdima
              \setbox1 =\vbox{%
                \unskip\GRAPHIC{#5}{#4}{#1}{#2}{0pt}%
              }%
              \wd1=\@tempdima
           \fi
         \fi
         \@tempdimb=\ht1
         \advance\@tempdimb by \dp1
         \advance\@tempdimb by -#2%
         \advance\@tempdimb by #3%
         \leavevmode
         \raise -\@tempdimb \hbox{\box1}%
      \fi
      \egroup%
}%
\def\DFRAME#1#2#3#4#5{%
 \begin{center}
     \let\QCTOptA\empty
     \let\QCTOptB\empty
     \let\QCBOptA\empty
     \let\QCBOptB\empty
     \ifOverFrame
        #5\QCTOptA\par
     \fi
     \GRAPHIC{#4}{#3}{#1}{#2}{\z@}
     \ifUnderFrame
        \nobreak\par #5\QCBOptA
     \fi
 \end{center}%
 }%
\def\FFRAME#1#2#3#4#5#6#7{%
 \begin{figure}[#1]%
  \let\QCTOptA\empty
  \let\QCTOptB\empty
  \let\QCBOptA\empty
  \let\QCBOptB\empty
  \ifOverFrame
    #4
    \ifx\QCTOptA\empty
    \else
      \ifx\QCTOptB\empty
        \caption{\QCTOptA}%
      \else
        \caption[\QCTOptB]{\QCTOptA}%
      \fi
    \fi
    \ifUnderFrame\else
      \label{#5}%
    \fi
  \else
    \UnderFrametrue%
  \fi
  \begin{center}\GRAPHIC{#7}{#6}{#2}{#3}{\z@}\end{center}%
  \ifUnderFrame
    #4
    \ifx\QCBOptA\empty
      \caption{}%
    \else
      \ifx\QCBOptB\empty
        \caption{\QCBOptA}%
      \else
        \caption[\QCBOptB]{\QCBOptA}%
      \fi
    \fi
    \label{#5}%
  \fi
  \end{figure}%
 }%
\def\makeactives{
  \catcode`\"=\active
  \catcode`\;=\active
  \catcode`\:=\active
  \catcode`\'=\active
  \catcode`\~=\active
}
   \gdef\activesoff{%
      \def"{\string"}
      \def;{\string;}
      \def:{\string:}
      \def'{\string'}
      \def~{\string~}
      %\bbl@deactivate{"}%
      %\bbl@deactivate{;}%
      %\bbl@deactivate{:}%
      %\bbl@deactivate{'}%
    }
\def\FRAME#1#2#3#4#5#6#7#8{%
 \bgroup
 \@ifundefined{bbl@deactivate}{}{\activesoff}
 \ifnum\draft=\@ne
   \wasdrafttrue
 \else
   \wasdraftfalse%
 \fi
 \def\LaTeXparams{}%
 \dispkind=\z@
 \def\LaTeXparams{}%
 \doFRAMEparams{#1}%
 \ifnum\dispkind=\z@\IFRAME{#2}{#3}{#4}{#7}{#8}{#5}\else
  \ifnum\dispkind=\@ne\DFRAME{#2}{#3}{#7}{#8}{#5}\else
   \ifnum\dispkind=\tw@
    \edef\@tempa{\noexpand\FFRAME{\LaTeXparams}}%
    \@tempa{#2}{#3}{#5}{#6}{#7}{#8}%
    \fi
   \fi
  \fi
  \ifwasdraft\draft=1\else\draft=0\fi{}%
  \egroup
 }%
\def\TEXUX#1{"texux"}
\def\func#1{\mathop{\rm #1}}%
\long\def\QQQ#1#2{%
     \long\expandafter\def\csname#1\endcsname{#2}}%
\long\def\QQA#1#2{}%
\def\QTR#1#2{{\csname#1\endcsname #2}}%(gp) Is this the best?
\def\EXPAND#1[#2]#3{}%
\def\NOEXPAND#1[#2]#3{}%
\def\LaTeXparent#1{}%
\def\ChildStyles#1{}%
\def\ChildDefaults#1{}%
\def\QTagDef#1#2#3{}%
\def\QQfnmark#1{\footnotemark}
\def\makeatletter\input gnuindex.sty\makeatother\makeindex{\makeatletter\input gnuindex.sty\makeatother\makeindex}%	
\def\initial#1{\bigbreak{\raggedright\large\bf #1}\kern 2\p@\penalty3000}}%
 \def\abstract{%
  \if@twocolumn
   \section*{Abstract (Not appropriate in this style!)}%
   \else \small
   \begin{center}{\bf Abstract\vspace{-.5em}\vspace{\z@}}\end{center}%
   \quotation
   \fi
  }%
   \def\registered{\relax\ifmmode{}\r@gistered
                    \else$\m@th\r@gistered$\fi}%
 \def\r@gistered{^{\ooalign
  {\hfil\raise.07ex\hbox{$\scriptstyle\rm\text{R}$}\hfil\crcr
  \mathhexbox20D}}}}{}%
\newdimen\theight
\def\Column{%
 \vadjust{\setbox\z@=\hbox{\scriptsize\quad\quad tcol}%
  \theight=\ht\z@\advance\theight by \dp\z@\advance\theight by \lineskip
  \kern -\theight \vbox to \theight{%
   \rightline{\rlap{\box\z@}}%
   \vss
   }%
  }%
 }%
\def\qed{%
 \ifhmode\unskip\nobreak\fi\ifmmode\ifinner\else\hskip5\p@\fi\fi
 \hbox{\hskip5\p@\vrule width4\p@ height6\p@ depth1.5\p@\hskip\p@}%
 }%
\def\miss{\hbox{\vrule height2\p@ width 2\p@ depth\z@}}%
\def\tcol#1{{\baselineskip=6\p@ \vcenter{#1}} \Column}  %
\def\newfmtname{LaTeX2e}
\def\chkcompat{%
   \if@compatibility
   \else
     \usepackage{latexsym}
   \fi
}
  \DeclareOldFontCommand{\rm}{\normalfont\rmfamily}{\mathrm}
  \DeclareOldFontCommand{\sf}{\normalfont\sffamily}{\mathsf}
  \DeclareOldFontCommand{\tt}{\normalfont\ttfamily}{\mathtt}
  \DeclareOldFontCommand{\bf}{\normalfont\bfseries}{\mathbf}
  \DeclareOldFontCommand{\it}{\normalfont\itshape}{\mathit}
  \DeclareOldFontCommand{\sl}{\normalfont\slshape}{\@nomath\sl}
  \DeclareOldFontCommand{\sc}{\normalfont\scshape}{\@nomath\sc}
\def\alpha{{\Greekmath 010B}}%
\def\beta{{\Greekmath 010C}}%
\def\gamma{{\Greekmath 010D}}%
\def\delta{{\Greekmath 010E}}%
\def\epsilon{{\Greekmath 010F}}%
\def\zeta{{\Greekmath 0110}}%
\def\eta{{\Greekmath 0111}}%
\def\theta{{\Greekmath 0112}}%
\def\iota{{\Greekmath 0113}}%
\def\kappa{{\Greekmath 0114}}%
\def\lambda{{\Greekmath 0115}}%
\def\mu{{\Greekmath 0116}}%
\def\nu{{\Greekmath 0117}}%
\def\xi{{\Greekmath 0118}}%
\def\pi{{\Greekmath 0119}}%
\def\rho{{\Greekmath 011A}}%
\def\sigma{{\Greekmath 011B}}%
\def\tau{{\Greekmath 011C}}%
\def\upsilon{{\Greekmath 011D}}%
\def\phi{{\Greekmath 011E}}%
\def\chi{{\Greekmath 011F}}%
\def\psi{{\Greekmath 0120}}%
\def\omega{{\Greekmath 0121}}%
\def\varepsilon{{\Greekmath 0122}}%
\def\vartheta{{\Greekmath 0123}}%
\def\varpi{{\Greekmath 0124}}%
\def\varrho{{\Greekmath 0125}}%
\def\varsigma{{\Greekmath 0126}}%
\def\varphi{{\Greekmath 0127}}%
\def\nabla{{\Greekmath 0272}}
\def\FindBoldGroup{%
   {\setbox0=\hbox{$\mathbf{x\global\edef\theboldgroup{\the\mathgroup}}$}}%
}
\def\Greekmath#1#2#3#4{%
    \if@compatibility
        \ifnum\mathgroup=\symbold
           \mathchoice{\mbox{\boldmath$\displaystyle\mathchar"#1#2#3#4$}}%
                      {\mbox{\boldmath$\textstyle\mathchar"#1#2#3#4$}}%
                      {\mbox{\boldmath$\scriptstyle\mathchar"#1#2#3#4$}}%
                      {\mbox{\boldmath$\scriptscriptstyle\mathchar"#1#2#3#4$}}%
        \else
           \mathchar"#1#2#3#4%
        \fi
    \else
        \FindBoldGroup
        \ifnum\mathgroup=\theboldgroup % For 2e
           \mathchoice{\mbox{\boldmath$\displaystyle\mathchar"#1#2#3#4$}}%
                      {\mbox{\boldmath$\textstyle\mathchar"#1#2#3#4$}}%
                      {\mbox{\boldmath$\scriptstyle\mathchar"#1#2#3#4$}}%
                      {\mbox{\boldmath$\scriptscriptstyle\mathchar"#1#2#3#4$}}%
        \else
           \mathchar"#1#2#3#4%
        \fi     	
	  \fi}
\newif\ifGreekBold  \GreekBoldfalse
\let\SAVEPBF=\pbf
\def\pbf{\GreekBoldtrue\SAVEPBF}%
  \newcounter{equationnumber}
  \def\mathletters{%
     \addtocounter{equation}{1}
     \edef\@currentlabel{\theequation}%
     \setcounter{equationnumber}{\c@equation}
     \setcounter{equation}{0}%
     \edef\theequation{\@currentlabel\noexpand\alph{equation}}%
  }
    \def\BibTeX{{\rm B\kern-.05em{\sc i\kern-.025em b}\kern-.08em
                 T\kern-.1667em\lower.7ex\hbox{E}\kern-.125emX}}}{}%
\def\AmS{{\protect\usefont{OMS}{cmsy}{m}{n}%
                A\kern-.1667em\lower.5ex\hbox{M}\kern-.125emS}}}{}%
\let\DOTSI\relax
\def\RIfM@{\relax\ifmmode}%
\def\FN@{\futurelet\next}%
\def\iint{\DOTSI\intno@\tw@\FN@\ints@}%
\def\iiint{\DOTSI\intno@\thr@@\FN@\ints@}%
\def\iiiint{\DOTSI\intno@4 \FN@\ints@}%
\def\idotsint{\DOTSI\intno@\z@\FN@\ints@}%
\def\ints@{\findlimits@\ints@@}%
\newif\iflimtoken@
\newif\iflimits@
\def\findlimits@{\limtoken@true\ifx\next\limits\limits@true
 \else\ifx\next\nolimits\limits@false\else
 \limtoken@false\ifx\ilimits@\nolimits\limits@false\else
 \ifinner\limits@false\else\limits@true\fi\fi\fi\fi}%
\def\multint@{\int\ifnum\intno@=\z@\intdots@                          %1
 \else\intkern@\fi                                                    %2
 \ifnum\intno@>\tw@\int\intkern@\fi                                   %3
 \ifnum\intno@>\thr@@\int\intkern@\fi                                 %4
 \int}%                                                               %5
\def\multintlimits@{\intop\ifnum\intno@=\z@\intdots@\else\intkern@\fi
 \ifnum\intno@>\tw@\intop\intkern@\fi
 \ifnum\intno@>\thr@@\intop\intkern@\fi\intop}%
\def\intic@{%
    \mathchoice{\hskip.5em}{\hskip.4em}{\hskip.4em}{\hskip.4em}}%
\def\negintic@{\mathchoice
 {\hskip-.5em}{\hskip-.4em}{\hskip-.4em}{\hskip-.4em}}%
\def\ints@@{\iflimtoken@                                              %1
 \def\ints@@@{\iflimits@\negintic@
   \mathop{\intic@\multintlimits@}\limits                             %2
  \else\multint@\nolimits\fi                                          %3
  \eat@}%                                                             %4
 \else                                                                %5
 \def\ints@@@{\iflimits@\negintic@
  \mathop{\intic@\multintlimits@}\limits\else
  \multint@\nolimits\fi}\fi\ints@@@}%
\def\intkern@{\mathchoice{\!\!\!}{\!\!}{\!\!}{\!\!}}%
\def\plaincdots@{\mathinner{\cdotp\cdotp\cdotp}}%
\def\intdots@{\mathchoice{\plaincdots@}%
 {{\cdotp}\mkern1.5mu{\cdotp}\mkern1.5mu{\cdotp}}%
 {{\cdotp}\mkern1mu{\cdotp}\mkern1mu{\cdotp}}%
 {{\cdotp}\mkern1mu{\cdotp}\mkern1mu{\cdotp}}}%
\def\RIfM@{\relax\protect\ifmmode}
\def\text{\RIfM@\expandafter\text@\else\expandafter\mbox\fi}
\let\nfss@text\text
\def\text@#1{\mathchoice
   {\textdef@\displaystyle\f@size{#1}}%
   {\textdef@\textstyle\tf@size{\firstchoice@false #1}}%
   {\textdef@\textstyle\sf@size{\firstchoice@false #1}}%
   {\textdef@\textstyle \ssf@size{\firstchoice@false #1}}%
   \glb@settings}
\def\textdef@#1#2#3{\hbox{{%
                    \everymath{#1}%
                    \let\f@size#2\selectfont
                    #3}}}
\newif\iffirstchoice@
\def\Let@{\relax\iffalse{\fi\let\\=\cr\iffalse}\fi}%
\def\vspace@{\def\vspace##1{\crcr\noalign{\vskip##1\relax}}}%
\def\multilimits@{\bgroup\vspace@\Let@
 \baselineskip\fontdimen10 \scriptfont\tw@
 \advance\baselineskip\fontdimen12 \scriptfont\tw@
 \lineskip\thr@@\fontdimen8 \scriptfont\thr@@
 \lineskiplimit\lineskip
 \vbox\bgroup\ialign\bgroup\hfil$\m@th\scriptstyle{##}$\hfil\crcr}%
\def\Sb{_\multilimits@}%
\def\endSb{\crcr\egroup\egroup\egroup}%
\def\Sp{^\multilimits@}%
\newdimen\ex@
\def\rightarrowfill@#1{$#1\m@th\mathord-\mkern-6mu\cleaders
 \hbox{$#1\mkern-2mu\mathord-\mkern-2mu$}\hfill
 \mkern-6mu\mathord\rightarrow$}%
\def\leftarrowfill@#1{$#1\m@th\mathord\leftarrow\mkern-6mu\cleaders
 \hbox{$#1\mkern-2mu\mathord-\mkern-2mu$}\hfill\mkern-6mu\mathord-$}%
\def\leftrightarrowfill@#1{$#1\m@th\mathord\leftarrow
\mkern-6mu\cleaders
 \hbox{$#1\mkern-2mu\mathord-\mkern-2mu$}\hfill
 \mkern-6mu\mathord\rightarrow$}%
\def\overrightarrow{\mathpalette\overrightarrow@}%
\def\overrightarrow@#1#2{\vbox{\ialign{##\crcr\rightarrowfill@#1\crcr
 \noalign{\kern-\ex@\nointerlineskip}$\m@th\hfil#1#2\hfil$\crcr}}}%
\def\overleftarrow{\mathpalette\overleftarrow@}%
\def\overleftarrow@#1#2{\vbox{\ialign{##\crcr\leftarrowfill@#1\crcr
 \noalign{\kern-\ex@\nointerlineskip}$\m@th\hfil#1#2\hfil$\crcr}}}%
\def\overleftrightarrow{\mathpalette\overleftrightarrow@}%
\def\overleftrightarrow@#1#2{\vbox{\ialign{##\crcr
   \leftrightarrowfill@#1\crcr
 \noalign{\kern-\ex@\nointerlineskip}$\m@th\hfil#1#2\hfil$\crcr}}}%
\def\underrightarrow{\mathpalette\underrightarrow@}%
\def\underrightarrow@#1#2{\vtop{\ialign{##\crcr$\m@th\hfil#1#2\hfil
  $\crcr\noalign{\nointerlineskip}\rightarrowfill@#1\crcr}}}%
\def\underleftarrow{\mathpalette\underleftarrow@}%
\def\underleftarrow@#1#2{\vtop{\ialign{##\crcr$\m@th\hfil#1#2\hfil
  $\crcr\noalign{\nointerlineskip}\leftarrowfill@#1\crcr}}}%
\def\underleftrightarrow{\mathpalette\underleftrightarrow@}%
\def\underleftrightarrow@#1#2{\vtop{\ialign{##\crcr$\m@th
  \hfil#1#2\hfil$\crcr
 \noalign{\nointerlineskip}\leftrightarrowfill@#1\crcr}}}%
\def\qopnamewl@#1{\mathop{\operator@font#1}\nlimits@}
\let\nlimits@\displaylimits
\def\setboxz@h{\setbox\z@\hbox}
\def\varlim@#1#2{\mathop{\vtop{\ialign{##\crcr
 \hfil$#1\m@th\operator@font lim$\hfil\crcr
 \noalign{\nointerlineskip}#2#1\crcr
 \noalign{\nointerlineskip\kern-\ex@}\crcr}}}}
 \def\rightarrowfill@#1{\m@th\setboxz@h{$#1-$}\ht\z@\z@
  $#1\copy\z@\mkern-6mu\cleaders
  \hbox{$#1\mkern-2mu\box\z@\mkern-2mu$}\hfill
  \mkern-6mu\mathord\rightarrow$}
\def\leftarrowfill@#1{\m@th\setboxz@h{$#1-$}\ht\z@\z@
  $#1\mathord\leftarrow\mkern-6mu\cleaders
  \hbox{$#1\mkern-2mu\copy\z@\mkern-2mu$}\hfill
  \mkern-6mu\box\z@$}
\def\projlim{\qopnamewl@{proj\,lim}}
\def\injlim{\qopnamewl@{inj\,lim}}
\def\varinjlim{\mathpalette\varlim@\rightarrowfill@}
\def\varprojlim{\mathpalette\varlim@\leftarrowfill@}
\def\varliminf{\mathpalette\varliminf@{}}
\def\varliminf@#1{\mathop{\underline{\vrule\@depth.2\ex@\@width\z@
   \hbox{$#1\m@th\operator@font lim$}}}}
\def\varlimsup{\mathpalette\varlimsup@{}}
\def\varlimsup@#1{\mathop{\overline
  {\hbox{$#1\m@th\operator@font lim$}}}}
\def\align{\@verbatim \frenchspacing\@vobeyspaces \@alignverbatim
You are using the "align" environment in a style in which it is not defined.}
\let\csname endalign*\endcsname =\endtrivlist
\def\alignat{\@verbatim \frenchspacing\@vobeyspaces \@alignatverbatim
You are using the "alignat" environment in a style in which it is not defined.}
\let\csname endalignat*\endcsname =\endtrivlist
\def\xalignat{\@verbatim \frenchspacing\@vobeyspaces \@xalignatverbatim
You are using the "xalignat" environment in a style in which it is not defined.}
\let\csname endxalignat*\endcsname =\endtrivlist
\def\gather{\@verbatim \frenchspacing\@vobeyspaces \@gatherverbatim
You are using the "gather" environment in a style in which it is not defined.}
\let\csname endgather*\endcsname =\endtrivlist
\def\multiline{\@verbatim \frenchspacing\@vobeyspaces \@multilineverbatim
You are using the "multiline" environment in a style in which it is not defined.}
\let\csname endmultiline*\endcsname =\endtrivlist
\def\arrax{\@verbatim \frenchspacing\@vobeyspaces \@arraxverbatim
You are using a type of "array" construct that is only allowed in AmS-LaTeX.}
\def\tabulax{\@verbatim \frenchspacing\@vobeyspaces \@tabulaxverbatim
You are using a type of "tabular" construct that is only allowed in AmS-LaTeX.}
\let\csname endarrax*\endcsname =\endtrivlist
\let\csname endtabulax*\endcsname =\endtrivlist
\def\@@eqncr{\let\@tempa\relax
    \ifcase\@eqcnt \def\@tempa{& & &}\or \def\@tempa{& &}%
      \else \def\@tempa{&}\fi
     \@tempa
     \if@eqnsw
        \iftag@
           \@taggnum
        \else
           \@eqnnum\stepcounter{equation}%
        \fi
     \fi
     \global\tag@false
     \global\@eqnswtrue
     \global\@eqcnt\z@\cr}
 \def\endequation{%
     \ifmmode\ifinner % FLEQN hack
      \iftag@
        \addtocounter{equation}{-1} % undo the increment made in the begin part
        $\hfil
           \displaywidth\linewidth\@taggnum\egroup \endtrivlist
        \global\tag@false
        \global\@ignoretrue
      \else
        $\hfil
           \displaywidth\linewidth\@eqnnum\egroup \endtrivlist
        \global\tag@false
        \global\@ignoretrue
      \fi
     \else
      \iftag@
        \addtocounter{equation}{-1} % undo the increment made in the begin part
        \eqno \hbox{\@taggnum}
        \global\tag@false%
        $$\global\@ignoretrue
      \else
        \eqno \hbox{\@eqnnum}% $$ BRACE MATCHING HACK
        $$\global\@ignoretrue
      \fi
     \fi\fi
 }
 \newif\iftag@ \tag@false
 \def\tag{\@ifnextchar*{\@tagstar}{\@tag}}
 \def\@tag#1{%
     \global\tag@true
     \global\def\@taggnum{(#1)}}
 \def\@tagstar*#1{%
     \global\tag@true
     \global\def\@taggnum{#1}%
}
\newtheorem{satz}{Theorem}[section]
\newtheorem{lemma}[satz]{Lemma}
\newtheorem{koro}[satz]{Corollary}
\newtheorem{bemerkung}[satz]{Remark}
\newtheorem{proposition}[satz]{Proposition}
\newtheorem{notation}[satz]{Notation}
\newenvironment{proof}{\par\noindent {\it Proof:} \hspace{7pt}}{\hfill\hbox{\vrule width 7pt depth 0pt height 7pt}
\par\vspace{10pt}}
\begin{document}

\title{Macroscopic Conductivity of Free Fermions in Disordered Media}
\author{J.-B. Bru \and W. de Siqueira Pedra \and C. Hertling}
\date{\today}
\maketitle

\begin{abstract}
We conclude our analysis of the linear response of charge transport in
lattice systems of free fermions subjected to a random potential by deriving
general mathematical properties of its conductivity at the macroscopic
scale. The present paper belongs to a succession of studies on Ohm and
Joule's laws from a thermodynamic viewpoint starting with \cite%
{OhmI,OhmII,OhmIII}. We show, in particular, the existence and finiteness of
the conductivity measure $\mu _{\mathbf{\Sigma }}$ for macroscopic scales.
Then we prove that, similar to the conductivity measure associated to
Drude's model, $\mu _{\mathbf{\Sigma }}$ converges in the weak$^{\ast } $%
--topology to the trivial measure in the case of perfect insulators (strong
disorder, complete localization), whereas in the limit of perfect conductors
(absence of disorder) it converges to an atomic measure concentrated at
frequency $\nu =0$. However, the AC--conductivity $\mu _{\mathbf{\Sigma }}|_{%
\mathbb{R}\backslash \{0\}}$ does not vanish in general: We show that $\mu _{%
\mathbf{\Sigma }}(\mathbb{R}\backslash \{0\})>0$, at least for large
temperatures and a certain regime of small disorder. \bigskip
\end{abstract}

\noindent\textbf{Keywords:} disordered systems, transport processes,
conductivity measure, Anderson model \smallskip

\noindent \textbf{Mathematics Subject Classification 2010:} 82C70, 82C44,
82C20

%%AQ

\section{Introduction}

We define in \cite{OhmIII} AC--conductivity measures for free fermions on
the lattice subjected to a random potential by using the second principle of
thermodynamics, which corresponds to the positivity of the heat production
for cyclic processes on equilibrium states. Such measures were introduced
for the first time in \cite{Annale,JMP-autre} by using a different approach.

In \cite{OhmIII} we prove moreover Ohm and Joule's laws from first
principles of thermodynamics and quantum mechanics for electric fields that
is time-- and space--dependent. The microscopic theory usually explaining
these laws is based on Drude's model (1900) combined with quantum
corrections. [Cf. the Landau theory of fermi liquids.] Indeed, although the
motion of electrons and ions is treated classically and the interaction
between these two species is modeled by perfectly elastic random collisions,
this quite elementary model provides a qualitatively good description of
DC-- and AC--conductivities in metals. Recall that well--known computations
using Drude's model predict that the conductivity $\Sigma _{\mathrm{Drude}%
}(t)$ behaves like%
\begin{equation}
\Sigma _{\mathrm{Drude}}(t)=D\exp (-\mathrm{T}^{-1}t)\ ,\qquad t\in \mathbb{R%
}_{0}^{+}\ ,  \label{eq0}
\end{equation}%
where $\mathrm{T}>0$ is related to the mean time interval between two
collisions of a charged carrier with defects in the crystal, whereas $D\in
\mathbb{R}^{+}$ is some strictly positive constant. In particular, for any
electromagnetic potential $\mathbf{A}\in C_{0}^{\infty }(\mathbb{R}\times {%
\mathbb{R}}^{3};({\mathbb{R}}^{3})^{\ast })$ with corresponding electric
field (in the Weyl gauge)
\begin{equation*}
E_{\mathbf{A}}(t,x):=-\partial _{t}\mathbf{A}(t,x)\ ,\quad t\in \mathbb{R},\
x\in \mathbb{R}^{3}\ ,
\end{equation*}
the heat production at large times is in this case equal to%
\begin{equation*}
\int\nolimits_{t_{0}}^{t}\mathrm{d}s_{1}\int\nolimits_{t_{0}}^{s_{1}}\mathrm{%
d}s_{2}\Sigma _{\mathrm{Drude}}(s_{1}-s_{2})\int\nolimits_{\mathbb{R}^{3}}%
\mathrm{d}^{3}x\langle E_{\mathbf{A}}(s_{2},x),E_{\mathbf{A}}(s_{1},x)\rangle
\end{equation*}%
for any $t\geq t_{0}$, where $t_{0}$ is the time when the electromagnetic
potential is turned on, i.e., $\mathbf{A}(t,\cdot )=0$ for all $t\leq t_{0}$%
. Then, since $s\mapsto E_{\mathbf{A}}(s,x)$ is smooth and compactly
supported for all $x\in \mathbb{R}^{3}$, we deduce from Fubini's theorem and
(\ref{eq0}) that%
\begin{eqnarray*}
&&\int\nolimits_{t_{0}}^{t}\mathrm{d}s_{1}\int\nolimits_{t_{0}}^{s_{1}}%
\mathrm{d}s_{2}\Sigma _{\mathrm{Drude}}(s_{1}-s_{2})\int\nolimits_{\mathbb{R}%
^{3}}\mathrm{d}^{3}x\left\langle E_{\mathbf{A}}(s_{2},x),E_{\mathbf{A}%
}(s_{1},x)\right\rangle \\
&=&\frac{1}{2}\int\nolimits_{\mathbb{R}^{3}}\mathrm{d}^{3}x\int_{\mathbb{R}}%
\mathrm{d}\nu |\hat{E}_{\mathbf{A}}(\nu ,x)|^{2}\vartheta _{\mathrm{T}%
}\left( \nu \right) \ ,
\end{eqnarray*}%
where $\nu \mapsto \hat{E}_{\mathbf{A}}(\nu ,x)$ and%
\begin{equation*}
\nu \mapsto \vartheta _{\mathrm{T}}\left( \nu \right) \sim \frac{\mathrm{T}}{%
1+\mathrm{T}^{2}\nu ^{2}}
\end{equation*}%
are the Fourier transforms of the maps
\begin{equation*}
s\mapsto E_{\mathbf{A}}(s,x)\qquad \text{and}\qquad s\mapsto \exp \left( -%
\mathrm{T}^{-1}\left\vert s\right\vert \right) \ ,
\end{equation*}%
respectively, at any fixed $x\in \mathbb{R}^{3}$. In particular,
\begin{equation*}
|\hat{E}_{\mathbf{A}}(\nu ,x)|^{2}\vartheta _{\mathrm{T}}\left( \nu \right)
\mathrm{d}\nu
\end{equation*}%
is the heat production due to the component of frequency $\nu $ of the
electric field, in accordance with Joule's law in the AC--regime.

Thus, the (positive) measure $\vartheta _{\mathrm{T}}(\nu )\mathrm{d}\nu $
is the in--phase conductivity measure of Drude's model. Its restriction to $%
\mathbb{R}\backslash \{0\}$ can be interpreted as an (in--phase) \emph{AC}%
--conductivity measure. In the limit of the perfect insulator ($\mathrm{T}%
\rightarrow 0$) the in--phase conductivity measure $\vartheta _{\mathrm{T}%
}(\nu )\mathrm{d}\nu $ converges in the weak$^{\ast }$--topology to the
trivial measure ($0\cdot \mathrm{d}\nu $). On the other hand, in the limit
of the perfect conductor ($\mathrm{T}\rightarrow \infty $), only the
in--phase AC--conductivity measure of Drude's model, as defined above,
converges in the weak$^{\ast }$--topology to the trivial measure ($0\cdot
\mathrm{d}\nu $) on $\mathbb{R}\backslash \{0\}$. Indeed, as $\mathrm{T}%
\rightarrow \infty $, the in--phase conductivity measure $\vartheta _{%
\mathrm{T}}(\nu )\mathrm{d}\nu $ converges in the weak$^{\ast }$--topology
to the atomic measure $D\delta_0 $ concentrated at $\nu=0$ with $D\in
\mathbb{R}^{+} $ being some strictly positive constant. Here, $\delta_0 (B
):= \mathbf{1}[0 \in B]$ for any Borel set $B \subset \mathbb{R}$.

One aim of this paper is to verify this phenomenology for our many--body
quantum system. To this end, we represent the conductivity measure -- up to
some explicit atomic correction at zero frequency ($\nu=0$) -- as the
spectral measure of some self adjoint operator with respect to (w.r.t.) a
fixed vector. This proof uses analyticity properties of correlation
functions of KMS states. It involves the so--called Duhamel two--point
function as explained in \cite[Section A]{OhmII} and requires the
construction of a Hilbert space of (here called) ``current \emph{Duhamel}
fluctuations''. Using these objects we derive various mathematical
properties of the conductivity $\mathbf{\Sigma }$ of the fermion system. In
particular, $\mathbf{\Sigma }$ is shown to be a time--correlation function
of some unitary evolution. This yield the existence of the conductivity
measure $\mu _{\mathbf{\Sigma }}$ as a spectral measure (up to an explicit
atomic correction).

Another important outcome of this approach is the finiteness of $\mu _{%
\mathbf{\Sigma }}$, i.e., $\mu _{\mathbf{\Sigma }}(\mathbb{R})$ $<\infty $.
Moreover, the conductivity measure is not anymore restricted to $\mathbb{R}%
\backslash \{0\}$, in contrast with \cite{OhmIII}.

Similar to Drude's model, we also show that the AC--conductivity measure $%
\mu _{\mathbf{\Sigma }}|_{\mathbb{R}\backslash \{0\}}$ converges in the weak$%
^{\ast }$--topology to the trivial measure in the case of perfect
conductors, i.e., the absence of disorder, as well as in the case of perfect
insulators, i.e., in the case of strong disorder. Note that the fact that
the AC--conducti%
%TCIMACRO{\TeXButton{\-}{\-}}%
%BeginExpansion
\-%
%EndExpansion
vity measure becomes zero does not imply, in general, that there is no
current in presence of electric fields. It only implies that the so--called
in--phase current, which is the component of the total current producing
heat, also called active current, is zero. Furthermore, the AC--conducti%
%TCIMACRO{\TeXButton{\-}{\-}}%
%BeginExpansion
\-%
%EndExpansion
vity $\mu _{\mathbf{\Sigma }}|_{\mathbb{R}\backslash \{0\}}$ is in general
non--vanishing: We show in Theorem \ref{main 2 copy(1)} that $\mu _{\mathbf{%
\Sigma }}(\mathbb{R}\backslash \{0\})>0$, for large temperatures and a
certain regime of small disorder.

More precisely, we show that, for any cyclic process driven by the external
electric field, the heat production vanishes in both limits of perfect
conductors and perfect insulators, but the full conductivity does not vanish
in the case of perfect conductors (cf. Theorem \ref{main 4}). In this last
case, \emph{exactly} like in Drude's model, the conductivity measure $\mu _{%
\mathbf{\Sigma }}$ converges in the weak$^{\ast }$--topology to the atomic
measure $\tilde{D}\delta_0 $ with $\tilde{D}\in \mathbb{R}^{+}$ being the
\emph{explicit} strictly positive constant (\ref{explicit constant}) and $%
\delta_0 (B) := \mathbf{1}[0 \in B]$ for any Borel set $B \subset \mathbb{R}$%
.

%However, in contrast to Drude's model, the full
%conductivity measure may include an additional measure at $\nu =0$.

To conclude, our main assertions are Theorems \ref{toto fluctbis+D} (current
Duhamel fluctuations), \ref{lemma sigma pos type copy(4)-macro}
(mathematical properties of the paramagnetic conductivity), \ref{main 4}
(asymptotic behavior of the conductivity), and \ref{main 2 copy(1)} (strict
positivity of the heat production). This paper is organized as follows:

\begin{itemize}
\item The random fermion system is defined in Section \ref{Section main
results}. The mathematical framework of this study is the one of \cite%
{OhmI,OhmII,OhmIII}.

\item In Section \ref{section Current Fluctuations copy(1)} we define the
Hilbert space of ``current Duhamel fluctuations''.

\item In Section \ref{Sect Conductivity of Fermion Systems} we derive
important mathematical properties of the conductivity of the fermion system.

\item Section \ref{Sect tehnical conduc} gathers technical proofs related to
the asymptotic behavior of the conductivity and the strict positivity of the
heat production. Both studies use explicit computations based on results of
\cite{OhmII,OhmIII}.
\end{itemize}

\begin{notation}[Generic constants]
\label{remark constant}\mbox{
}\newline
To simplify notation, we denote by $D$ any generic positive and finite
constant. These constants do not need to be the same from one statement to
another.
\end{notation}

\section{Setup of the Problem\label{Section main results}}

Let $d\in \mathbb{N}$, $\mathfrak{L}:=\mathbb{Z}^{d}$ and $(\Omega ,%
\mathfrak{A}_{\Omega },\mathfrak{a}_{\Omega })$ be the probability space
defined as follows: Set $\Omega :=[-1,1]^{\mathfrak{L}}$ and let $\Omega
_{x} $, $x\in \mathfrak{L}$, be an arbitrary element of the Borel $\sigma $%
--algebra of the interval $[-1,1]$ w.r.t. the usual metric topology. Then, $%
\mathfrak{A}_{\Omega }$ is the $\sigma $--algebra generated by cylinder sets
$\prod\nolimits_{x\in \mathfrak{L}}\Omega _{x}$, where $\Omega _{x}=[-1,1]$
for all but finitely many $x\in \mathfrak{L}$. The measure $\mathfrak{a}%
_{\Omega }$ is the product measure
\begin{equation}
\mathfrak{a}_{\Omega }%
%TCIMACRO{\TeXButton{\Big(}{\Big(}}%
%BeginExpansion
\Big(%
%EndExpansion
\underset{x\in \mathfrak{L}}{\prod }\Omega _{x}%
%TCIMACRO{\TeXButton{\Big)}{\Big)}}%
%BeginExpansion
\Big)%
%EndExpansion
:=\underset{x\in \mathfrak{L}}{\prod }\mathfrak{a}_{\mathbf{0}}(\Omega
_{x})\ ,  \label{probability measure}
\end{equation}%
where $\mathfrak{a}_{\mathbf{0}}$ is any fixed probability measure on the
interval $[-1,1]$. We denote by $\mathbb{E}[\ \cdot \ ]$ the expectation
value associated with $\mathfrak{a}_{\Omega }$.

For simplicity and without loss of generality (w.l.o.g.), we assume that the
expectation of the random variable at any single site is zero:
\begin{equation}
\mathbb{E}\left[ \omega (0)\right] =\int_{\Omega }\omega (0)\mathrm{d}%
\mathfrak{a}_{\mathbf{0}}(\omega )=0\ .  \label{V expectation}
\end{equation}%
We can easily remove this condition by replacing $\omega $ by $\omega -%
\mathbb{E}[\omega (0)]$ and adding $\mathbb{E}[\omega (0)]$ to the discrete
Laplacian defined below.

Note that the i.i.d. property of the potential is not essential for our
results. We could take any ergodic ensemble instead. However, this
assumption and (\ref{V expectation}) extremely simplify the proof of the
asymptotic behavior of the conductivity (Theorem \ref{main 4}) and of the
strict positivity of the heat production (Theorem \ref{main 2 copy(1)}).

For any realization $\omega \in \Omega $, $V_{\omega }\in \mathcal{B}(\ell
^{2}(\mathfrak{L}))$ is the self--adjoint multiplication operator with the
function $\omega :\mathfrak{L}\rightarrow \lbrack -1,1]$. Then we consider
the \emph{Anderson tight--binding model} $(\Delta _{\mathrm{d}}+\lambda
V_{\omega })$ acting on the Hilbert space $\ell ^{2}(\mathfrak{L})$, where $%
\Delta _{\mathrm{d}}\in \mathcal{B}(\ell ^{2}(\mathfrak{L}))$ is (up to a
minus sign) the usual $d$--dimensional discrete Laplacian given by%
\begin{equation}
\lbrack \Delta _{\mathrm{d}}(\psi )](x):=2d\psi (x)-\sum\limits_{z\in
\mathfrak{L},\text{ }|z|=1}\psi (x+z)\ ,\text{\qquad }x\in \mathfrak{L},\
\psi \in \ell ^{2}(\mathfrak{L})\ .  \label{discrete laplacian}
\end{equation}%
To define the one--particle dynamics like in \cite{Annale}, we use the
unitary group $\{\mathrm{U}_{t}^{(\omega ,\lambda )}\}_{t\in \mathbb{R}}$
generated by the random Hamiltonian $(\Delta _{\mathrm{d}}+\lambda V_{\omega
})$ for $\omega \in \Omega $ and $\lambda \in \mathbb{R}_{0}^{+}$:%
\begin{equation}
\mathrm{U}_{t}^{(\omega ,\lambda )}:=\exp (-it(\Delta _{\mathrm{d}}+\lambda
V_{\omega }))\in \mathcal{B}(\ell ^{2}(\mathfrak{L}))\ ,\text{\qquad }t\in
\mathbb{R}\ .  \label{rescaled}
\end{equation}

Denote by $\mathcal{U}$ the CAR $C^{\ast }$--algebra associated to the
infinite system. Annihilation and creation operators of (spinless) fermions
with wave functions $\psi \in \ell ^{2}(\mathfrak{L})$ are defined by
\begin{equation*}
a(\psi ):=\sum\limits_{x\in \mathfrak{L}}\overline{\psi (x)}a_{x}\in
\mathcal{U}\ ,\quad a^{\ast }(\psi ):=\sum\limits_{x\in \mathfrak{L}}\psi
(x)a_{x}^{\ast }\in \mathcal{U}\ .
\end{equation*}%
Here, $\{a_{x},a_{x}^{\ast }\}_{x\in \mathfrak{L}}\subset \mathcal{U}$ and
the identity $\mathbf{1}\in \mathcal{U}$ are generators of $\mathcal{U}$ and
satisfy the canonical anti--commutation relations. For all $\omega \in
\Omega $ and $\lambda \in \mathbb{R}_{0}^{+}$, the condition%
\begin{equation}
\tau _{t}^{(\omega ,\lambda )}(a(\psi ))=a((\mathrm{U}_{t}^{(\omega ,\lambda
)})^{\ast }(\psi ))\ ,\text{\qquad }t\in \mathbb{R},\ \psi \in \ell ^{2}(%
\mathfrak{L})\ ,  \label{rescaledbis}
\end{equation}%
uniquely defines a family $\tau ^{(\omega ,\lambda )}:=\{\tau _{t}^{(\omega
,\lambda )}\}_{t\in {\mathbb{R}}}$ of (Bogoliubov) automorphisms of $%
\mathcal{U}$, see \cite[Theorem 5.2.5]{BratteliRobinson}. The one--parameter
group $\tau ^{(\omega ,\lambda )}$ is strongly continuous and defines (free)
dynamics on the $C^{\ast }$--algebra $\mathcal{U}$. For any realization $%
\omega \in \Omega $ and strength $\lambda \in \mathbb{R}_{0}^{+}$ of
disorder, the thermal equilibrium state of the system at inverse temperature
$\beta \in \mathbb{R}^{+}$ (i.e., $\beta >0$) is by definition the unique $%
(\tau ^{(\omega ,\lambda )},\beta )$--KMS state $\varrho ^{(\beta ,\omega
,\lambda )}$, see \cite[Example 5.3.2.]{BratteliRobinson} or \cite[Theorem
5.9]{AttalJoyePillet2006a}. It is a gauge--invariant quasi--free state which
is uniquely characterized by its symbol
\begin{equation}
\mathbf{d}_{\mathrm{fermi}}^{(\beta ,\omega ,\lambda )}:=\frac{1}{1+\mathrm{e%
}^{\beta \left( \Delta _{\mathrm{d}}+\lambda V_{\omega }\right) }}\in
\mathcal{B}(\ell ^{2}(\mathfrak{L}))  \label{Fermi statistic}
\end{equation}%
for any $\beta \in \mathbb{R}^{+}$, $\omega \in \Omega $ and $\lambda \in
\mathbb{R}_{0}^{+}$.

\section{Hilbert Space of Current Duhamel Fluctuations\label{section Current
Fluctuations copy(1)}}

We study in \cite[Theorem 4.1]{OhmIII} the rate at which resistance in the
fermion system converts electric energy into heat energy. This thermal
effect results from short range bond \emph{current fluctuations}. Note that
the relevance of the so--called algebra of normal fluctuations for transport
phenomena was observed long before \cite{OhmIII}. It is related to quantum
central limit theorems.\ See, e.g., \cite{GVV1,GVV2,GVV3,GVV4,GVV5,GVV6}.

Short range bond currents are the elements of the linear subspace
\begin{equation}
\mathcal{I}:=\mathrm{lin}\left\{ \func{Im}(a^{\ast }\left( \psi _{1}\right)
a\left( \psi _{2}\right) ):\psi _{1},\psi _{2}\in \ell ^{1}(\mathfrak{L}%
)\subset \ell ^{2}(\mathfrak{L})\right\} \subset \mathcal{U}\ .
\label{current space fluct}
\end{equation}%
As usual, $\mathrm{lin}\{\mathcal{M}\}$ denotes the linear hull of the
subset $\mathcal{M}$ of a vector space. For all $\omega \in \Omega $ and $%
\lambda \in \mathbb{R}_{0}^{+}$, the one--parameter (Bogoliubov) group $\tau
^{(\omega ,\lambda )}=\{\tau _{t}^{(\omega ,\lambda )}\}_{t\in {\mathbb{R}}}$
preserves the space $\mathcal{I}$. Indeed, the unitary group $\{\mathrm{U}%
_{t}^{(\omega ,\lambda )}\}_{t\in {\mathbb{R}}}$ (see (\ref{rescaled}) and (%
\ref{rescaledbis})) defines a strongly continuous group on $(\ell ^{1}(%
\mathfrak{L})\subset \ell ^{2}(\mathfrak{L}),\Vert \cdot \Vert _{1})$.

For any $l\in \mathbb{R}^{+}$ we define the box
\begin{equation}
\Lambda _{l}:=\{(x_{1},\ldots ,x_{d})\in \mathfrak{L}\,:\,|x_{1}|,\ldots
,|x_{d}|\leq l\}\ .  \label{eq:def lambda n}
\end{equation}%
The \emph{fluctuation observable }of the current $I\in \mathcal{I}$ is
defined by
\begin{equation}
\mathbb{F}^{(l)}\left( I\right) =\frac{1}{\left\vert \Lambda _{l}\right\vert
^{1/2}}\underset{x\in \Lambda _{l}}{\sum }\left\{ \chi _{x}\left( I\right)
-\varrho ^{(\beta ,\omega ,\lambda )}\left( \chi _{x}\left( I\right) \right)
\mathbf{1}\right\} \ ,\qquad I\in \mathcal{I}\ ,  \label{Fluctuation2bis}
\end{equation}%
where $\chi _{x}$, $x\in \mathfrak{L}$, are (space) translations, i.e., the $%
\ast $--automorphisms of $\mathcal{U}$ uniquely defined by
\begin{equation*}
\chi _{x}(a_{y})=a_{y+x}\ ,\quad y\in \mathbb{Z}^{d}\ .
\end{equation*}

We showed in \cite[Eq. (40)]{OhmIII} that the paramagnetic conductivity,
which is responsible for heat production, can be written in terms of
Green--Kubo relations involving time--correlations of \emph{bosonic} fields
coming from current fluctuations in the system. In \cite[Section 3.3]{OhmIII}
we introduced the Hilbert space of current fluctuations from $\mathbb{F}%
^{(l)}$ and the sesquilinear form on $\mathcal{U}$ naturally defined by the
state $\varrho ^{(\beta ,\omega ,\lambda )}$. This is related to the usual
construction of a GNS representation of the $(\tau ^{(\omega ,\lambda
)},\beta )$--KMS state $\varrho ^{(\beta ,\omega ,\lambda )}$.

As showed in \cite[Section A]{OhmII}, another natural GNS representation of $%
\varrho ^{(\beta ,\omega ,\lambda )}$ can be constructed from the Duhamel
two--point function defined by%
\begin{equation}
(B_{1},B_{2})_{\sim }^{(\omega )}\equiv (B_{1},B_{2})_{\sim }^{(\beta
,\omega ,\lambda )}:=\int\nolimits_{0}^{\beta }\varrho ^{(\beta ,\omega
,\lambda )}\left( B_{1}^{\ast }\tau _{i\alpha }^{(\omega ,\lambda
)}(B_{2})\right) \mathrm{d}\alpha  \label{def bogo jetmanbis}
\end{equation}%
for any $B_{1},B_{2}\in \mathcal{U}$. This \emph{positive definite}
sesquilinear form has appeared in different contexts like in linear response
theory and we recommend \cite[Section A]{OhmII} for more details. We name
this GNS representation the \emph{Duhamel GNS representation} of the $(\tau
^{(\omega ,\lambda )},\beta )$--KMS state $\varrho ^{(\beta ,\omega ,\lambda
)}$, see \cite[Definition A.6]{OhmII}. It turns out that a Hilbert space of
current fluctuations constructed from the scalar product of Duhamel GNS
representation is easier to handle and in some sense more natural.

Indeed, define the bond current observable
\begin{equation*}
I_{\mathbf{x}}:=-2\func{Im}(a_{x^{(2)}}^{\ast}a_{x^{(1)}}) \in \mathcal{I}\ ,
\end{equation*}%
for any pair $\mathbf{x}:=(x^{(1)},x^{(2)})\in \mathfrak{L}^{2}$, where $%
\left\{ \mathfrak{e}_{x}\right\} _{x\in \mathfrak{L}}$ is the canonical
orthonormal basis $\mathfrak{e}_{x}(y)\equiv \delta _{x,y}$ of $\ell ^{2}(%
\mathfrak{L})$. Then we introduce a (random) \emph{positive definite}
sesquilinear form on $\mathcal{I}$ by
\begin{equation}
(I,I^{\prime })_{\mathcal{I},l}^{(\omega )}\equiv (I,I^{\prime })_{\mathcal{I%
},l}^{(\beta ,\omega ,\lambda )}:=(\mathbb{F}^{(l)}\left( I\right) ,\mathbb{F%
}^{(l)}\left( I^{\prime }\right) )_{\sim }^{(\omega )}\ ,\qquad I,I^{\prime
}\in \mathcal{I}\ ,  \label{Fluctuation2bisbis+D}
\end{equation}%
for any $l,\beta \in \mathbb{R}^{+}$, $\omega \in \Omega $ and $\lambda \in
\mathbb{R}_{0}^{+}$.

Using \cite[Eqs.\ (24), (103)]{OhmII}, the space--averaged paramagnetic
transport coefficient
\begin{equation*}
t\mapsto \Xi _{\mathrm{p},l}^{(\omega )}\left( t\right) \equiv \Xi _{\mathrm{%
p},l}^{(\beta ,\omega ,\lambda )}\left( t\right) \in \mathcal{B}(\mathbb{R}%
^{d})
\end{equation*}%
satisfies, w.r.t. the canonical orthonormal basis $\{e_{k}\}_{k=1}^{d}$ of $%
\mathbb{R}^{d}$, the equality%
\begin{equation}
\left\{ \Xi _{\mathrm{p},l}^{(\omega )}\left( t\right) \right\} _{k,q}\equiv
\left\{ \Xi _{\mathrm{p},l}^{(\beta ,\omega ,\lambda )}\left( t\right)
\right\} _{k,q}=\Big(I_{0,e_{k}},\tau _{t}^{(\omega ,\lambda )}(I_{0,e_{q}})%
\Big)_{\mathcal{I},l}^{(\omega )}-\Big(I_{0,e_{k}},I_{0,e_{q}}\Big)_{%
\mathcal{I},l}^{(\omega )}  \label{average conductivitynew}
\end{equation}%
for any $l,\beta \in \mathbb{R}^{+}$, $\omega \in \Omega $, $\lambda \in
\mathbb{R}_{0}^{+}$, $k,q\in \{1,\ldots ,d\}$ and $t\in \mathbb{R}$. For the
basic definition of the space--averaged paramagnetic transport coefficient $%
\Xi _{\mathrm{p},l}^{(\omega )}$ we refer to \cite[Eq. (33)]{OhmII}. One may
take in this paper Equation (\ref{average conductivitynew}) as its
definition. The above expression was indeed crucial to study the
mathematical properties of $\Xi _{\mathrm{p},l}^{(\omega )}$, see \cite[%
Theorem 3.1, Corollary 3.2]{OhmII}.

Furthermore, the deterministic paramagnetic transport coefficient
\begin{equation*}
t\mapsto \mathbf{\Xi }_{\mathrm{p}}\left( t\right) \equiv \mathbf{\Xi }_{%
\mathrm{p}}^{(\beta ,\lambda )}\left( t\right) \in \mathcal{B}(\mathbb{R}%
^{d})
\end{equation*}
is defined by%
\begin{equation}
\mathbf{\Xi }_{\mathrm{p}}\left( t\right) :=\underset{l\rightarrow \infty }{%
\lim }\mathbb{E}\left[ \Xi _{\mathrm{p},l}^{(\omega )}\left( t\right) \right]
\label{paramagnetic transport coefficient macro}
\end{equation}%
for any $\beta \in \mathbb{R}^{+}$, $\lambda \in \mathbb{R}_{0}^{+}$ and $%
t\in \mathbb{R}$, see \cite[Eq. (32)]{OhmIII}. We define the limiting
positive sesquilinear form in $\mathcal{I}$ by%
\begin{equation}
(I,I^{\prime })_{\mathcal{I}}\equiv (I,I^{\prime })_{\mathcal{I}}^{(\beta
,\lambda )}:=\underset{l\rightarrow \infty }{\lim }\ \mathbb{E}\left[
(I,I^{\prime })_{\mathcal{I},l}^{(\omega )}\right] \ ,\qquad I,I^{\prime
}\in \mathcal{I}\ ,  \label{paramagnetic transport coefficient macro+1}
\end{equation}%
via the following theorem:

\begin{satz}[Sesquilinear form from current Duhamel fluctuations]
\label{toto fluctbis+D}\mbox{
}\newline
Let $\beta \in \mathbb{R}^{+}$ and $\lambda \in \mathbb{R}_{0}^{+}$. Then,
one has:\newline
\emph{(i)} The positive sesquilinear form $(\cdot ,\cdot )_{\mathcal{I}}$ is
well--defined, i.e., the limit exists:
\begin{equation*}
\underset{l\rightarrow \infty }{\lim }\ \mathbb{E}\left[ (I,I^{\prime })_{%
\mathcal{I},l}^{(\omega )}\right] \in \mathbb{R}\ ,\qquad I,I^{\prime }\in
\mathcal{I}\ .
\end{equation*}%
\emph{(ii)} There is a measurable subset $\tilde{\Omega}\equiv \tilde{\Omega}%
^{(\beta ,\lambda )}\subset \Omega $ of full measure such that, for any $%
\omega \in \tilde{\Omega}$,
\begin{equation*}
(I,I^{\prime })_{\mathcal{I}}=\underset{l\rightarrow \infty }{\lim }%
(I,I^{\prime })_{\mathcal{I},l}^{(\omega )}\ ,\qquad I,I^{\prime }\in
\mathcal{I}\ .
\end{equation*}
\end{satz}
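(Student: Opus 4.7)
The plan is to expand $(I,I^{\prime})_{\mathcal{I},l}^{(\omega)}$ as a double Cesàro sum over $\Lambda_{l}$, to use translation covariance of the random Hamiltonian to recast it as a sum (indexed by a single displacement $z\in\mathfrak{L}$) of ergodic averages over the disorder, and to conclude by dominated convergence from a \emph{deterministic} summable decay estimate on the Duhamel bracket. By sesquilinearity of $(\cdot,\cdot)_{\mathcal{I},l}^{(\omega)}$ and of the prospective limit, it suffices to treat the countable family of bond currents $I=I_{\mathbf{x}}$, $I^{\prime}=I_{\mathbf{y}}$ with $\mathbf{x},\mathbf{y}\in\mathfrak{L}^{2}$; general elements of $\mathcal{I}$ then follow by finite linear combination.

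Write $(\sigma_{z}\omega)(\cdot):=\omega(\cdot-z)$ and $T_{z}$ for the corresponding unitary on $\ell^{2}(\mathfrak{L})$. Since $T_{z}(\Delta_{\mathrm{d}}+\lambda V_{\omega})T_{z}^{-1}=\Delta_{\mathrm{d}}+\lambda V_{\sigma_{z}\omega}$, one has the covariance relations $\chi_{z}\circ\tau_{t}^{(\omega,\lambda)}\circ\chi_{z}^{-1}=\tau_{t}^{(\sigma_{z}\omega,\lambda)}$ and $\varrho^{(\beta,\omega,\lambda)}\circ\chi_{z}^{-1}=\varrho^{(\beta,\sigma_{z}\omega,\lambda)}$. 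Setting
\[
\tilde{h}_{\omega}(z)\,:=\,\Bigl(I-\varrho^{(\beta,\omega,\lambda)}(I)\mathbf{1},\;\chi_{z}(I^{\prime})-\varrho^{(\beta,\omega,\lambda)}(\chi_{z}(I^{\prime}))\mathbf{1}\Bigr)_{\sim}^{(\omega)},
\]
the covariance relations transform the double sum in (\ref{Fluctuation2bisbis+D}) into
\[
(I,I^{\prime})_{\mathcal{I},l}^{(\omega)}\,=\,\sum_{z\in\mathfrak{L}}\frac{1}{|\Lambda_{l}|}\sum_{x\in\Lambda_{l}\cap(\Lambda_{l}-z)}\tilde{h}_{\sigma_{-x}\omega}(z).
\]

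The key technical ingredient is an $\omega$--uniform summable estimate $|\tilde{h}_{\omega}(z)|\leq c(z)$ with $\sum_{z\in\mathfrak{L}}c(z)<\infty$. Because $\varrho^{(\beta,\omega,\lambda)}$ is gauge--invariant quasi--free and $I,I^{\prime}$ are quadratic in the fermion operators, $\tilde{h}_{\omega}(z)$ expands by Wick contraction---as spelt out in \cite[Section~A]{OhmII}---into a finite sum of products of matrix elements of $\mathbf{d}_{\mathrm{fermi}}^{(\beta,\omega,\lambda)}$ and of its imaginary--time evolutes. Since $\Delta_{\mathrm{d}}+\lambda V_{\omega}$ has spectrum in a compact subset of $\mathbb{R}$ independent of $\omega$, a Combes--Thomas estimate on the resolvent combined with a Dunford contour representation of $\mathbf{d}_{\mathrm{fermi}}^{(\beta,\omega,\lambda)}$ yields an $\omega$--uniform bound $|\langle\mathfrak{e}_{x},\mathbf{d}_{\mathrm{fermi}}^{(\beta,\omega,\lambda)}\mathfrak{e}_{y}\rangle|\leq De^{-\eta|x-y|}$ for some $\eta>0$. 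Hence $\tilde{h}_{\omega}(z)$ decays exponentially in $|z|$ uniformly in $\omega$, which provides $c(z)$.

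Granted this bound, both assertions follow routinely. For (i), shift--invariance of $\mathfrak{a}_{\Omega}$ and the change of variables $z=y-x$ give
\[
\mathbb{E}\bigl[(I,I^{\prime})_{\mathcal{I},l}^{(\omega)}\bigr]\,=\,\sum_{z\in\mathfrak{L}}\frac{\bigl|\{(x,y)\in\Lambda_{l}^{2}:y-x=z\}\bigr|}{|\Lambda_{l}|}\,\mathbb{E}[\tilde{h}_{\omega}(z)];
\]
the weight lies in $[0,1]$ and tends to $1$ for each fixed $z$, so dominated convergence (dominated by $c$) yields $\lim_{l}\mathbb{E}[(I,I^{\prime})_{\mathcal{I},l}^{(\omega)}]=\sum_{z\in\mathfrak{L}}\mathbb{E}[\tilde{h}_{\omega}(z)]$, and the limit in (\ref{paramagnetic transport coefficient macro+1}) defining $(I,I^{\prime})_{\mathcal{I}}$ exists. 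For (ii), fix $z$ and apply the multidimensional Birkhoff ergodic theorem to the integrable $\omega\mapsto\tilde{h}_{\omega}(z)$ under the $\mathbb{Z}^{d}$--action $\{\sigma_{x}\}_{x\in\mathfrak{L}}$, which is ergodic because $\mathfrak{a}_{\Omega}$ is a product measure: on a full--measure set $\tilde{\Omega}_{z,I,I^{\prime}}$, the inner Cesàro average converges to $\mathbb{E}[\tilde{h}_{\omega}(z)]$. Intersecting over the countably many triples $(z,\mathbf{x},\mathbf{y})\in\mathfrak{L}\times(\mathfrak{L}^{2})^{2}$ produces a single full--measure $\tilde{\Omega}$; on $\tilde{\Omega}$, the dominant $c(z)$ permits the exchange of $\lim_{l}$ with the sum over $z$, yielding (ii). The main obstacle is the uniform summable bound $c(z)$: combining the quasi--free Wick expansion with the disorder--independent Combes--Thomas decay on $\mathbf{d}_{\mathrm{fermi}}^{(\beta,\omega,\lambda)}$ is the only non--routine step, after which the ergodic and approximation arguments are standard.
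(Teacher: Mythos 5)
Your argument is correct in substance and follows the same overall mechanism as the paper's (which is itself a pointer to \cite[Theorem 5.26]{OhmIII}): translation covariance to reduce the double Ces\`{a}ro sum to a sum over displacements, a uniform summable decay bound, and an ergodic theorem. The differences are in the two technical ingredients. For the uniform bound, the paper does not go through Wick expansion and Combes--Thomas directly; it first invokes the auto--correlation inequality $(I,I^{\prime })_{\mathcal{I},l}^{(\omega )}\leq \langle I,I^{\prime }\rangle _{\mathcal{I},l}^{(\omega )}$ (from \cite[Theorem A.4]{OhmII}), which reduces the Duhamel bracket to the ordinary two--point function, and then cites \cite[Lemma 5.10]{OhmII} for the bound (\ref{auto-cor+1}); your route is self--contained but re--proves decay estimates that the inequality lets one borrow. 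For the ergodic step, the paper uses the Akcoglu--Krengel ergodic theorem as in \cite[Sections 5.2, 5.4]{OhmIII}, whereas you apply the multiparameter Birkhoff/Wiener theorem displacement by displacement and then interchange $\lim_{l}$ with $\sum_{z}$ via the envelope $c(z)$; both are legitimate, and your version is arguably more transparent. Your identification of the summable envelope as the crux, and the Dunford--contour--plus--Combes--Thomas derivation of $\omega $--uniform exponential decay of $F_{\alpha }^{\beta }(\Delta _{\mathrm{d}}+\lambda V_{\omega })$ (possible because the spectrum lies in a fixed compact set and $F_{\alpha }^{\beta }$ is analytic and uniformly bounded on a contour inside the strip $|\operatorname{Im}\varkappa |<\pi /\beta $), are sound.

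One step is genuinely incomplete: the reduction to ``the countable family of bond currents, general elements of $\mathcal{I}$ then follow by finite linear combination.'' Elements of $\mathcal{I}$ are finite linear combinations of $\operatorname{Im}(a^{\ast }(\psi _{1})a(\psi _{2}))$ with $\psi _{1},\psi _{2}\in \ell ^{1}(\mathfrak{L})$, and such a generator is an \emph{infinite} sum of bond currents unless the $\psi _{i}$ are finitely supported; moreover the full--measure set in (ii) must work simultaneously for the uncountable family of all $I,I^{\prime }\in \mathcal{I}$. To close this you need the quadri--linear continuity estimate (\ref{auto-cor+1}), uniform in $l$ and $\omega $, together with a density/$3\varepsilon $ argument: approximate $\psi _{i}$ in $\ell ^{1}$ by finitely supported functions with rational coefficients, obtain a single countable intersection of full--measure sets, and transfer the convergence. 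Your Combes--Thomas bound does yield exactly such a uniform continuity estimate, so the gap is fixable with what you already have, but as written the extension to all of $\mathcal{I}$ is asserted rather than proved.
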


\begin{proof}
The proof is very similar to the one of \cite[Theorem 5.26]{OhmIII}, which
concerns the (well--defined) limit
\begin{equation}
\langle I,I^{\prime }\rangle _{\mathcal{I}}\equiv \langle I,I^{\prime
}\rangle _{\mathcal{I}}^{(\beta ,\lambda )}:=\underset{l\rightarrow \infty }{%
\lim }\ \mathbb{E}\left[ \langle I,I^{\prime }\rangle _{\mathcal{I}%
,l}^{(\omega )}\right] \in \mathbb{\bar{R}}\ ,\qquad I,I^{\prime }\in
\mathcal{I}\ .  \label{eq sup necessaire1}
\end{equation}%
Here, for any $l,\beta \in \mathbb{R}^{+}$, $\omega \in \Omega $ and $%
\lambda \in \mathbb{R}_{0}^{+}$,
\begin{equation*}
\langle I,I^{\prime }\rangle _{\mathcal{I},l}^{(\omega )}\equiv \langle
I,I^{\prime }\rangle _{\mathcal{I},l}^{(\beta ,\omega ,\lambda )}:=\varrho
^{(\beta ,\omega ,\lambda )}\left( \mathbb{F}^{(l)}\left( I\right) ^{\ast }%
\mathbb{F}^{(l)}\left( I^{\prime }\right) \right) \ ,\qquad I,I^{\prime }\in
\mathcal{I}\ .
\end{equation*}%
Here, $\mathbb{F}^{(l)}$ is the fluctuation observable defined by (\ref%
{Fluctuation2bis}). In particular, one has the inequality
\begin{equation}
(I,I^{\prime })_{\mathcal{I},l}^{(\omega )}\leq \langle I,I^{\prime }\rangle
_{\mathcal{I},l}^{(\omega )}\ ,\qquad I,I^{\prime }\in \mathcal{I}\ ,
\label{auto-cor}
\end{equation}%
which results from \cite[Theorem A.4]{OhmII} for $\mathcal{X}=\mathcal{U}$
and $\varrho =\varrho ^{(\beta ,\omega ,\lambda )}$. By \cite[Lemma 5.10]%
{OhmII}, this implies the existence of a constant $D\in \mathbb{R}^{+}$ such
that, for any $l,\beta \in \mathbb{R}^{+}$, $\omega \in \Omega $, $\lambda
\in \mathbb{R}_{0}^{+}$ and all $\psi _{1},\psi _{2},\psi _{1}^{\prime
},\psi _{2}^{\prime }\in \ell ^{1}(\mathfrak{L})$,
\begin{equation}
\left\vert
%TCIMACRO{\TeXButton{\Big(}{\Big(}}%
%BeginExpansion
\Big(%
%EndExpansion
\func{Im}(a^{\ast }\left( \psi _{1}\right) a\left( \psi _{2}\right) ),\func{%
Im}(a^{\ast }\left( \psi _{1}^{\prime }\right) a\left( \psi _{2}^{\prime
}\right) )%
%TCIMACRO{\TeXButton{\Big)}{\Big)}}%
%BeginExpansion
\Big)%
%EndExpansion
_{\mathcal{I},l}^{(\omega )}\right\vert \leq D\left\Vert \psi
_{1}\right\Vert _{1}\left\Vert \psi _{2}\right\Vert _{1}\left\Vert \psi
_{1}^{\prime }\right\Vert _{1}\left\Vert \psi _{2}^{\prime }\right\Vert
_{1}\ .  \label{auto-cor+1}
\end{equation}%
Then, an analogue of \cite[Lemma 5.25]{OhmIII} for $(\cdot ,\cdot )_{%
\mathcal{I}}$ is proven by using the Akcoglu--Krengel ergodic theorem, see
\cite[Sections 5.2, 5.4]{OhmIII}. We omit the details since one uses very
similar arguments to those proving \cite[Theorem 5.17]{OhmIII} and the proof
is even simpler.
\end{proof}

\begin{bemerkung}[Auto--correlation upper bounds]
\label{toto fluctbis+D copy(1)}\mbox{
}\newline
The positive sesquilinear form $\langle \cdot ,\cdot \rangle _{\mathcal{I}}$
defines the Hilbert space $\mathcal{H}_{\mathrm{fl}}$ of current
fluctuations as explained in \cite[Section 3.3]{OhmIII}. By (\ref{auto-cor}%
), $(\cdot ,\cdot )_{\mathcal{I}}$ and $\langle \cdot ,\cdot \rangle _{%
\mathcal{I}}$ are related to each other via the auto--correlation upper
bounds:
\begin{equation*}
(I,I^{\prime })_{\mathcal{I}}\leq \langle I,I^{\prime }\rangle _{\mathcal{I}%
}\ ,\qquad I,I^{\prime }\in \mathcal{I}\ .
\end{equation*}
\end{bemerkung}

Hence, we define the kernel
\begin{equation*}
\mathcal{\tilde{I}}_{0}:=\left\{ I\in \mathcal{I}:(I,I)_{\mathcal{I}%
}=0\right\}
\end{equation*}%
of the positive sesquilinear form $(\cdot ,\cdot )_{\mathcal{I}}$. The
quotient $\mathcal{I}/\mathcal{\tilde{I}}_{0}$ is a pre--Hilbert space and
its completion w.r.t. the scalar product
\begin{equation}
([I],[I^{\prime }])_{\mathcal{I}/\mathcal{\tilde{I}}_{0}}:=\ (I,I^{\prime
})_{\mathcal{I}}\ ,\qquad \lbrack I],[I^{\prime }]\in \mathcal{I}/\mathcal{%
\tilde{I}}_{0}\ ,  \label{Fluctuation3+D}
\end{equation}%
is the Hilbert space%
\begin{equation}
\left( \mathcal{\tilde{H}}_{\mathrm{fl}},(\cdot ,\cdot )_{\mathcal{\tilde{H}}%
_{\mathrm{fl}}}\right)  \label{Fluctuation4+D}
\end{equation}%
of current \emph{Duhamel} fluctuations. The dynamics defined by $\tau
^{(\omega ,\lambda )}=\{\tau _{t}^{(\omega ,\lambda )}\}_{t\in {\mathbb{R}}}$
on $\mathcal{U}$ induces a unitary time evolution on $\mathcal{\tilde{H}}_{%
\mathrm{fl}}$:

\begin{satz}[Dynamics of current Duhamel fluctuations]
\label{bound incr 1 Lemma copy(1)+D}\mbox{
}\newline
Let $\beta \in \mathbb{R}^{+}$ and $\lambda \in \mathbb{R}_{0}^{+}$. Then,
there is a measurable subset $\tilde{\Omega}\equiv \tilde{\Omega}^{(\beta
,\lambda )}\subset \Omega $ of full measure such that, for any $\omega \in
\tilde{\Omega}$, there is a unique, strongly continuous one--parameter
unitary group $\{\mathrm{\tilde{V}}_{t}^{(\omega ,\lambda )}\}_{t\in {%
\mathbb{R}}}$ on the Hilbert space $\mathcal{\tilde{H}}_{\mathrm{fl}}$
obeying, for any $t\in {\mathbb{R}}$,
\begin{equation*}
\mathrm{\tilde{V}}_{t}^{(\omega ,\lambda )}([I])=[\tau _{t}^{(\omega
,\lambda )}(I)]\ ,\qquad \lbrack I]\in \mathcal{I}/\mathcal{\tilde{I}}_{0}\ .
\end{equation*}
\end{satz}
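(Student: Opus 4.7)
The plan is to define the unitary $\mathrm{\tilde{V}}_t^{(\omega,\lambda)}$ first on the dense pre-Hilbert subspace $\mathcal{I}/\mathcal{\tilde{I}}_{0}$ by the formula $[I]\mapsto[\tau_t^{(\omega,\lambda)}(I)]$, and then extend it by continuity to $\mathcal{\tilde{H}}_{\mathrm{fl}}$. Since $\tau_t^{(\omega,\lambda)}$ already preserves $\mathcal{I}$, the only substantive point at the pre-Hilbert level is the isometry
\[
(\tau_t^{(\omega,\lambda)}I,\tau_t^{(\omega,\lambda)}I')_{\mathcal{I}} = (I,I')_{\mathcal{I}},\qquad I,I'\in\mathcal{I},\ t\in\mathbb{R},
\]
which simultaneously ensures well-definedness on the quotient (the kernel $\mathcal{\tilde{I}}_{0}$ is preserved) and isometricity. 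Once this is secured, the formula extends uniquely to $\mathcal{\tilde{H}}_{\mathrm{fl}}$ by density; the group property follows from $\tau_{t+s}^{(\omega,\lambda)}=\tau_t^{(\omega,\lambda)}\circ\tau_s^{(\omega,\lambda)}$, and unitarity from the observation that $\mathrm{\tilde{V}}_{-t}^{(\omega,\lambda)}$ is a two-sided inverse on the dense subspace, hence on the whole space.

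The isometry is the core step. Because $\tau_t^{(\omega,\lambda)}$ is a $*$-automorphism of $\mathcal{U}$ and $\varrho^{(\beta,\omega,\lambda)}$ is $(\tau^{(\omega,\lambda)},\beta)$-KMS, the Duhamel form is $\tau_t^{(\omega,\lambda)}$-invariant:
\[
(\tau_t^{(\omega,\lambda)}B_1,\tau_t^{(\omega,\lambda)}B_2)_{\sim}^{(\omega)} = (B_1,B_2)_{\sim}^{(\omega)},\qquad B_1,B_2\in\mathcal{U}.
\]
Applied to $B_j=\mathbb{F}^{(l)}(I_j)$, this yields $(\tau_t^{(\omega,\lambda)}\mathbb{F}^{(l)}(I),\tau_t^{(\omega,\lambda)}\mathbb{F}^{(l)}(I'))_{\sim}^{(\omega)} = (I,I')_{\mathcal{I},l}^{(\omega)}$. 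The obstruction is that $\tau_t^{(\omega,\lambda)}\mathbb{F}^{(l)}(I)\ne\mathbb{F}^{(l)}(\tau_t^{(\omega,\lambda)}I)$, since the spatial translations $\chi_x$ do not commute with the disordered dynamics but instead satisfy $\chi_x\tau_t^{(\omega,\lambda)}\chi_x^{-1}=\tau_t^{(\omega_x,\lambda)}$, where $\omega_x$ denotes the translate of the random configuration $\omega$ by $x$. I would control the resulting discrepancy
\[
R_t^{(l)}(I) := \mathbb{F}^{(l)}(\tau_t^{(\omega,\lambda)}I)-\tau_t^{(\omega,\lambda)}\mathbb{F}^{(l)}(I)
\]
by showing $\mathbb{E}[(R_t^{(l)}(I),R_t^{(l)}(I))_{\sim}^{(\omega)}]\to 0$ as $l\to\infty$ and transferring this to an $\omega$-pointwise statement on a full-measure subset $\tilde{\Omega}$. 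The reduction is effected by expanding the defining double sum over $\Lambda_l\times\Lambda_l$, using translation-invariance of $\mathfrak{a}_\Omega$ to align a common shift, and invoking the Akcoglu-Krengel multiparameter ergodic theorem in the same manner as in Theorem \ref{toto fluctbis+D} and \cite[Theorem 5.26, Sections 5.2, 5.4]{OhmIII}. Combined with the Cauchy-Schwarz inequality and the bilinear autocorrelation bound (\ref{auto-cor+1}), this gives $\lim_l(\tau_t^{(\omega,\lambda)}I,\tau_t^{(\omega,\lambda)}I')_{\mathcal{I},l}^{(\omega)} = (I,I')_{\mathcal{I}}$ for $\omega\in\tilde{\Omega}$; restricting to a countable dense family of currents and rational times $t$, and exploiting $\ell^1$-continuity in the defining wave functions, yields a single full-measure set $\tilde{\Omega}$ valid simultaneously for all $I,I'\in\mathcal{I}$ and all $t\in\mathbb{R}$.

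Strong continuity at $t=0$ on $\mathcal{I}/\mathcal{\tilde{I}}_{0}$ reduces, via (\ref{auto-cor+1}) applied to $\tau_t^{(\omega,\lambda)}I - I$ written as a finite combination of terms of the form $\func{Im}(a^{\ast}(\psi_1)a(\psi_2))$, to the $\ell^1$-strong continuity of the unitary group $\{\mathrm{U}_t^{(\omega,\lambda)}\}_{t\in\mathbb{R}}$ -- which holds because $\Delta_{\mathrm{d}}+\lambda V_\omega$ is bounded on $(\ell^1(\mathfrak{L}),\|\cdot\|_1)$. Uniform boundedness $\|\mathrm{\tilde{V}}_t^{(\omega,\lambda)}\|_{\mathrm{op}}\le 1$ then propagates continuity to all of $\mathcal{\tilde{H}}_{\mathrm{fl}}$ via a routine $\varepsilon/3$ argument, and uniqueness is automatic from density of $\mathcal{I}/\mathcal{\tilde{I}}_{0}$. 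I expect the main technical obstacle to be the ergodic control of $R_t^{(l)}$: the non-commutation of spatial translations with the disordered evolution cannot be avoided at fixed $\omega$, and it must be absorbed into the shift-invariance of the product measure $\mathfrak{a}_\Omega$, closely paralleling the ergodic-theoretic manoeuvres that underlie Theorem \ref{toto fluctbis+D}.
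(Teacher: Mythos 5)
Your architecture (define $\mathrm{\tilde{V}}_{t}^{(\omega ,\lambda )}$ on the dense quotient, reduce everything to invariance of the limiting form, get strong continuity from (\ref{auto-cor+1}) and the $\ell ^{1}$--continuity of $t\mapsto \mathrm{U}_{t}^{(\omega ,\lambda )}$) matches what the paper invokes via \cite[Theorem 5.27]{OhmIII} and \cite[Corollary A.8]{OhmII}, and you correctly isolate the one delicate point: $\chi _{x}$ does not commute with $\tau _{t}^{(\omega ,\lambda )}$ at fixed $\omega $. But the mechanism you propose for that point --- proving $\mathbb{E}[(R_{t}^{(l)}(I),R_{t}^{(l)}(I))_{\sim }^{(\omega )}]\rightarrow 0$ and transferring the isometry by Cauchy--Schwarz --- fails, because that quantity does \emph{not} tend to zero when $\lambda >0$ and the disorder is non--degenerate. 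Writing $\theta _{x}\omega :=\omega (\cdot -x)$, one has
\begin{equation*}
R_{t}^{(l)}(I)=\frac{1}{|\Lambda _{l}|^{1/2}}\sum_{x\in \Lambda _{l}}\Big\{\big(\tau _{t}^{(\theta _{x}\omega ,\lambda )}-\tau _{t}^{(\omega ,\lambda )}\big)(\chi _{x}(I))-\varrho ^{(\beta ,\omega ,\lambda )}(\cdots )\mathbf{1}\Big\}\ ,
\end{equation*}
a \emph{normalized fluctuation sum of observables of unit order}: the two evolutions see genuinely different disorder environments near $x$, so each summand is $\mathcal{O}(1)$ and the square of the Duhamel norm of the sum converges to a susceptibility--type quantity, generically strictly positive. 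Concretely, for $I=I_{0,e_{1}}$ a first--order expansion in $t$ gives
\begin{equation*}
R_{t}^{(l)}(I)=\frac{t\lambda }{|\Lambda _{l}|^{1/2}}\sum_{x\in \Lambda _{l}}c_{x}(\omega )\,\widetilde{\chi _{x}(J)}+\mathcal{O}(t^{2})\ ,\qquad c_{x}(\omega )=\big(\omega (e_{1})-\omega (0)\big)-\big(\omega (x+e_{1})-\omega (x)\big)\ ,
\end{equation*}
with $J=-2\func{Re}(a_{e_{1}}^{\ast }a_{0})\in \mathcal{I}$ and $\widetilde{A}:=A-\varrho ^{(\beta ,\omega ,\lambda )}(A)\mathbf{1}$. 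The piece $t\lambda (\omega (e_{1})-\omega (0))\mathbb{F}^{(l)}(J)$ alone has Duhamel norm converging a.s.\ to $t\lambda |\omega (e_{1})-\omega (0)|\,(J,J)_{\mathcal{I}}^{1/2}$, which is not zero; the remaining piece, with site--dependent random coefficients, cannot cancel it.

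The isometry $(\tau _{t}I,\tau _{t}I^{\prime })_{\mathcal{I}}=(I,I^{\prime })_{\mathcal{I}}$ is nevertheless true, but it arises from a cancellation \emph{between} the cross terms and $\Vert R_{t}^{(l)}\Vert _{\sim }^{2}$, not from the smallness of $R_{t}^{(l)}$; your Cauchy--Schwarz reduction throws away exactly the terms that carry this cancellation. (At first order in $t$ the cross terms $(R_{t}^{(l)}(I),\tau _{t}\mathbb{F}^{(l)}(I^{\prime }))_{\sim }$ vanish identically for every $\omega $, because $R_{t}^{(l)}$ is built from observables even under complex conjugation in the position basis while $\mathbb{F}^{(l)}(I^{\prime })$ is odd and $\Delta _{\mathrm{d}}+\lambda V_{\omega }$ is real; at higher orders the cancellation genuinely involves $\Vert R\Vert _{\sim }^{2}$.) The workable route, which is what \cite[Corollary A.8]{OhmII} encodes, is to exploit that $(\tau _{t}\mathbb{F}^{(l)}(I),\tau _{t}\mathbb{F}^{(l)}(I^{\prime }))_{\sim }^{(\omega )}=(\mathbb{F}^{(l)}(I),\mathbb{F}^{(l)}(I^{\prime }))_{\sim }^{(\omega )}$ holds \emph{exactly} at every finite $l$, and then to identify $\lim_{l}\mathbb{E}[(\mathbb{F}^{(l)}(\tau _{t}I),\mathbb{F}^{(l)}(\tau _{t}I^{\prime }))_{\sim }^{(\omega )}]$ with that quantity by applying the covariance/ergodic argument to the full bilinear expression rather than to $R_{t}^{(l)}$ alone. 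The remaining ingredients of your write--up (preservation of the kernel, density, group law, $\ell ^{1}$--strong continuity) are correct and unproblematic.
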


\begin{proof}
The proof is essentially the same as the one of \cite[Theorem 5.27]{OhmIII}.
We omit the details. Note that one uses (\ref{auto-cor})--(\ref{auto-cor+1})
combined with \cite[Corollary A.8]{OhmII}.
\end{proof}

\begin{bemerkung}[Deterministic unitary group]
\mbox{
}\newline
As in \cite[Section 5.5.3]{OhmIII}, by using the Duhamel representation \cite%
[Definition A.6]{OhmII} one can construct a unique, strongly continuous
one--parameter deterministic unitary group $\{\mathrm{\hat{V}}_{t}^{(\lambda
)}\}_{t\in {\mathbb{R}}}$ on a direct integral Hilbert space.
\end{bemerkung}

By using the Hilbert space $\mathcal{\tilde{H}}_{\mathrm{fl}}$ (\ref%
{Fluctuation4+D}) of current Duhamel fluctuations, we infer from Equations (%
\ref{average conductivitynew}) and (\ref{paramagnetic transport coefficient
macro})--(\ref{paramagnetic transport coefficient macro+1}) that
\begin{equation}
\left\{ \mathbf{\Xi }_{\mathrm{p}}\left( t\right) \right\} _{k,q}=%
%TCIMACRO{\TeXButton{\Big(}{\Big(}}%
%BeginExpansion
\Big(%
%EndExpansion
[I_{0,e_{k}}],\mathrm{\tilde{V}}_{t}^{(\omega ,\lambda )}([I_{0,e_{q}}])%
%TCIMACRO{\TeXButton{\Big)}{\Big)}}%
%BeginExpansion
\Big)%
%EndExpansion
_{\mathcal{\tilde{H}}_{\mathrm{fl}}}-%
%TCIMACRO{\TeXButton{\Big(}{\Big(}}%
%BeginExpansion
\Big(%
%EndExpansion
[I_{0,e_{k}}],[I_{0,e_{q}}]%
%TCIMACRO{\TeXButton{\Big)}{\Big)}}%
%BeginExpansion
\Big)%
%EndExpansion
_{\mathcal{\tilde{H}}_{\mathrm{fl}}}  \label{cesaro0}
\end{equation}%
for any $\beta \in \mathbb{R}^{+}$, $\lambda \in \mathbb{R}_{0}^{+}$, $%
k,q\in \{1,\ldots ,d\}$ and all $t\in \mathbb{R}$. Here, $\omega $ belongs
to some measurable subset of full measure defined such that the strongly
continuous one--parameter unitary group $\{\mathrm{\tilde{V}}_{t}^{(\omega
,\lambda )}\}_{t\in {\mathbb{R}}}$ exists, see Theorem \ref{bound incr 1
Lemma copy(1)+D}. Equation (\ref{cesaro0}) is the analogue of (\ref{average
conductivitynew}) for $\Xi _{\mathrm{p},l}^{(\omega )}$. As a consequence,
we can now follow the same strategy as in \cite[Section 5.1.2]{OhmII}. This
is performed in the next section.

\section{Macroscopic Conductivity of Fermion Systems\label{Sect Conductivity
of Fermion Systems}}

As in \cite[Definition 3.2]{OhmIII}, for any $\beta \in \mathbb{R}^{+}$ and $%
\lambda \in \mathbb{R}_{0}^{+}$, the macroscopic conductivity is the map
\begin{equation}
t\mapsto \mathbf{\Sigma }\left( t\right) \equiv \mathbf{\Sigma }^{(\beta
,\lambda )}\left( t\right) :=\left\{
\begin{array}{lll}
0 & , & \qquad t\leq 0\ . \\
\mathbf{\Xi }_{\mathrm{d}}+\mathbf{\Xi }_{\mathrm{p}}\left( t\right) & , &
\qquad t\geq 0\ .%
\end{array}%
\right.  \label{conductivity}
\end{equation}%
Here, $\mathbf{\Xi }_{\mathrm{p}}$ is the deterministic paramagnetic
transport coefficient defined by (\ref{paramagnetic transport coefficient
macro}), whereas the time--independent operator $\mathbf{\Xi }_{\mathrm{d}%
}\in \mathcal{B}(\mathbb{R}^{d})$ is the diamagnetic transport coefficient,
which equals%
\begin{equation}
\left\{ \mathbf{\Xi }_{\mathrm{d}}\right\} _{k,q}=2\delta _{k,q}\func{Re}%
\left\{ \mathbb{E}\left[ \langle \mathfrak{e}_{e_{k}},\mathbf{d}_{\mathrm{%
fermi}}^{(\beta ,\omega ,\lambda )}\mathfrak{e}_{0}\rangle \right] \right\}
\label{explicit form conductivity}
\end{equation}%
for any $\beta \in \mathbb{R}^{+}$, $\lambda \in \mathbb{R}_{0}^{+}$ and $%
k,q\in \{1,\ldots ,d\}$, see \cite[Eq. (37)]{OhmII}. $\langle \cdot ,\cdot
\rangle $ is the scalar product in $\ell ^{2}(\mathfrak{L})$ and recall that
the positive bounded operator $\mathbf{d}_{\mathrm{fermi}}^{(\beta ,\omega
,\lambda )}$ is defined by (\ref{Fermi statistic}).

Since we assume the random potential to be i.i.d. the paramagnetic and
diamagnetic transport coefficients turn out to be both a multiple of the
identity, see \cite[Eqs. (68)--69)]{OhmIII}. In particular, there is a
function
\begin{equation*}
\mathbf{\sigma }_{\mathrm{p}}\equiv \mathbf{\sigma }_{\mathrm{p}}^{(\beta
,\lambda )}\in C(\mathbb{R};\mathbb{R}_{0}^{-})
\end{equation*}%
and a constant $\mathbf{\sigma }_{\mathrm{d}}\equiv \mathbf{\sigma }_{%
\mathrm{d}}^{(\beta ,\lambda )}$ such that
\begin{equation}
\mathbf{\Xi }_{\mathrm{p}}\left( t\right) =\mathbf{\sigma }_{\mathrm{p}%
}\left( t\right) \ \mathrm{Id}_{\mathbb{R}^{d}}\ ,\qquad \mathbf{\Xi }_{%
\mathrm{d}}=\mathbf{\sigma }_{\mathrm{d}}\ \mathrm{Id}_{\mathbb{R}^{d}}\ ,
\label{diagoanal coeef}
\end{equation}%
for any $\beta \in \mathbb{R}^{+}$, $\lambda \in \mathbb{R}_{0}^{+}$ and $%
t\in \mathbb{R}$. Note additionally that, for all $t\in \mathbb{R}$, $%
\mathbf{\sigma }_{\mathrm{p}}(t)=\mathbf{\sigma }_{\mathrm{p}}(|t|)$ with $%
\mathbf{\sigma }_{\mathrm{p}}(0)=0$ and
\begin{equation*}
\mathbf{\sigma }_{\mathrm{p}}(t)\in \lbrack -2\Vert \lbrack
I_{0,e_{1}}]\Vert _{\mathcal{\tilde{H}}_{\mathrm{fl}}}^{2},0]\ ,
\end{equation*}%
see (\ref{cesaro0}). Thus the \emph{in--phase} conductivity of the fermion
system equals
\begin{equation}
\mathbf{\sigma }(t)\equiv \mathbf{\sigma }^{(\beta ,\lambda )}(t):=\mathbf{%
\sigma }_{\mathrm{p}}(t)+\mathbf{\sigma }_{\mathrm{d}}\ ,\qquad t\in \mathbb{%
R}\ .  \label{in phase conductivity}
\end{equation}%
Clearly, $\mathbf{\sigma }\in C(\mathbb{R};\mathbb{R})$ satisfies $\mathbf{%
\sigma }(t)=\mathbf{\sigma }(-t)$ with $\mathbf{\sigma }(0)=\mathbf{\sigma }%
_{\mathrm{d}}$. Since the diamagnetic conductivity $\mathbf{\sigma }_{%
\mathrm{d}}$ is an explicit constant, that is,
\begin{equation}
\mathbf{\sigma }_{\mathrm{d}}=2\func{Re}\left\{ \mathbb{E}\left[ \langle
\mathfrak{e}_{e_{1}},\mathbf{d}_{\mathrm{fermi}}^{(\beta ,\omega ,\lambda )}%
\mathfrak{e}_{0}\rangle \right] \right\} \ ,  \label{sigmad calcul}
\end{equation}%
the study of the in--phase conductivity $\mathbf{\sigma }$ corresponds to
the analysis of the properties of $\mathbf{\sigma }_{\mathrm{p}} $. We
follow the same strategy as in \cite[Section 5.1.2]{OhmII}.

First, we denote by $i\mathcal{\tilde{L}}_{\mathrm{fl}}^{(\omega )}$ the
anti--self--adjoint operator acting on $\mathcal{\tilde{H}}_{\mathrm{fl}}$
generating the unitary group $\{\mathrm{\tilde{V}}_{t}^{(\omega ,\lambda
)}\}_{t\in {\mathbb{R}}}$ of Theorem \ref{bound incr 1 Lemma copy(1)+D}.
Then, one deduces from Equation (\ref{cesaro0}) and the spectral theorem the
existence of the paramagnetic conductivity measure $\mu _{\mathrm{p}}$, like
in \cite[Theorem 5.4]{OhmII}:

\begin{satz}[Paramagnetic conductivity measures]
\label{lemma sigma pos type copy(4)-macro}\mbox{
}\newline
Let $\beta \in \mathbb{R}^{+}$and $\lambda \in \mathbb{R}_{0}^{+}$. Then,
there is a positive symmetric measure $\mu _{\mathrm{p}}\equiv \mu _{\mathrm{%
p}}^{(\beta ,\lambda )}$ on $\mathbb{R}$ such that $\mu _{\mathrm{p}}\left(
\mathbb{R}\right) <\infty $ uniformly w.r.t. $\beta \in \mathbb{R}^{+}$, $%
\lambda \in \mathbb{R}_{0}^{+}$, while%
\begin{equation}
\mathbf{\sigma }_{\mathrm{p}}(t)=\int_{\mathbb{R}}\left( \cos \left( t\nu
\right) -1\right) \mu _{\mathrm{p}}(\mathrm{d}\nu )\ ,\qquad t\in \mathbb{R}%
\ .  \label{sigma - measure}
\end{equation}
\end{satz}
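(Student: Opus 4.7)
My plan is to obtain $\mu_{\mathrm{p}}$ as (essentially) the spectral measure of the generator $\tilde{\mathcal{L}}_{\mathrm{fl}}^{(\omega)}$ at the cyclic vector $[I_{0,e_{1}}]$ and to read off all three required properties from the already--established identities.

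First, I fix $\omega \in \tilde{\Omega}$ inside the full--measure subset provided by Theorems \ref{toto fluctbis+D} and \ref{bound incr 1 Lemma copy(1)+D}, so that the Hilbert space $\mathcal{\tilde{H}}_{\mathrm{fl}}$, the class $[I_{0,e_{1}}]\in \mathcal{\tilde{H}}_{\mathrm{fl}}$, and the strongly continuous one--parameter unitary group $\{\mathrm{\tilde{V}}_{t}^{(\omega,\lambda)}\}_{t\in\mathbb{R}}$ with self--adjoint generator $\tilde{\mathcal{L}}_{\mathrm{fl}}^{(\omega)}$ are all available. Specializing Equation (\ref{cesaro0}) to $k=q=1$ and using (\ref{diagoanal coeef}) yields
\begin{equation*}
\mathbf{\sigma}_{\mathrm{p}}(t) = \Big([I_{0,e_{1}}], \mathrm{\tilde{V}}_{t}^{(\omega,\lambda)}([I_{0,e_{1}}])\Big)_{\mathcal{\tilde{H}}_{\mathrm{fl}}} - \Big\Vert [I_{0,e_{1}}]\Big\Vert_{\mathcal{\tilde{H}}_{\mathrm{fl}}}^{2}.
\end{equation*}

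Next I apply the spectral theorem to $\tilde{\mathcal{L}}_{\mathrm{fl}}^{(\omega)}$ and let $\mu_{v}$ denote the spectral measure on $\mathbb{R}$ associated to the vector $v := [I_{0,e_{1}}]$. By construction $\mu_{v}$ is a positive finite Borel measure with total mass $\mu_{v}(\mathbb{R}) = \Vert v\Vert^{2}_{\mathcal{\tilde{H}}_{\mathrm{fl}}}$, and the functional calculus gives $(v,\mathrm{\tilde{V}}_{t}^{(\omega,\lambda)}v)_{\mathcal{\tilde{H}}_{\mathrm{fl}}} = \int_{\mathbb{R}} e^{it\nu}\,\mu_{v}(\mathrm{d}\nu)$. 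Combining these two facts,
\begin{equation*}
\mathbf{\sigma}_{\mathrm{p}}(t) = \int_{\mathbb{R}}\bigl(e^{it\nu}-1\bigr)\,\mu_{v}(\mathrm{d}\nu), \qquad t\in\mathbb{R}.
\end{equation*}
Since $\mathbf{\sigma}_{\mathrm{p}}$ is real--valued and even in $t$ (as recalled just after (\ref{diagoanal coeef})), the imaginary part $\int_{\mathbb{R}}\sin(t\nu)\,\mu_{v}(\mathrm{d}\nu)$ vanishes identically; by uniqueness of the Fourier transform this forces $\mu_{v}$ to be symmetric under $\nu\mapsto -\nu$. Setting $\mu_{\mathrm{p}} := \mu_{v}$ then yields the advertised representation (\ref{sigma - measure}). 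A separate, subtler point is showing $\mu_{\mathrm{p}}$ is $\omega$--independent: this follows because the right--hand side of (\ref{sigma - measure}) is deterministic by the definition (\ref{paramagnetic transport coefficient macro}) of $\mathbf{\Xi}_{\mathrm{p}}$, and a positive symmetric measure is uniquely determined by the cosine transform $t\mapsto \int(\cos(t\nu)-1)\mu(\mathrm{d}\nu)$.

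It remains to establish the uniform total--mass bound. From the identification $\mu_{\mathrm{p}}(\mathbb{R}) = \Vert [I_{0,e_{1}}]\Vert_{\mathcal{\tilde{H}}_{\mathrm{fl}}}^{2} = (I_{0,e_{1}},I_{0,e_{1}})_{\mathcal{I}}$ and from Theorem \ref{toto fluctbis+D}(ii) we have $(I_{0,e_{1}},I_{0,e_{1}})_{\mathcal{I}} = \lim_{l\to\infty}(I_{0,e_{1}},I_{0,e_{1}})_{\mathcal{I},l}^{(\omega)}$ for $\omega$ in the full--measure set, so the uniform bound (\ref{auto-cor+1}) (valid for all $l,\beta,\omega,\lambda$ with a constant $D$ independent of those parameters, via \cite[Lemma 5.10]{OhmII}) passes to the limit and gives $\mu_{\mathrm{p}}(\mathbb{R}) \leq D$ uniformly in $\beta\in\mathbb{R}^{+}$ and $\lambda\in\mathbb{R}_{0}^{+}$.

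The only genuinely delicate step in this plan is the transition from the $\omega$--indexed construction to a single deterministic measure $\mu_{\mathrm{p}}$; everything else is a direct application of the spectral theorem combined with results already proved in the paper. The rest is essentially bookkeeping: positivity and symmetry come from the spectral theorem plus the realness/evenness of $\mathbf{\sigma}_{\mathrm{p}}$, the representation (\ref{sigma - measure}) is a rewriting of the functional--calculus identity, and uniform finiteness reduces to the $L^{1}$--type bound (\ref{auto-cor+1}) on the Duhamel sesquilinear form.
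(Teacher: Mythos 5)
Your proof follows essentially the same route as the paper's: apply the spectral theorem to the generator of $\{\mathrm{\tilde{V}}_{t}^{(\omega ,\lambda )}\}_{t\in \mathbb{R}}$ at the vector $[I_{0,e_{1}}]$, deduce positivity and symmetry from the realness and evenness of $\mathbf{\sigma }_{\mathrm{p}}$, identify $\mu _{\mathrm{p}}(\mathbb{R})=\Vert \lbrack I_{0,e_{1}}]\Vert _{\mathcal{\tilde{H}}_{\mathrm{fl}}}^{2}$ and bound it uniformly via (\ref{auto-cor+1}). The one imprecision is your justification of $\omega $--independence: the transform $t\mapsto \int_{\mathbb{R}}(\cos (t\nu )-1)\,\mu (\mathrm{d}\nu )$ does \emph{not} determine the atom of a positive symmetric measure at $\nu =0$ (it annihilates $\delta _{0}$), so you should instead observe that $\mu _{\mathrm{p}}|_{\mathbb{R}\backslash \{0\}}$ is pinned down by the deterministic function $\mathbf{\sigma }_{\mathrm{p}}$ while the total mass $(I_{0,e_{1}},I_{0,e_{1}})_{\mathcal{I}}$ is deterministic by Theorem \ref{toto fluctbis+D}, which together fix the atom at $0$ as well.
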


\begin{proof}
As explained above, the existence of the finite positive symmetric measure $%
\mu _{\mathrm{p}}$ on $\mathbb{R}$ satisfying (\ref{sigma - measure}) is a
consequence of the spectral theorem applied to $i\mathcal{\tilde{L}}_{%
\mathrm{fl}}^{(\omega )}$ together with $\mathbf{\sigma }_{\mathrm{p}}(t)=%
\mathbf{\sigma }_{\mathrm{p}}(|t|)$ and $\mathbf{\sigma }_{\mathrm{p}}(0)=0$%
. See Equations (\ref{cesaro0}) and (\ref{diagoanal coeef}). Observe also
that $\mu _{\mathrm{p}}$ is clearly a deterministic measure. Moreover,
\begin{equation*}
\mu _{\mathrm{p}}\left( \mathbb{R}\right) =\left(
[I_{0,e_{1}}],[I_{0,e_{1}}]\right) _{\mathcal{\tilde{H}}_{\mathrm{fl}}}
\end{equation*}%
and we thus deduce from (\ref{auto-cor+1}) that this quantity is uniformly
bounded w.r.t. $\beta \in \mathbb{R}^{+}$ and $\lambda \in \mathbb{R}%
_{0}^{+} $.
\end{proof}

\begin{bemerkung}[On the strict negativity of the paramagnetic conductivity]

\mbox{
}\newline
In contrast to the standard Liouvillian $\mathcal{\tilde{L}}$ in \cite[Eq.
(105)]{OhmII}, it is a priori not clear whether the kernel of $\mathcal{%
\tilde{L}}_{\mathrm{fl}}^{(\omega )}$ is empty or not. Thus, we define $%
\mathrm{P}_{\mathrm{fl}}^{(\omega )}$ to be the orthogonal projection on the
kernel of $\mathcal{\tilde{L}}_{\mathrm{fl}}^{(\omega )}$. By (\ref{cesaro0}%
) and (\ref{diagoanal coeef}) combined with the stationarity of KMS states,
one can prove that $\mathbf{\sigma }_{\mathrm{p}}\left( t\right) =0$ for $%
t\neq 0$ iff $\mathrm{P}_{\mathrm{fl}}^{(\omega
)}[I_{0,e_{1}}]=[I_{0,e_{1}}] $. In particular, if $\mathbf{\sigma }_{%
\mathrm{p}}\left( t\right) =0$ for some $t\in \mathbb{R}\backslash \{0\}$
then $\mathbf{\sigma }_{\mathrm{p}}$ is the zero function on $\mathbb{R}$
.In the same way, if there is $t\in {\mathbb{R}}$ where $\mathbf{\sigma }_{%
\mathrm{p}}\left( t\right) \neq 0$\ then $\mathbf{\sigma }_{\mathrm{p}%
}\left( t\right) <0$ for all $t\in \mathbb{R}\backslash \{0\}$.
\end{bemerkung}

Note that Theorem \ref{lemma sigma pos type copy(4)-macro} is a reminiscent
of \cite[Theorem 3.1 (v)]{OhmII} where we show the existence of a local
paramagnetic conductivity measure $\mu _{\mathrm{p},l}^{(\omega )}\equiv \mu
_{\mathrm{p},l}^{(\beta ,\omega ,\lambda )}$. It is a positive operator
valued measure that satisfies
\begin{equation*}
\int_{\mathbb{R}}\left( 1+\left\vert \nu \right\vert \right) \Vert \mu _{%
\mathrm{p},l}^{(\omega )}\Vert _{\mathrm{op}}(\mathrm{d}\nu )<\infty \ ,
\end{equation*}%
uniformly w.r.t. $l,\beta \in \mathbb{R}^{+}$, $\omega \in \Omega $, $%
\lambda \in \mathbb{R}_{0}^{+}$, and%
\begin{equation*}
\Xi _{\mathrm{p},l}^{(\omega )}(t)=\int_{\mathbb{R}}\left( \cos \left( t\nu
\right) -1\right) \mu _{\mathrm{p},l}^{(\omega )}(\mathrm{d}\nu )\ ,\qquad
t\in \mathbb{R}\ .
\end{equation*}%
Recall that $\Xi _{\mathrm{p},l}^{(\omega )}$ is the space--averaged
paramagnetic transport coefficient, see (\ref{average conductivitynew}). For
all $l\in \mathbb{R}^{+}$, the map $\omega \mapsto \mu _{\mathrm{p}%
,l}^{(\omega )}$ is measurable w.r.t. the $\sigma $--algebra $\mathfrak{A}%
_{\Omega }$ and the weak$^*$ topology for ($\mathcal{B}(\mathbb{R}^d)$%
--valued) measures on $\mathbb{R}$. $\mathbb{E}[\mu _{\mathrm{p},l}^{(\omega
)}]$, seen as a weak integral, is a finite positive measure. Indeed, as $l
\to \infty$, it converges to the positive measure $\mu _{\mathrm{p}}$ $%
\mathrm{Id}_{\mathbb{R}^{d}}$, with $\mu_\mathrm{p}$ as in Theorem \ref%
{lemma sigma pos type copy(4)-macro}:

\begin{satz}[From microscopic to macroscopic conductivity measures]
\label{lemma sigma pos type copy(5)}\mbox{
}\newline
Let $\beta \in \mathbb{R}^{+}$ and $\lambda \in \mathbb{R}_{0}^{+}$. Then
there is a measurable set $\tilde{\Omega} \equiv \tilde{\Omega}^{%
(\beta,\lambda)}\subset \Omega$ of full measure such that, for all $\omega
\in \tilde{\Omega}$, $\mu _{\mathrm{p},l}^{(\omega )}$ converges in the weak$%
^{\ast }$--topology to $\mu _{\mathrm{p}}\ \mathrm{Id}_{\mathbb{R}^{d}}$, as
$l \to \infty$. In particular, $\mathbb{E}[\mu _{\mathrm{p},l}^{(\omega )}]$
converges in the weak$^{\ast }$--topology to $\mu _{\mathrm{p}}\ \mathrm{Id}%
_{\mathbb{R}^{d}}$, as $l \to \infty$.
\end{satz}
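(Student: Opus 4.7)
The plan is to deduce the weak$^{\ast}$-convergence of the finite positive operator-valued measures $\mu_{\mathrm{p},l}^{(\omega )}$ from an almost-sure uniform-on-compacts convergence of their cosine transforms $\Xi_{\mathrm{p},l}^{(\omega )}$, and then to invoke uniqueness of the (cosine) Fourier transform on symmetric finite measures to identify the limit with $\mu _{\mathrm{p}}\,\mathrm{Id}_{\mathbb{R}^{d}}$. The two \textit{a priori} inputs are the uniform bound $\int_{\mathbb{R}}(1+|\nu |)\,\Vert \mu _{\mathrm{p},l}^{(\omega )}\Vert _{\mathrm{op}}(\mathrm{d}\nu )<\infty$ (providing both uniform total mass and Lipschitz control in $t$) and the formula (\ref{average conductivitynew}) relating $\Xi _{\mathrm{p},l}^{(\omega )}$ to the local sesquilinear form $(\cdot ,\cdot )_{\mathcal{I},l}^{(\omega )}$.

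Fix $t\in \mathbb{R}$ and $k,q\in \{1,\dots ,d\}$. The term $(I_{0,e_{k}},I_{0,e_{q}})_{\mathcal{I},l}^{(\omega )}$ in (\ref{average conductivitynew}) is directly handled by Theorem \ref{toto fluctbis+D}(ii). For the $t$-dependent term $(I_{0,e_{k}},\tau _{t}^{(\omega ,\lambda )}(I_{0,e_{q}}))_{\mathcal{I},l}^{(\omega )}$, the difficulty is that the second argument is random. I would expand
\begin{equation*}
\tau _{t}^{(\omega ,\lambda )}(I_{0,e_{q}})=-2\,\mathrm{Im}\bigl(a^{\ast }((\mathrm{U}_{t}^{(\omega ,\lambda )})^{\ast }\mathfrak{e}_{e_{q}})\,a((\mathrm{U}_{t}^{(\omega ,\lambda )})^{\ast }\mathfrak{e}_{0})\bigr),
\end{equation*}
approximate $(\mathrm{U}_{t}^{(\omega ,\lambda )})^{\ast }\mathfrak{e}_{x}$, $x\in \{0,e_{q}\}$, by finitely supported vectors, and use the estimate (\ref{auto-cor+1}) to bound the resulting tails in $(\cdot ,\cdot )_{\mathcal{I},l}^{(\omega )}$ uniformly in $l$. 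Each truncated contribution is a non-random element of $\mathcal{I}$ to which Theorem \ref{toto fluctbis+D}(ii) applies; together with the identification (\ref{cesaro0}) for the limit, this yields $\{\Xi _{\mathrm{p},l}^{(\omega )}(t)\}_{k,q}\rightarrow \sigma _{\mathrm{p}}(t)\,\delta _{k,q}$ on a full-measure set $\tilde{\Omega}_{t,k,q}\subset \Omega $.

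Equicontinuity in $t$ is automatic: since $\Xi _{\mathrm{p},l}^{(\omega )}(t)=\int_{\mathbb{R}}(\cos (t\nu )-1)\,\mu _{\mathrm{p},l}^{(\omega )}(\mathrm{d}\nu )$, one has
\begin{equation*}
\Vert \Xi _{\mathrm{p},l}^{(\omega )}(t)-\Xi _{\mathrm{p},l}^{(\omega )}(t^{\prime })\Vert _{\mathrm{op}}\leq |t-t^{\prime }|\int_{\mathbb{R}}|\nu |\,\Vert \mu _{\mathrm{p},l}^{(\omega )}\Vert _{\mathrm{op}}(\mathrm{d}\nu )\leq D|t-t^{\prime }|,
\end{equation*}
uniformly in $l,\omega $. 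Intersecting the sets $\tilde{\Omega}_{t,k,q}$ over all $k,q$ and over a countable dense set $t\in \mathbb{Q}$ produces a single full-measure set $\tilde{\Omega}$ on which $\Xi _{\mathrm{p},l}^{(\omega )}(t)\rightarrow \sigma _{\mathrm{p}}(t)\,\mathrm{Id}_{\mathbb{R}^{d}}$ uniformly on compact subsets of $\mathbb{R}$. For $\omega \in \tilde{\Omega}$ the operator-valued measures $\{\mu _{\mathrm{p},l}^{(\omega )}\}_{l}$ have uniformly bounded total mass and are hence weak$^{\ast }$-sequentially compact; any subsequential limit $\mu $ is a finite positive symmetric $\mathcal{B}(\mathbb{R}^{d})$-valued measure whose cosine transform equals $\sigma _{\mathrm{p}}(t)\,\mathrm{Id}_{\mathbb{R}^{d}}+\mu (\mathbb{R})$. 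Uniqueness of the cosine transform for symmetric finite measures forces $\mu =\mu _{\mathrm{p}}\,\mathrm{Id}_{\mathbb{R}^{d}}$, hence the entire net converges weak$^{\ast }$. The convergence of $\mathbb{E}[\mu _{\mathrm{p},l}^{(\omega )}]$ follows from this pathwise statement together with the uniform total-variation bound and the dominated convergence theorem applied to each test function $f\in C_{0}(\mathbb{R})$.

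The main obstacle is the approximation step for $\tau _{t}^{(\omega ,\lambda )}(I_{0,e_{q}})$: since this element is $\omega $-dependent, Theorem \ref{toto fluctbis+D}(ii) cannot be invoked directly, and one needs $\omega $-uniform $\ell ^{1}$-decay of $(\mathrm{U}_{t}^{(\omega ,\lambda )})^{\ast }\mathfrak{e}_{x}$ away from $x$. A Combes--Thomas-type bound (uniform in the disorder) together with Taylor expansion in $t$ of the exponential would supply this; alternatively one may revisit the Akcoglu--Krengel argument of \cite[Theorems 5.17, 5.26]{OhmIII} to allow random test elements lying in a suitable dense subspace, which is likely the route implicit in the authors' remark that the proof follows that of Theorem \ref{toto fluctbis+D}.
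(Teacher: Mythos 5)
Your overall architecture --- almost sure uniform-on-compacts convergence of the cosine transforms $\Xi_{\mathrm{p},l}^{(\omega)}$, tightness from the uniform first-moment bound, and identification of subsequential weak$^{\ast}$ limits --- is the same as the paper's, whose proof simply imports the first ingredient from \cite[Theorem 3.1 (p)]{OhmIII} and declares the rest to follow. However, your identification step has a genuine gap. Since $\Xi_{\mathrm{p},l}^{(\omega)}(t)=\int_{\mathbb{R}}(\cos(t\nu)-1)\,\mu_{\mathrm{p},l}^{(\omega)}(\mathrm{d}\nu)$ and the kernel $\cos(t\nu)-1$ vanishes at $\nu=0$, convergence of these transforms is blind to mass sitting at the origin: by your own computation a subsequential limit $\mu$ satisfies $\int\cos(t\nu)\,\mu(\mathrm{d}\nu)=\mathbf{\sigma}_{\mathrm{p}}(t)\,\mathrm{Id}_{\mathbb{R}^{d}}+\mu(\mathbb{R})$, and uniqueness of the cosine transform then only yields $\mu=\mu_{\mathrm{p}}\,\mathrm{Id}_{\mathbb{R}^{d}}+\left(\mu(\mathbb{R})-\mu_{\mathrm{p}}(\mathbb{R})\,\mathrm{Id}_{\mathbb{R}^{d}}\right)\delta_{0}$, not $\mu=\mu_{\mathrm{p}}\,\mathrm{Id}_{\mathbb{R}^{d}}$. (Compare $\mu_{l}=\delta_{1/l}$: its transform $\cos(t/l)-1$ tends to $0$ uniformly on compacts, yet $\mu_{l}\rightarrow\delta_{0}\neq 0$ in the weak$^{\ast}$ topology.) Because the paper explicitly leaves open whether $\mathcal{\tilde{L}}_{\mathrm{fl}}^{(\omega)}$ has a nontrivial kernel, the atom at $\nu=0$ cannot be dismissed. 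To close the argument you must also prove convergence of the total masses, $\mu_{\mathrm{p},l}^{(\omega)}(\mathbb{R})\rightarrow\mu_{\mathrm{p}}(\mathbb{R})\,\mathrm{Id}_{\mathbb{R}^{d}}$; this is exactly the almost sure convergence of the $t$--independent term $(I_{0,e_{k}},I_{0,e_{q}})_{\mathcal{I},l}^{(\omega)}$ in (\ref{average conductivitynew}) provided by Theorem \ref{toto fluctbis+D}~(ii), which you cite but do not deploy at this point.

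A second, smaller problem is the one you flag yourself: truncating $(\mathrm{U}_{t}^{(\omega,\lambda)})^{\ast}\mathfrak{e}_{x}$ to a finite box does not produce a \emph{deterministic} element of $\mathcal{I}$ --- the surviving coefficients still depend on $\omega$ --- so Theorem \ref{toto fluctbis+D}~(ii), stated for fixed $I,I^{\prime}\in\mathcal{I}$, still does not apply to the $t$--dependent term of (\ref{average conductivitynew}). Your fallback of rerunning the Akcoglu--Krengel argument for covariant random test elements is the correct route, and it is what the paper does implicitly by quoting the almost sure uniform-on-compacts convergence of $\Xi_{\mathrm{p},l}^{(\omega)}$ wholesale from \cite[Theorem 3.1 (p)]{OhmIII}. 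The remaining steps (equicontinuity from the uniform bound $\int_{\mathbb{R}}(1+|\nu|)\Vert\mu_{\mathrm{p},l}^{(\omega)}\Vert_{\mathrm{op}}(\mathrm{d}\nu)<\infty$, intersection over a countable dense set of times, and dominated convergence for $\mathbb{E}[\mu_{\mathrm{p},l}^{(\omega)}]$) are sound.
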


\begin{proof}
The limit in \cite[Theorem 3.1 (p)]{OhmIII} is uniform w.r.t. times $t$ in
compact sets. This implies the weak$^{\ast }$--convergence of $\mu _{\mathrm{%
p},l}^{(\omega )}$ towards $\mu _{\mathrm{p}}\ \mathrm{Id}_{\mathbb{R}^{d}}$
for $\omega$ in a measurable set $\tilde{\Omega}^{(\beta,\lambda)}
\subset \Omega$ of full measure.
\end{proof}

\begin{koro}[First moment of the paramagnetic conductivity measure]
\label{lemma sigma pos type}\mbox{
}\newline
Let $\beta \in \mathbb{R}^{+}$and $\lambda \in \mathbb{R}_{0}^{+}$. Then,
\begin{equation*}
\int_{\mathbb{R}}\left( 1+\left\vert \nu \right\vert \right) \mu _{\mathrm{p}%
}(\mathrm{d}\nu )<\infty \ ,
\end{equation*}%
uniformly w.r.t. $\beta \in \mathbb{R}^{+}$ and $\lambda \in \mathbb{R}%
_{0}^{+}$. In particular, the family $\{\mathbf{\sigma }_{\mathrm{p}%
}^{(\beta ,\lambda )}\}_{\beta \in \mathbb{R}^{+},\lambda \in \mathbb{R}%
_{0}^{+}}$ of maps from $\mathbb{R}$ to $\mathbb{R}_{0}^{-}$ is
equicontinuous.
\end{koro}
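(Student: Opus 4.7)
The finiteness $\mu _{\mathrm{p}}(\mathbb{R})<\infty $ uniformly in $\beta ,\lambda $ is already contained in Theorem \ref{lemma sigma pos type copy(4)-macro}, so the whole work lies in controlling the first moment $\int _{\mathbb{R}}|\nu |\,\mu _{\mathrm{p}}(\mathrm{d}\nu )$ uniformly. The plan is to transfer the \emph{finite--volume} bound $\int _{\mathbb{R}}(1+|\nu |)\,\|\mu _{\mathrm{p},l}^{(\omega )}\|_{\mathrm{op}}(\mathrm{d}\nu )\leq D$ recalled in the paragraph preceding the corollary (and established in \cite{OhmII}) up to the macroscopic measure $\mu _{\mathrm{p}}$, by exploiting the weak$^{\ast }$--convergence $\mathbb{E}[\mu _{\mathrm{p},l}^{(\omega )}]\to \mu _{\mathrm{p}}\,\mathrm{Id}_{\mathbb{R}^{d}}$ provided by Theorem \ref{lemma sigma pos type copy(5)}.

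First, I would take expectation of the local estimate and use convexity of the operator norm to obtain $\int _{\mathbb{R}}(1+|\nu |)\,\mathrm{d}\|\mathbb{E}[\mu _{\mathrm{p},l}^{(\omega )}]\|_{\mathrm{op}}\leq D$, uniformly in $l,\beta ,\lambda $. Pairing with the canonical basis vector $\mathfrak{e}_{e_{1}}$ extracts a scalar positive measure that inherits the same first--moment bound and converges weak$^{\ast }$ to $\mu _{\mathrm{p}}$. Since $1+|\nu |$ is unbounded, the second step is a standard Portmanteau--style truncation: for each $M>0$, the function $f_{M}(\nu ):=\min (1+|\nu |,M)$ is bounded and continuous, and weak$^{\ast }$--convergence together with the above uniform estimate gives
\[
\int _{\mathbb{R}}f_{M}\,\mu _{\mathrm{p}}(\mathrm{d}\nu )=\lim _{l\rightarrow \infty }\int _{\mathbb{R}}f_{M}\,\mathrm{d}\mathbb{E}\bigl[\langle \mathfrak{e}_{e_{1}},\mu _{\mathrm{p},l}^{(\omega )}\mathfrak{e}_{e_{1}}\rangle \bigr]\leq D.
\]
Letting $M\rightarrow \infty $ and invoking monotone convergence then yields the desired bound $\int _{\mathbb{R}}(1+|\nu |)\,\mu _{\mathrm{p}}(\mathrm{d}\nu )\leq D$, uniform in $\beta ,\lambda $.

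Equicontinuity of $\{\mathbf{\sigma }_{\mathrm{p}}^{(\beta ,\lambda )}\}_{\beta ,\lambda }$ follows at once from the spectral representation (\ref{sigma - measure}) combined with the elementary inequality $|\cos (t\nu )-\cos (s\nu )|\leq |t-s|\,|\nu |$: for all $t,s\in \mathbb{R}$,
\[
|\mathbf{\sigma }_{\mathrm{p}}^{(\beta ,\lambda )}(t)-\mathbf{\sigma }_{\mathrm{p}}^{(\beta ,\lambda )}(s)|\leq |t-s|\int _{\mathbb{R}}|\nu |\,\mu _{\mathrm{p}}(\mathrm{d}\nu )\leq D\,|t-s|,
\]
which gives uniform Lipschitz continuity, stronger than the stated equicontinuity. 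The only genuinely delicate point in this plan is the passage from weak$^{\ast }$--convergence of $\mathbb{E}[\mu _{\mathrm{p},l}^{(\omega )}]$ to an integral inequality against the unbounded test function $1+|\nu |$; everything else is either direct or a by-product of results already established in the paper, and the truncation/monotone-convergence argument sketched above handles this point cleanly.
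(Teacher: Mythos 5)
Your proof is correct and follows essentially the same route as the paper: both transfer a uniform finite--volume first--moment bound from \cite{OhmII} to $\mu _{\mathrm{p}}$ via the weak$^{\ast }$--convergence of Theorem \ref{lemma sigma pos type copy(5)} together with a truncation/limit argument, and both deduce equicontinuity from the resulting Lipschitz estimate based on $|\cos (t\nu )-\cos (s\nu )|\leq |t-s|\,|\nu |$. Only a trivial remark: the pairing that extracts a scalar measure should be with the vector $e_{1}\in \mathbb{R}^{d}$ rather than with $\mathfrak{e}_{e_{1}}\in \ell ^{2}(\mathfrak{L})$, since $\mu _{\mathrm{p},l}^{(\omega )}$ is a $\mathcal{B}(\mathbb{R}^{d})$--valued measure.
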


\begin{proof}
By Theorem \ref{lemma sigma pos type copy(4)-macro}, it suffices to prove
that
\begin{equation*}
\int_{0}^{\infty }\nu \ \mu _{\mathrm{p}}(\mathrm{d}\nu )<\infty \ ,
\end{equation*}%
uniformly w.r.t. $\beta \in \mathbb{R}^{+}$ and $\lambda \in \mathbb{R}%
_{0}^{+}$. By using Theorem \ref{lemma sigma pos type copy(5)} and \cite[%
Theorem 5.5]{OhmII}, we arrive at
\begin{equation*}
\underset{\nu _{0}\rightarrow \infty }{\lim }\int_{0}^{\nu _{0}}\nu \ \mu _{%
\mathrm{p}}(\mathrm{d}\nu )\leq 2\langle \lbrack
I_{0,e_{1}}],[I_{0,e_{1}}]\rangle _{\mathcal{I}}\ .
\end{equation*}%
Combined with \cite[Lemma 5.10]{OhmII}, this implies the existence of a
constant $D\in \mathbb{R}^{+}$ not depending on $\beta \in \mathbb{R}^{+}$
and $\lambda \in \mathbb{R}_{0}^{+}$ such that%
\begin{equation*}
\underset{\nu _{0}\rightarrow \infty }{\lim }\int_{0}^{\nu _{0}}\nu \ \mu _{%
\mathrm{p}}(\mathrm{d}\nu )\leq D<\infty \ .
\end{equation*}%
Since $\mu _{\mathrm{p}}$ is a positive measure, the above limit exists and
the equicontinuity of the paramagnetic conductivity is deduced like in the
proof of \cite[Corollary 3.2 (iv)]{OhmII}.
\end{proof}

Note that the diamagnetic conductivity $\mathbf{\sigma }_{\mathrm{d}}$ is
constant in time and its Fourier transform is the atomic measure $\mathbf{%
\sigma }_{\mathrm{d}}\delta _{0}$, see (\ref{explicit form conductivity}).
Since the conductivity $\Sigma $ (\ref{conductivity}) is the sum of the
paramagnetic and diamagnetic conductivities, we define the \emph{in--phase\ }%
conducti%
%TCIMACRO{\TeXButton{\-}{\-}}%
%BeginExpansion
\-%
%EndExpansion
vity measure $\mu _{\mathbf{\Sigma }}\equiv \mu _{\mathbf{\Sigma }^{(\beta
,\lambda )}}$ by
\begin{equation}
\mu _{\mathbf{\Sigma }}:=\mu _{\mathrm{p}}+\left( \mathbf{\sigma }_{\mathrm{d%
}}-\mu _{\mathrm{p}}\left( \mathbb{R}\right) \right) \delta _{0}
\label{equ super cond}
\end{equation}%
for any $\beta \in \mathbb{R}^{+}$ and $\lambda \in \mathbb{R}_{0}^{+}$. By
Theorem \ref{lemma sigma pos type copy(4)-macro}, the in--phase conductivity
$\mathbf{\sigma }$\ given in (\ref{in phase conductivity}), equals
\begin{equation*}
\mathbf{\sigma }(t)=\int_{\mathbb{R}}\cos (t\nu )\mu _{\mathbf{\Sigma }}(%
\mathrm{d}\nu )=\int_{\mathbb{R}}\left( \cos (t\nu )-1\right) \mu _{\mathrm{p%
}}(\mathrm{d}\nu )+\mathbf{\sigma }_{\mathrm{d}}\ ,\qquad t\in \mathbb{R}\ .
\end{equation*}%
The restricted measure $\mu _{\mathrm{AC}}:=\mu _{\mathrm{p}}|_{\mathbb{R}%
\backslash \{0\}}$ is the (in--phase) \emph{AC}--conducti%
%TCIMACRO{\TeXButton{\-}{\-}}%
%BeginExpansion
\-%
%EndExpansion
vity measure described in \cite[Theorem 4.4]{OhmIII}, which was deduced from
the second principle of thermodynamics. The additional information we obtain
here is the finiteness of $\mu _{\mathrm{p}}$, i.e., $\mu _{\mathrm{p}}(%
\mathbb{R})<\infty $. Note that the atomic measure at $\nu =0$
\begin{equation*}
\mu _{0}:=\left( \mathbf{\sigma }_{\mathrm{d}}-\mu _{\mathrm{p}}\left(
\mathbb{R}\backslash \{0\}\right) \right) \delta _{0}\
\end{equation*}%
does not a priori vanish.

\begin{bemerkung}[On the strict negativity of the paramagnetic conductivity]

\mbox{
}\newline
Similar to \cite[Theorem 5.9]{OhmII}, the conductivity measure $\mu _{%
\mathrm{AC}}$ can be reconstructed from some macroscopic quantum current
viscosity. We refrain from doing it here.
\end{bemerkung}

Note that the case $\lambda =0$ can be interpreted as the perfect conductor.
Indeed, by explicit computations using the dispersion relation
\begin{equation}
E(p):=2\left[ d-\left( \cos (p_{1})+\cdots +\cos (p_{d})\right) \right] \
,\qquad p\in \left[ -\pi ,\pi \right] ^{d}\ ,
\label{dispertion relation section2}
\end{equation}%
of the (up to a minus sign) discrete Laplacian $\Delta _{\mathrm{d}}$,
\begin{equation*}
\left\langle \mathfrak{e}_{x},\mathbf{d}_{\mathrm{fermi}}^{(\beta ,\omega ,
0 )}\mathfrak{e}_{y}\right\rangle =\frac{1}{(2\pi )^{d}}\int\nolimits_{[-\pi
,\pi ]^{d}}\frac{1}{1+\mathrm{e}^{\beta E(p)}}\mathrm{e}^{-ip\cdot (x-y)}%
\mathrm{d}^{d}p\ ,
\end{equation*}
we obtain
\begin{equation}
\mathbf{\sigma }_{\mathrm{d}}^{(\beta ,0)}=\frac{2}{(2\pi )^{d}}%
\int\nolimits_{[-\pi ,\pi ]^{d}}\frac{\cos \left( p_{1}\right) }{1+\mathrm{e}%
^{\beta E(p)}}\mathrm{d}^{d}p \not=0  \label{explicit constant}
\end{equation}%
for any $\beta \in \mathbb{R}^{+}$, whereas $\mu _{\mathrm{p}}^{(\beta ,0)}(%
\mathbb{R})=0$ (cf. Lemma \ref{lemma conductivty4 copy(1)}). Hence, the heat
production vanishes in this special case. Similarly, the limit $\lambda
\rightarrow \infty $ corresponds to the perfect insulator and also gives a
vanishing heat production for any cyclic processes involving the external
electromagnetic field:

\begin{satz}[Conductivity -- Asymptotics]
\label{main 4}\mbox{
}\newline
Let $\beta \in \mathbb{R}^{+}$ and assume that $\mathfrak{a}_{\mathbf{0}}$
is absolutely continuous w.r.t. the Lebesgue measure when we perform the
limit $\lambda \rightarrow \infty $.\newline
\emph{(p)} Paramagnetic conductivity: $\mathbf{\sigma }_{\mathrm{p}}^{(\beta
,\lambda )}(t)$ converges uniformly on compact sets to zero, as $\lambda
\rightarrow 0^{+}$ or $\lambda \rightarrow \infty $. In particular, $\mu _{%
\mathrm{p}}$ converges in the weak$^{\ast }$--topology to the trivial
measure in these two cases. \newline
\emph{(d)} Diamagnetic conductivity: $\mathbf{\sigma }_{\mathrm{d}}^{(\beta
,\lambda )}$ converges to $\mathbf{\sigma }_{\mathrm{d}}^{(\beta ,0)}$, as $%
\lambda \rightarrow 0^{+}$, and to zero, as $\lambda \rightarrow \infty $.
\end{satz}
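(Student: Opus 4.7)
The plan is to treat parts (d) and (p) in parallel, reducing (p) to the vanishing of the total paramagnetic mass $\mu_{\mathrm{p}}^{(\beta,\lambda)}(\mathbb{R})$. By Theorem \ref{lemma sigma pos type copy(4)-macro} and $|\cos(t\nu)-1|\le 2$, one has
\begin{equation*}
\bigl|\mathbf{\sigma}_{\mathrm{p}}^{(\beta,\lambda)}(t)\bigr|\le 2\,\mu_{\mathrm{p}}^{(\beta,\lambda)}(\mathbb{R})=2\bigl([I_{0,e_{1}}],[I_{0,e_{1}}]\bigr)_{\tilde{\mathcal{H}}_{\mathrm{fl}}}\qquad (t\in\mathbb{R}),
\end{equation*}
and $|\int f\,\mathrm{d}\mu_{\mathrm{p}}^{(\beta,\lambda)}|\le\|f\|_{\infty}\,\mu_{\mathrm{p}}^{(\beta,\lambda)}(\mathbb{R})$ for any bounded Borel $f$. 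Therefore showing $\mu_{\mathrm{p}}^{(\beta,\lambda)}(\mathbb{R})\to 0$ in each of the two limits \emph{simultaneously} yields uniform convergence of $\mathbf{\sigma}_{\mathrm{p}}^{(\beta,\lambda)}$ to $0$ on all of $\mathbb{R}$ (hence on compact sets) and weak*-convergence of the finite measure $\mu_{\mathrm{p}}^{(\beta,\lambda)}$ to the trivial measure. Theorem \ref{toto fluctbis+D} further gives $\mu_{\mathrm{p}}^{(\beta,\lambda)}(\mathbb{R})=\lim_{l\to\infty}\mathbb{E}[(I_{0,e_{1}},I_{0,e_{1}})_{\mathcal{I},l}^{(\omega)}]$.

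For the limit $\lambda\to 0^{+}$, since $\|\lambda V_{\omega}\|_{\mathrm{op}}\le\lambda$ uniformly in $\omega$, the Hamiltonian $\Delta_{\mathrm{d}}+\lambda V_{\omega}$ converges in norm to $\Delta_{\mathrm{d}}$. Continuous functional calculus then yields norm-continuity of the Fermi--Dirac operator $\mathbf{d}_{\mathrm{fermi}}^{(\beta,\omega,\lambda)}$, uniformly in $\omega$; taking matrix elements and passing through $\mathbb{E}[\,\cdot\,]$ by dominated convergence establishes (d) in this regime. For (p), the same norm-continuity gives strong continuity of the Bogoliubov dynamics $\tau^{(\omega,\lambda)}$ on $\mathcal{U}$ and of the KMS expectations on polynomials in the creation/annihilation operators, hence continuity at $\lambda=0$ of the finite-volume bilinear form $(I_{0,e_{1}},I_{0,e_{1}})_{\mathcal{I},l}^{(\omega)}$ through (\ref{def bogo jetmanbis}). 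The uniform estimate (\ref{auto-cor+1}) permits the exchange of $\lim_{l}$ with $\lim_{\lambda\to 0^{+}}$, and the endpoint value $\mu_{\mathrm{p}}^{(\beta,0)}(\mathbb{R})=0$ is the free-dispersion computation of Lemma \ref{lemma conductivty4 copy(1)}.

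For the limit $\lambda\to\infty$ under absolute continuity of $\mathfrak{a}_{\mathbf{0}}$, the approach is asymptotic diagonalization of $\Delta_{\mathrm{d}}+\lambda V_{\omega}$ in the canonical basis $\{\mathfrak{e}_{x}\}$. Writing $\Delta_{\mathrm{d}}+\lambda V_{\omega}=\lambda(V_{\omega}+\lambda^{-1}\Delta_{\mathrm{d}})$, the diagonal operator $V_{\omega}$ has $\mathfrak{a}_{\Omega}$-a.s.\ pairwise distinct values on every finite site set (by AC of $\mathfrak{a}_{\mathbf{0}}$), and analytic perturbation theory yields eigenvectors $\phi_{x}^{(\lambda,\omega)}=\mathfrak{e}_{x}+O(\lambda^{-1})$ and eigenvalues $E_{x}^{(\lambda,\omega)}=\lambda\omega(x)+2d+O(\lambda^{-1})$, with errors controlled by the random quantity $\min_{x\ne y}|\omega(x)-\omega(y)|^{-1}$ on the relevant finite subset; the AC hypothesis provides integrable control under $\mathfrak{a}_{\Omega}$. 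Substituting into the spectral decomposition of $\mathbf{d}_{\mathrm{fermi}}^{(\beta,\omega,\lambda)}$ gives (d) by dominated convergence. For (p), plugging the asymptotic eigendata into the explicit quasifree formula (derived via the Wick calculus of \cite{OhmII,OhmIII}) for the Duhamel two-point function $(I_{0,e_{1}},I_{0,e_{1}})_{\mathcal{I},l}^{(\omega)}$ shows that the bond current's off-diagonal structure forces its connected part to vanish at leading order, so $\mu_{\mathrm{p}}^{(\beta,\lambda)}(\mathbb{R})\to 0$.

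The main obstacle is the $\lambda\to\infty$ step for (p): unlike the single-matrix-element estimate sufficient for (d), it requires controlling a triple limit $\lim_{\lambda\to\infty}\lim_{l\to\infty}\mathbb{E}[\,\cdot\,]$ of a macroscopic current-fluctuation quantity. One must combine the almost-sure perturbative diagonalization with uniformity in $l$ and $\omega$ strong enough to exchange these limits, and it is precisely here that absolute continuity of $\mathfrak{a}_{\mathbf{0}}$, rather than mere non-degeneracy of its support, is essential: it provides the probabilistic integrability needed to tame the small denominators $|\omega(x)-\omega(y)|^{-1}$ arising in the perturbation expansion.
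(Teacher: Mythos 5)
There are two genuine gaps, one in each limiting regime, and both concern part (p).

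First, your reduction of (p) to ``$\mu _{\mathrm{p}}^{(\beta ,\lambda )}(\mathbb{R})\rightarrow 0$'' fails for $\lambda \rightarrow 0^{+}$. You identify $\mu _{\mathrm{p}}(\mathbb{R})$ with the full Duhamel norm $([I_{0,e_{1}}],[I_{0,e_{1}}])_{\mathcal{\tilde{H}}_{\mathrm{fl}}}=\Gamma _{1,1}(0)$ and claim that its value at $\lambda =0$ vanishes by Lemma \ref{lemma conductivty4 copy(1)}. But that lemma proves $\mathbf{\sigma }_{\mathrm{p}}^{(\beta ,0)}(t)=\Gamma _{1,1}(t)-\Gamma _{1,1}(0)=0$, which is strictly weaker: since $\cos (t\nu )-1$ vanishes at $\nu =0$, the identity $\mathbf{\sigma }_{\mathrm{p}}(t)=\int (\cos (t\nu )-1)\mu _{\mathrm{p}}(\mathrm{d}\nu )$ is blind to spectral mass at $\nu =0$. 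In fact the explicit computation inside that lemma gives $\Gamma _{1,1}^{(\beta ,0)}(0)=\lim_{l}\int \gamma _{l}(p)\mathfrak{d}_{0}(p)\mathrm{d}^{d}p$, which concentrates at $p=0$ and equals a strictly positive constant times $\mathfrak{d}_{0}(0)>0$. So the total Duhamel mass does \emph{not} tend to zero as $\lambda \rightarrow 0^{+}$; what happens is that the fluctuation vector collapses onto the kernel of the Liouvillian (the ``perfect conductor'' atom at $\nu =0$), and $\mathbf{\sigma }_{\mathrm{p}}^{(\beta ,0)}\equiv 0$ because $\Gamma _{1,1}(t)$ becomes $t$--independent, not because $\Gamma _{1,1}(0)=0$. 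Your bound $|\mathbf{\sigma }_{\mathrm{p}}(t)|\leq 2\mu _{\mathrm{p}}(\mathbb{R})$ is therefore useless here. The paper instead proves the continuity statement $\mathbf{\sigma }_{\mathrm{p}}^{(\beta ,\lambda )}(t)\rightarrow \mathbf{\sigma }_{\mathrm{p}}^{(\beta ,0)}(t)$ directly (Proposition \ref{main 1 copy(5)}), via a finite Duhamel expansion whose $\lambda $--limits are shown to be \emph{uniform in} $l$ (Lemmata \ref{lemma limit cool}--\ref{lemma conductivty4 copy(7)}); note also that the mere uniform bound (\ref{auto-cor+1}) does not by itself license your exchange of $\lim_{l}$ with $\lim_{\lambda }$ --- uniformity in $l$ of the $\lambda $--convergence is exactly what must be, and in the paper is, established.

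Second, your $\lambda \rightarrow \infty $ argument by perturbative diagonalization around the diagonal operator $V_{\omega }$ does not go through as stated. The first--order corrections carry denominators $|\omega (x)-\omega (y)|^{-1}$ for neighboring sites, and for a generic absolutely continuous $\mathfrak{a}_{\mathbf{0}}$ (e.g.\ the uniform density on $[-1,1]$) one has $\mathbb{E}\left[ |\omega (x)-\omega (y)|^{-1}\right] =\infty $, so absolute continuity does \emph{not} ``provide the probabilistic integrability needed to tame the small denominators''; moreover the minimal gap over $\Lambda _{l}$ degenerates almost surely as $l\rightarrow \infty $, which ruins the claimed uniformity in $l$. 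The paper's mechanism is entirely different and avoids denominators altogether: it expands $\mathrm{e}^{-i(t-\nu )(\Delta _{\mathrm{d}}+\lambda V_{\omega })}$ as a Duhamel series in powers of $\Delta _{\mathrm{d}}$ around the pure--potential evolution $\mathrm{e}^{-i(t-\nu )\lambda V_{\omega }}$, uses the i.i.d.\ structure to factorize the expectation into products of single--site characteristic functions $\mathbb{E}\left[ \mathrm{e}^{\pm i\mathfrak{t}\lambda V_{\omega }(x)}\right] $, and invokes the Riemann--Lebesgue lemma (this is the only place absolute continuity enters) to send each such factor to zero. Your part (d) for $\lambda \rightarrow 0^{+}$ is fine and matches the paper, but both halves of part (p), and part (d) at $\lambda \rightarrow \infty $, need to be redone along these lines.
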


\begin{proof}
(p) The assertions follow from Proposition \ref{main 1 copy(5)} and Lemma %
\ref{lemma conductivty4 copy(1)}.

\noindent (d) The corresponding assertions for $\mathbf{\sigma }_{\mathrm{d}%
} $ can be shown by using the same kind of (explicit) computation as for $%
\mathbf{\sigma }_{\mathrm{p}}$ and are even much simpler to prove than for
the paramagnetic case. Indeed, they follow from (\ref{sigmad calcul}) and
direct estimates: To study the limit $\lambda \rightarrow 0^{+}$, use (\ref%
{Duhamel's formula encore}) to get that, for any $\beta ,\lambda \in \mathbb{%
R}^{+}$,
\begin{equation*}
\left\Vert \mathbf{d}_{\mathrm{fermi}}^{(\beta ,\omega ,\lambda )}-\mathbf{d}%
_{\mathrm{fermi}}^{(\beta ,\omega ,0)}\right\Vert _{\mathrm{op}}\leq
\left\Vert \mathrm{e}^{\beta \Delta _{\mathrm{d}}}-\mathrm{e}^{\beta \left(
\Delta _{\mathrm{d}}+\lambda V_{\omega }\right) }\right\Vert _{\mathrm{op}%
}\leq \beta \mathrm{e}^{2d\beta }\left\vert \lambda \right\vert \ .
\end{equation*}%
Under the condition that $\mathfrak{a}_{\mathbf{0}}$ is absolutely
continuous w.r.t. the Lebesgue measure, by a similar but easier computation
using Duhamel expansions as done in Section \ref{Section Asymptotics}, one
verifies that
\begin{equation*}
\underset{\lambda \rightarrow \infty }{\lim }\mathbb{E}\left[ \langle
\mathfrak{e}_{e_1}, \mathbf{d}_{\mathrm{fermi}}^{(\beta ,\omega ,\lambda )}
\mathfrak{e}_0 \rangle \right ] =0\ .
\end{equation*}%
This shows the case $\lambda \to \infty$, by Equation (\ref{sigmad calcul})
\end{proof}

By the second principle of thermodynamics, the fermion system cannot
transfer any energy to the electromagnetic field. In fact, the fermion
system even absorbs, in general, some non--vanishing amount of
electromagnetic energy in form of heat. To explain this, let $\mathcal{S}(%
\mathbb{R\times R}^{d};\mathbb{R}^{d})$ be the Fr\'{e}chet space of Schwartz
functions $\mathbb{R\times R}^{d}\rightarrow \mathbb{R}^{d}$ endowed with
the usual locally convex topology. The electromagnetic potential is here an
element $\mathbf{A}\in C_{0}^{\infty }(\mathbb{R}\times {\mathbb{R}}^{d};{%
\mathbb{R}}^{d})\subset \mathcal{S}(\mathbb{R\times R}^{d};\mathbb{R}^{d}) $
and the electric field equals%
\begin{equation}
E_{\mathbf{A}}(t,x):=-\partial _{t}\mathbf{A}(t,x)\ ,\quad t\in \mathbb{R},\
x\in \mathbb{R}^{d}\ .  \label{V bar 0}
\end{equation}%
Then one gets:

\begin{satz}[Absorption of electromagnetic energy]
\label{main 2 copy(1)}\mbox{
}\newline
Let $\lambda _{0}\in \mathbb{R}^{+}$. Then there is $\beta _{0}\in \mathbb{R}%
^{+}$ such that, for any $\beta \in (0,\beta _{0})$ and $\lambda \in
(\lambda _{0}/2,\lambda _{0})$,%
\begin{equation*}
\mu _{\mathrm{AC}}\left( \mathbb{R}\backslash \{0\}\right) >0\ .
\end{equation*}%
Equivalently, there is a meager set $\mathcal{Z}\subset C_{0}^{\infty }(%
\mathbb{R}\times {\mathbb{R}}^{d};{\mathbb{R}}^{d}) \subset \mathcal{S}(%
\mathbb{R\times R}^{d};\mathbb{R}^{d}) $ such that, for all $\mathbf{A}\in
C_{0}^{\infty }(\mathbb{R}\times {\mathbb{R}}^{d};{\mathbb{R}}%
^{d})\backslash \mathcal{Z}$,%
\begin{equation*}
\int\nolimits_{\mathbb{R}}\mathrm{d}s_{1}\int\nolimits_{\mathbb{R}}\mathrm{d}%
s_{2}\ \mathbf{\Sigma }(s_{1}-s_{2})\int\nolimits_{\mathbb{R}^{d}}\mathrm{d}%
^{d}x\left\langle E_{\mathbf{A}}(s_{2},x),E_{\mathbf{A}}(s_{1},x)\right%
\rangle >0\ .
\end{equation*}
\end{satz}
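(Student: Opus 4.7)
My plan is first to reduce the inequality $\mu _{\mathrm{AC}}(\mathbb{R}\setminus \{0\})>0$ to the non-vanishing of $\mathbf{\sigma }_{\mathrm{p}}^{(\beta ,\lambda )}$ at a single point $t_{0}\neq 0$. By the representation $\mathbf{\sigma }_{\mathrm{p}}(t)=\int_{\mathbb{R}}(\cos (t\nu )-1)\mu _{\mathrm{p}}(\mathrm{d}\nu )$ of Theorem \ref{lemma sigma pos type copy(4)-macro}, combined with the dichotomy noted in the preceding remark (either $\mathbf{\sigma }_{\mathrm{p}}\equiv 0$ or $\mathbf{\sigma }_{\mathrm{p}}(t)<0$ for every $t\in \mathbb{R}\setminus \{0\}$), the condition $\mu _{\mathrm{AC}}(\mathbb{R}\setminus \{0\})>0$ is equivalent to $\mathbf{\sigma }_{\mathrm{p}}(t_{0})\neq 0$ for some (equivalently every) $t_{0}\neq 0$. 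It thus suffices to produce one such $t_{0}$ on the parameter window stated in the theorem.

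To produce such a $t_{0}$, I carry out an explicit expansion of $\mathbf{\sigma }_{\mathrm{p}}^{(\beta ,\lambda )}(t_{0})$ in the small parameter $\beta $ via the Duhamel-sesquilinear-form identity (\ref{cesaro0}). Expanding both the Fermi symbol $\mathbf{d}_{\mathrm{fermi}}^{(\beta ,\omega ,\lambda )}$ from (\ref{Fermi statistic}) and the one-parameter unitary $\mathrm{U}_{t}^{(\omega ,\lambda )}$ from (\ref{rescaled}) in Dyson--Duhamel series around the free dynamics ($\lambda =0$), and taking the disorder average, the i.i.d.\ structure together with the normalization $\mathbb{E}[\omega (0)]=0$ eliminates every term linear in $\omega $ and isolates a leading non-trivial contribution proportional to $\lambda ^{2}$ with an explicit prefactor behaving like $\beta ^{k}$ for some small integer $k\geq 1$. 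The main obstacle is then to verify that this leading term is \emph{strictly negative} at an appropriately chosen $t_{0}\neq 0$, and strictly so uniformly in $\lambda \in [\lambda _{0}/2,\lambda _{0}]$: after disorder averaging (which only involves the single-site variance $\mathbb{E}[\omega (0)^{2}]$) it reduces to an explicit oscillatory Fourier integral over the Brillouin zone $[-\pi ,\pi ]^{d}$ built from the dispersion relation $E(p)$ of (\ref{dispertion relation section2}) and Fermi--Dirac weights, and this integral must be shown not to vanish. Uniform non-vanishing on the compact window $[\lambda _{0}/2,\lambda _{0}]$, together with standard control of the Duhamel remainders, then determines the threshold $\beta _{0}$ and establishes the first assertion.

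For the equivalence with the meager-set statement, Plancherel applied to the heat-production integral, together with the symmetry $\mathbf{\sigma }(t)=\mathbf{\sigma }(-t)$ and the representation
\[
\mathbf{\sigma }(t)=\int_{\mathbb{R}}\cos (t\nu )\,\mu _{\mathbf{\Sigma }}(\mathrm{d}\nu ),
\]
recasts the integral (up to explicit positive constants) as
\[
\int_{\mathbb{R}^{d}}\mathrm{d}^{d}x\int_{\mathbb{R}}\bigl|\widehat{E_{\mathbf{A}}}(\nu ,x)\bigr|^{2}\,\mu _{\mathbf{\Sigma }}(\mathrm{d}\nu ).
\]
Once the first part supplies $\mu _{\mathrm{AC}}(\mathbb{R}\setminus \{0\})>0$, the closed subset $\mathcal{Z}\subset C_{0}^{\infty }(\mathbb{R}\times \mathbb{R}^{d};\mathbb{R}^{d})$ on which this integral vanishes consists of those $\mathbf{A}$ whose electric-field Fourier transform $\widehat{E_{\mathbf{A}}}(\cdot ,x)$ is $\mu _{\mathrm{AC}}$-almost-everywhere zero for almost every $x$; since arbitrarily small $C_{0}^{\infty }$-perturbations can excite any frequency in the support of $\mu _{\mathrm{AC}}$, $\mathcal{Z}$ has empty interior, hence is nowhere dense and a fortiori meager. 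Conversely, if $\mu _{\mathrm{AC}}(\mathbb{R}\setminus \{0\})=0$ then $\mu _{\mathbf{\Sigma }}=\mathbf{\sigma }_{\mathrm{d}}\delta _{0}$ and the heat-production integral reduces to $\mathbf{\sigma }_{\mathrm{d}}\int_{\mathbb{R}^{d}}|\int_{\mathbb{R}}E_{\mathbf{A}}(s,x)\,\mathrm{d}s|^{2}\,\mathrm{d}^{d}x$, which vanishes on the dense subfamily of cyclic electromagnetic potentials satisfying $\int_{\mathbb{R}}E_{\mathbf{A}}(s,\cdot )\,\mathrm{d}s=0$, so that $\mathcal{Z}$ cannot be meager: the two formulations of the theorem are equivalent.
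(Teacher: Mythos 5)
Your overall architecture matches the paper's: a perturbative expansion in small $\beta$ and in the disorder, exploiting the i.i.d.\ structure and $\mathbb{E}[\omega(0)]=0$ to isolate a contribution of order $\lambda^{2}$, followed by a density/meagerness argument to pass from one test field to all fields off a meager set. But there is a genuine gap at the heart of the first half. You reduce the theorem to showing that the leading $\lambda^{2}\beta^{k}$ coefficient of $\mathbf{\sigma}_{\mathrm{p}}(t_{0})$ --- an explicit oscillatory integral over the Brillouin zone --- is nonzero at some $t_{0}\neq 0$, and then you simply state that ``this integral must be shown not to vanish.'' That non-vanishing is exactly the difficult point, and it cannot be taken for granted at a generic fixed $t_{0}$: recall that $\mathbf{\sigma}_{\mathrm{p}}^{(\beta,0)}\equiv 0$ (Lemma \ref{lemma conductivty4 copy(1)}), whose proof already rests on a delicate cancellation of Brillouin-zone integrals, so the sign and non-degeneracy of the $\lambda^{2}$ correction require a real argument. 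The paper avoids the issue: in Lemma \ref{main 2 copy(2)} it expands not only in $\beta,\lambda$ but also in $t$, obtaining $\mathbf{\sigma}_{\mathrm{p}}(t)=-D\lambda^{2}\beta t^{2}+\mathcal{O}(\beta^{2}\lambda)+\mathcal{O}(\beta\lambda|t|^{3})$, where $D>0$ is manifestly positive (essentially $\mathbb{E}[V_{\omega}^{2}]$ times diagonal matrix elements of $F_{\alpha}^{\beta}(\Delta_{\mathrm{d}})$); it then tests against a time-rescaled field $\mathbf{A}^{(T)}$ and uses the identity
\begin{equation*}
-\int_{\mathbb{R}}\mathrm{d}s_{1}\int_{\mathbb{R}}\mathrm{d}s_{2}\,(s_{1}-s_{2})^{2}\mathcal{E}_{s_{2}}(x)\mathcal{E}_{s_{1}}(x)=2\Big(\int_{\mathbb{R}}s\,\mathcal{E}_{s}(x)\,\mathrm{d}s\Big)^{2}\geq 0
\end{equation*}
(valid since $\int_{\mathbb{R}}\mathcal{E}_{s}\,\mathrm{d}s=0$ for compactly supported $\mathbf{A}$), so the leading term of the heat production is a manifestly positive multiple of $\lambda^{2}\beta T^{2}$ that dominates the remainders once $\lambda_{0}T_{0}^{2}\gg\beta_{0},T_{0}^{3}$. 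You should either reproduce this second-order-in-$t$ expansion, whose coefficient has a provably definite sign, or actually evaluate your Brillouin-zone integral; as written, the decisive step is missing.

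Two smaller remarks. Your reduction via the dichotomy (either $\mathbf{\sigma}_{\mathrm{p}}\equiv 0$ or $\mathbf{\sigma}_{\mathrm{p}}(t)<0$ for all $t\neq 0$) is consistent with Theorem \ref{lemma sigma pos type copy(4)-macro} and is fine. In the equivalence part, the claim that small perturbations ``can excite any frequency in the support of $\mu_{\mathrm{AC}}$'' needs the Paley--Wiener argument the paper uses in Lemma \ref{lemma conductivty4 copy(7)2} (the zeros of $\hat{\varphi}$ are isolated, so a slight dilation moves them off any $\nu_{0}$ carrying $\mu_{\mathrm{AC}}$-mass, and the dilated functions converge in $\mathcal{S}$); also note that $\widehat{E_{\mathbf{A}}}(0,x)=0$ for \emph{every} compactly supported $\mathbf{A}$, so in the degenerate case $\mu_{\mathrm{AC}}=0$ the heat production vanishes identically, not merely on a dense subfamily --- your conclusion stands but the justification should be corrected.
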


\begin{proof}
Use Lemmata \ref{main 2 copy(2)} and \ref{lemma conductivty4 copy(7)2}.
\end{proof}

It means that the paramagnetic conductivity $\mathbf{\sigma }_{\mathrm{p}}$
is generally non--zero and thus causes a strictly positive heat production
for non--vanishing electric fields. This is the case of usual conductors.

\section{Technical Proofs\label{Sect tehnical conduc}}

We gather here some technical assertions used to prove Theorems \ref{main 4}%
--\ref{main 2 copy(1)}. We divide the section in two parts. The first
subsection is a study of asymptotic properties of the paramagnetic
conductivity, whereas the second one is a proof that the fermion system
generally absorbs a non--vanishing amount of electromagnetic work in form of
heat.

Before starting our proofs, we recall some definitions used in \cite%
{OhmII,OhmIII}: First, $C_{t+i\alpha }^{(\omega )}$ is the complex--time
two--point correlation function, see \cite[Section 5.1]{OhmIII} for more
details. For all $\beta \in \mathbb{R}^{+}$, $\omega \in \Omega $, $\lambda
\in \mathbb{R}_{0}^{+}$, $t\in {\mathbb{R}}$ and $\alpha \in \lbrack 0,\beta
]$, it equals
\begin{equation}
C_{t+i\alpha }^{(\omega )}(\mathbf{x})=\langle \mathfrak{e}_{x^{(2)}},%
\mathrm{e}^{-it\left( \Delta _{\mathrm{d}}+\lambda V_{\omega }\right)
}F_{\alpha }^{\beta }\left( \Delta _{\mathrm{d}}+\lambda V_{\omega }\right)
\mathfrak{e}_{x^{(1)}}\rangle \ ,\quad \mathbf{x}:=(x^{(1)},x^{(2)})\in
\mathfrak{L}^{2}\ ,  \label{cond two--point correlation function}
\end{equation}%
where the real function $F_{\alpha }^{\beta }$ is defined, for any $\beta
\in \mathbb{R}^{+}$ and $\alpha \in {\mathbb{R}}$, by
\begin{equation}
F_{\alpha }^{\beta }\left( \varkappa \right) :=\frac{\mathrm{e}^{\alpha
\varkappa }}{1+\mathrm{e}^{\beta \varkappa }}\ ,\qquad \varkappa \in {%
\mathbb{R}}\ .  \label{function F}
\end{equation}%
Then we set for any $\beta \in \mathbb{R}^{+}$, $\omega \in \Omega $, $%
\lambda \in \mathbb{R}_{0}^{+}$, $t\in {\mathbb{R}}$, $\alpha \in \lbrack
0,\beta ]$, $\mathbf{x}:=(x^{(1)},x^{(2)})\in \mathfrak{L}^{2}$ and $\mathbf{%
y}:=(y^{(1)},y^{(2)})\in \mathfrak{L}^{2}$,%
\begin{equation}
\mathfrak{C}_{t+i\alpha }^{(\omega )}(\mathbf{x},\mathbf{y})=\underset{\pi
,\pi ^{\prime }\in S_{2}}{\sum }\varepsilon _{\pi }\varepsilon _{\pi
^{\prime }}C_{t+i\alpha }^{(\omega )}(y^{\pi ^{\prime }(1)},x^{\pi
(1)})C_{-t+i(\beta -\alpha )}^{(\omega )}(x^{\pi (2)},y^{\pi ^{\prime
}(2)})\ ,  \label{map coolbis}
\end{equation}%
compare with \cite[Eq. (93)]{OhmII}. Here, $\pi ,\pi ^{\prime }\in S_{2}$
are by definition permutations of $\{1,2\}$ with signatures $\varepsilon
_{\pi },\varepsilon _{\pi ^{\prime }}\in \{-1,1\}$. In \cite[Eq. (141)]%
{OhmIII} we define the function
\begin{equation}
\Gamma _{1,1}(t):=\underset{l\rightarrow \infty }{\lim }\frac{1}{\left\vert
\Lambda _{l}\right\vert }\sum\limits_{x,y\in \Lambda _{l}}\mathbb{E}\left[
\int\nolimits_{0}^{\beta }\mathfrak{C}_{t+i\alpha }^{(\omega
)}(x,x-e_{1},y,y-e_{1})\mathrm{d}\alpha \right]  \label{positivity eq2}
\end{equation}%
and, by \cite[Eq. (147)]{OhmIII}, observe that%
\begin{equation}
\mathbf{\sigma }_{\mathrm{p}}(t)=\Gamma _{1,1}(t)-\Gamma _{1,1}(0)
\label{positivity eq1}
\end{equation}%
for any $\beta \in \mathbb{R}^{+}$, $\lambda \in \mathbb{R}_{0}^{+}$ and $%
t\in \mathbb{R}$. Now we are ready to prove Theorems \ref{main 4} and \ref%
{main 2 copy(1)}.

\subsection{Asymptotics of Paramagnetic Conductivity\label{Section
Asymptotics}}

Here we study the asymptotic properties of the paramagnetic conductivity $%
\mathbf{\sigma }_{\mathrm{p}}$, as $\lambda \rightarrow 0^{+}$ and $\lambda
\rightarrow \infty $. In other words, we prove Theorem \ref{main 4} (p). We
break this proof in several lemmata and one proposition.

By (\ref{positivity eq1}) and \cite[Lemma 5.16]{OhmIII}, for any $%
\varepsilon ,\beta \in \mathbb{R}^{+}$, $\lambda \in \mathbb{R}_{0}^{+}$ and
$\upsilon \in (0,\beta /2)$,%
\begin{equation}
\mathbf{\sigma }_{\mathrm{p}}\left( t\right) =4d\left( \tilde{\Gamma}%
_{\upsilon ,\varepsilon ,1,1}(t)-\tilde{\Gamma}_{\upsilon ,\varepsilon
,1,1}(0)\right) +\mathcal{O}(\upsilon )+\mathcal{O}_{\upsilon }(\varepsilon
)\ ,  \label{important estimate conduc}
\end{equation}%
uniformly for times $t$ in compact sets. The term of order $\mathcal{O}%
_{\upsilon }(\varepsilon )$ vanishes when $\varepsilon \rightarrow 0^{+}$
for any fixed $\upsilon \in (0,\beta /2)$. By \cite[Eqs. (139) and (142)]%
{OhmIII},
\begin{equation}
\tilde{\Gamma}_{\upsilon ,\varepsilon ,1,1}(t)=\underset{l\rightarrow \infty
}{\lim }\frac{1}{\left\vert \Lambda _{l}\right\vert }\sum\limits_{x,y\in
\Lambda _{l}}\mathbb{E}\left[ \int\nolimits_{\upsilon }^{\beta -\upsilon }%
\mathfrak{B}_{t+i\alpha ,\upsilon ,\varepsilon }^{(\omega
)}(x,x-e_{1},y,y-e_{1})\mathrm{d}\alpha \right] <\infty
\label{gamma tilde cool}
\end{equation}%
for all $\varepsilon ,\beta \in \mathbb{R}^{+}$, $\lambda \in \mathbb{R}%
_{0}^{+}$, $t\in \mathbb{R}$ and $\upsilon \in (0,\beta /2)$, with%
\begin{multline*}
\mathfrak{B}_{t+i\alpha ,\upsilon ,\varepsilon }^{(\omega )}(\mathbf{x},%
\mathbf{y}):=\underset{\pi ,\pi ^{\prime }\in S_{2}}{\sum }\varepsilon _{\pi
}\varepsilon _{\pi ^{\prime }}B_{t+i\alpha ,\upsilon ,\varepsilon }^{(\omega
)}(y^{\pi ^{\prime }(1)},x^{\pi (1)}) \\
\times B_{-t+i(\beta -\alpha ),\upsilon ,\varepsilon }^{(\omega )}(x^{\pi
(2)},y^{\pi ^{\prime }(2)})
\end{multline*}%
and
\begin{equation}
B_{t+i\alpha ,\upsilon ,\varepsilon }^{(\omega )}\left( \mathbf{x}\right)
=\int\nolimits_{|\nu |<M_{\beta ,\upsilon ,\varepsilon }}\hat{F}_{\alpha
}^{\beta }\left( \nu \right) \langle \mathfrak{e}_{x^{(2)}},\mathrm{e}%
^{-i\left( t-\nu \right) \left( \Delta _{\mathrm{d}}+\lambda V_{\omega
}\right) }\mathfrak{e}_{x^{(1)}}\rangle \mathrm{d}\nu  \label{B rewritten}
\end{equation}%
for any $\mathbf{x}:=(x^{(1)},x^{(2)})\in \mathfrak{L}^{2}$ and $\mathbf{y}%
:=(y^{(1)},y^{(2)})\in \mathfrak{L}^{2}$. Here, $M_{\beta ,\upsilon
,\varepsilon }$ is a constant only depending on $\beta ,\upsilon
,\varepsilon $ and $\hat{F}_{\alpha }^{\beta }$ is the Fourier transform of
the function $F_{\alpha }^{\beta }$ (\ref{function F}). See \cite[Eq. (87)]%
{OhmIII}.

Thus, by (\ref{important estimate conduc}), it suffices to obtain the
asymptotics $\lambda \rightarrow 0^{+}$ and $\lambda \rightarrow \infty $ of
the function $\tilde{\Gamma}_{\upsilon ,\varepsilon ,1,1}$. To this end we
use the finite sum approximation%
\begin{multline*}
\xi _{\nu ,t,N}^{(\omega ,\lambda )}:=\mathrm{e}^{-i\left( t-\nu \right)
\lambda V_{\omega }}+\sum\limits_{n=1}^{N-1}(-i)^{n}\int_{\nu }^{t}\mathrm{d}%
\nu _{1}\int_{\nu }^{\nu _{1}}\mathrm{d}\nu _{2}\cdots \int_{\nu }^{\nu
_{n-1}}\mathrm{d}\nu _{n}\ \mathrm{e}^{-i\left( t-\nu _{1}\right) \lambda
V_{\omega }}\Delta _{\mathrm{d}} \\
\times \mathrm{e}^{-i\left( \nu _{1}-\nu _{2}\right) \lambda V_{\omega
}}\Delta _{\mathrm{d}}\mathrm{e}^{-i\left( \nu _{2}-\nu _{3}\right) \lambda
V_{\omega }}\cdots \mathrm{e}^{-i\left( \nu _{n-1}-\nu _{n}\right) \lambda
V_{\omega }}\Delta _{\mathrm{d}}\mathrm{e}^{-i\left( \nu _{n}-\nu \right)
\lambda V_{\omega }}\
\end{multline*}%
of the unitary operator $\mathrm{e}^{-i\left( t-\nu \right) \left( \Delta _{%
\mathrm{d}}+\lambda V_{\omega }\right) }$ for any $\omega \in \Omega $, $%
\lambda \in \mathbb{R}_{0}^{+}$, $N\in \mathbb{N}$ and $\nu ,t\in \mathbb{R}$%
. Indeed, using Duhamel's formula one gets that
\begin{equation}
\underset{N\rightarrow \infty }{\lim }\left\Vert \xi _{\nu ,t,N}^{(\omega
,\lambda )}-\mathrm{e}^{-i\left( t-\nu \right) \left( \Delta _{\mathrm{d}%
}+\lambda V_{\omega }\right) }\right\Vert _{\mathrm{op}}=0
\label{B rewrittenbis}
\end{equation}%
uniformly for $\omega \in \Omega $, $\lambda \in \mathbb{R}_{0}^{+}$, $\nu
\in \lbrack -M_{\beta ,\upsilon ,\varepsilon },M_{\beta ,\upsilon
,\varepsilon }]$ and times $t$ in compact sets. Hence, we replace $\mathrm{e}%
^{-i\left( t-\nu \right) \left( \Delta _{\mathrm{d}}+\lambda V_{\omega
}\right) }$ in (\ref{B rewritten}) by its approximation $\xi _{\nu
,t,N}^{(\omega ,\lambda )}$ and define%
\begin{equation}
\tilde{B}_{t+i\alpha ,\upsilon ,\varepsilon ,N}^{(\omega ,\lambda )}\left(
\mathbf{x}\right) :=\int\nolimits_{|\nu |<M_{\beta ,\upsilon ,\varepsilon }}%
\hat{F}_{\alpha }^{\beta }\left( \nu \right) \langle \mathfrak{e}%
_{x^{(2)}},\xi _{\nu ,t,N}^{(\omega ,\lambda )}\mathfrak{e}_{x^{(1)}}\rangle
\mathrm{d}\nu  \label{B rewrittenbisibis}
\end{equation}%
as well as
\begin{multline*}
\mathfrak{\tilde{B}}_{t+i\alpha ,\upsilon ,\varepsilon ,N}^{(\omega ,\lambda
)}(\mathbf{x},\mathbf{y}):=\underset{\pi ,\pi ^{\prime }\in S_{2}}{\sum }%
\varepsilon _{\pi }\varepsilon _{\pi ^{\prime }}\tilde{B}_{t+i\alpha
,\upsilon ,\varepsilon ,N}^{(\omega )}(y^{\pi ^{\prime }(1)},x^{\pi (1)}) \\
\times \tilde{B}_{-t+i(\beta -\alpha ),\upsilon ,\varepsilon ,N}^{(\omega
)}(x^{\pi (2)},y^{\pi ^{\prime }(2)})
\end{multline*}%
for any $\varepsilon ,\beta \in \mathbb{R}^{+}$, $\omega \in \Omega $, $%
\lambda \in \mathbb{R}_{0}^{+}$, $\mathbf{x}:=(x^{(1)},x^{(2)})\in \mathfrak{%
L}^{2}$ and $\mathbf{y}:=(y^{(1)},y^{(2)})\in \mathfrak{L}^{2}$. Indeed, one
has:

\begin{lemma}[Finite sum approximation]
\label{lemma limit cool}\mbox{
}\newline
Let $\varepsilon ,\beta \in \mathbb{R}^{+}$, $t\in \mathbb{R}$ and $\upsilon
\in (0,\beta /2)$. Then,%
\begin{multline*}
\underset{N\rightarrow \infty }{\lim }\frac{1}{\left\vert \Lambda
_{l}\right\vert }\sum\limits_{x,y\in \Lambda _{l}}\int\nolimits_{\upsilon
}^{\beta -\upsilon }\left\vert \mathfrak{B}_{t+i\alpha ,\upsilon
,\varepsilon }^{(\omega )}(x,x-e_{1},y,y-e_{1})\right. \\
\left. -\mathfrak{\tilde{B}}_{t+i\alpha ,\upsilon ,\varepsilon ,N}^{(\omega
,\lambda )}(x,x-e_{1},y,y-e_{1})\right\vert \mathrm{d}\alpha =0
\end{multline*}%
uniformly for $l\in \mathbb{R}^{+}$, $\omega \in \Omega $ and $\lambda \in
\mathbb{R}_{0}^{+}$.
\end{lemma}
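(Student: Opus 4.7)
The plan is to combine the uniform-in-$\lambda$ operator-norm estimate (\ref{B rewrittenbis}) with the nearest-neighbor structure of $\Delta_{\mathrm{d}}$ to turn the Duhamel approximation into pointwise matrix-element bounds that are summable over $\mathfrak{L}$. First, I would observe that the $n$-th term of the series defining $\xi_{\nu,t,N}^{(\omega,\lambda)}$ has operator norm at most $(4d|t-\nu|)^{n}/n!$, since the diagonal factors $\mathrm{e}^{-i\tau\lambda V_{\omega}}$ are unitary and $\|\Delta_{\mathrm{d}}\|_{\mathrm{op}}\leq 4d$. Moreover, because $\Delta_{\mathrm{d}}$ is nearest-neighbor and $V_{\omega}$ is diagonal, the matrix element $\langle\mathfrak{e}_{a},\,\cdot\,\mathfrak{e}_{b}\rangle$ of this $n$-th term vanishes whenever $|a-b|>n$. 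Combining these two facts would yield, uniformly in $\omega\in\Omega$, $\lambda\in\mathbb{R}_{0}^{+}$, $\nu\in[-M_{\beta,\upsilon,\varepsilon},M_{\beta,\upsilon,\varepsilon}]$, and $t$ in any fixed compact set, the pointwise estimate $|\langle\mathfrak{e}_{a},(\xi_{\nu,t,N}^{(\omega,\lambda)}-\mathrm{e}^{-i(t-\nu)(\Delta_{\mathrm{d}}+\lambda V_{\omega})})\mathfrak{e}_{b}\rangle|\leq\sum_{n\geq\max(|a-b|,N)}(4dC)^{n}/n!$, together with the analogous bound (with $n\geq|a-b|$ only) for each of the two unitaries separately.

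Second, integrating against $|\hat{F}_{\alpha}^{\beta}(\nu)|$, which is bounded uniformly for $\alpha\in[\upsilon,\beta-\upsilon]$ on the compact $\nu$-interval, I would deduce uniform (in $\omega,\lambda,\alpha$) pointwise bounds $|B_{t+i\alpha,\upsilon,\varepsilon}^{(\omega)}(a,b)|\leq f(|a-b|)$ and $|B_{t+i\alpha,\upsilon,\varepsilon}^{(\omega)}(a,b)-\tilde{B}_{t+i\alpha,\upsilon,\varepsilon,N}^{(\omega,\lambda)}(a,b)|\leq g_{N}(|a-b|)$, where $f$ is a bounded $\ell^{1}(\mathfrak{L})$ function and $\{g_{N}\}_{N}$ is a sequence in $\ell^{1}(\mathfrak{L})$ with $\|g_{N}\|_{\ell^{1}(\mathfrak{L})}\to 0$ as $N\to\infty$. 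The latter follows from the fact that $g_{N}(k)$ decays super-exponentially in $\max(k,N)$, so its $\ell^{1}$-norm (with the $k^{d-1}$ lattice weight) is dominated by the tail of a convergent series.

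Third, I would expand $\mathfrak{B}-\mathfrak{\tilde{B}}_{N}$ as the finite sum (over permutations $\pi,\pi'\in S_{2}$) of differences of products $B(p,q)B(r,s)-\tilde{B}(p,q)\tilde{B}(r,s)$, each split via $ab-\tilde{a}\tilde{b}=a(b-\tilde{b})+(a-\tilde{a})\tilde{b}$. For $\mathbf{x}=(x,x-e_{1})$ and $\mathbf{y}=(y,y-e_{1})$, the matrix-element arguments $(p,q)$ and $(r,s)$ in every permutation satisfy $|p-q|,|r-s|=|x-y|+\mathcal{O}(1)$. After the substitution $z=x-y$, the Cesàro average $|\Lambda_{l}|^{-1}\sum_{x,y\in\Lambda_{l}}|\mathfrak{B}-\mathfrak{\tilde{B}}_{N}|(x,x-e_{1},y,y-e_{1})$ is bounded by a constant depending only on $d$ times $\|f\|_{\ell^{\infty}}\|g_{N}\|_{\ell^{1}(\mathfrak{L})}$, uniformly in $l\in\mathbb{R}^{+}$, $\omega\in\Omega$, and $\lambda\in\mathbb{R}_{0}^{+}$. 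Integrating over the compact interval $[\upsilon,\beta-\upsilon]$ in $\alpha$ preserves this uniformity and delivers the claim.

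The main obstacle will be the uniformity in $\lambda$, since $\|\Delta_{\mathrm{d}}+\lambda V_{\omega}\|_{\mathrm{op}}$ grows linearly in $\lambda$ and any crude operator-norm estimate of the approximant would be useless. The resolution is structural: in the Duhamel expansion defining $\xi_{\nu,t,N}^{(\omega,\lambda)}$, the disorder enters only through the unitary factors $\mathrm{e}^{-i\tau\lambda V_{\omega}}$, which have norm one, while both the operator-norm bound and the spatial reach of each term are governed solely by the nearest-neighbor operator $\Delta_{\mathrm{d}}$, which is $\lambda$-independent.
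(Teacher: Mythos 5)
Your argument is correct and follows essentially the same route as the paper's: the paper likewise controls the difference between (\ref{B rewritten}) and (\ref{B rewrittenbisibis}) via the Duhamel remainder estimate (\ref{B rewrittenbis}) together with the absolute integrability of $\hat{F}_{\alpha }^{\beta }$ on $[\upsilon ,\beta -\upsilon ]\times \lbrack -M_{\beta ,\upsilon ,\varepsilon },M_{\beta ,\upsilon ,\varepsilon }]$, deferring the remaining details to the analogous argument in \cite[Lemma 5.11]{OhmIII}. Your explicit use of the nearest--neighbour structure of $\Delta _{\mathrm{d}}$ to obtain $\ell ^{1}$--decay in $|x-y|$ (which is exactly what makes the Ces\`{a}ro double sum uniform in $l$, where a bare operator--norm bound would not suffice) and your observation that the disorder enters only through norm--one diagonal unitaries (giving uniformity in $\lambda $) correctly supply precisely those omitted details.
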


\begin{proof}
The map $\left( \alpha ,\nu \right) \mapsto \hat{F}_{\alpha }^{\beta }\left(
\nu \right) $ is absolutely integrable in
\begin{equation*}
\left( \alpha ,\nu \right) \in \left[ \upsilon ,\beta -\upsilon \right]
\times \lbrack -M_{\beta ,\upsilon ,\varepsilon },M_{\beta ,\upsilon
,\varepsilon }]
\end{equation*}%
for any $\varepsilon ,\beta \in \mathbb{R}^{+}$ and $\upsilon \in (0,\beta
/2)$. Therefore, the assertion is directly proven by using (\ref{B
rewrittenbis}) to compute the difference between (\ref{B rewritten}) and (%
\ref{B rewrittenbisibis}). We omit the details. See similar arguments to the
proof of \cite[Lemma 5.11]{OhmIII}.
\end{proof}

As a consequence, we only need to bound, for any $\varepsilon ,\beta \in
\mathbb{R}^{+}$, $\upsilon \in (0,\beta /2)$, and $l,N\in \mathbb{N}$, the
function
\begin{equation*}
\mathfrak{q}_{\upsilon ,\varepsilon ,N,l}^{(\beta ,\omega ,\lambda )}\left(
t\right) :=\frac{1}{\left\vert \Lambda _{l}\right\vert }\sum\limits_{x,y\in
\Lambda _{l}}\mathbb{E}\left[ \int\nolimits_{\upsilon }^{\beta -\upsilon }%
\mathfrak{\tilde{B}}_{t+i\alpha ,\upsilon ,\varepsilon ,N}^{(\omega ,\lambda
)}(x,x-e_{1},y,y-e_{1})\mathrm{d}\alpha \right] ,
\end{equation*}%
as $\lambda \rightarrow 0^{+}$ and $\lambda \rightarrow \infty $, uniformly
for all $l\in \mathbb{R}^{+}$.

\begin{lemma}[Asymptotics of the finite sum approximation]
\label{lemma conductivty4 copy(7)}\mbox{
}\newline
Let $\varepsilon ,\beta \in \mathbb{R}^{+}$, $\lambda \in \mathbb{R}_{0}^{+}$%
, $t\in \mathbb{R}$, $\upsilon \in (0,\beta /2)$, and $N\in \mathbb{N}$.
Then,%
\begin{equation*}
\underset{\lambda \rightarrow 0}{\lim }\ \mathbb{E}\left[ \mathfrak{q}%
_{\upsilon ,\varepsilon ,N,l}^{(\beta ,\omega ,\lambda )}(t)\right] =\mathbb{%
E}\left[ \mathfrak{q}_{\upsilon ,\varepsilon ,N,l}^{(\beta ,\omega ,0)}(t)%
\right]
\end{equation*}%
uniformly for $l\in \mathbb{R}^{+}$. If the probability measure $\mathfrak{a}%
_{\mathbf{0}}$ is in addition absolutely continuous w.r.t. the Lebesgue
measure then
\begin{equation*}
\underset{\lambda \rightarrow \infty }{\lim }\mathbb{E}\left[ \mathfrak{q}%
_{\upsilon ,\varepsilon ,N,l}^{(\beta ,\omega ,\lambda )}(t)\right] =0
\end{equation*}%
uniformly for $l\in \mathbb{R}^{+}$.
\end{lemma}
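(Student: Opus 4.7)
The plan is to exploit the fact that $\xi^{(\omega,\lambda)}_{\nu,t,N}$ is by construction a \emph{finite} sum of $N$ terms, each an alternating product of the nearest-neighbour operator $\Delta_{\mathrm{d}}$ and diagonal phases $e^{-i\tau\lambda V_\omega}$. In the canonical basis, the matrix element $\langle\mathfrak{e}_y,\xi^{(\omega,\lambda)}_{\nu,t,N}\mathfrak{e}_x\rangle$ therefore vanishes whenever $\Vert x-y\Vert_1>N-1$ and is bounded by a constant $D=D(N,\beta,\upsilon,\varepsilon,t)$, uniformly in $\omega,\lambda,x,y$ and in $\nu\in[-M_{\beta,\upsilon,\varepsilon},M_{\beta,\upsilon,\varepsilon}]$. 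After integrating against the absolutely integrable kernel $\hat{F}^\beta_\alpha(\nu)$ on this compact $\nu$-range and over $\alpha\in[\upsilon,\beta-\upsilon]$, these two properties transfer to $\tilde{B}^{(\omega,\lambda)}_{t+i\alpha,\upsilon,\varepsilon,N}$ and, by its product form, to $\mathfrak{\tilde{B}}^{(\omega,\lambda)}_{t+i\alpha,\upsilon,\varepsilon,N}$. Consequently, for each $x\in\Lambda_l$ only $\mathcal{O}(N^d)$ values of $y$ yield a nonzero integrand, and the average $\frac{1}{|\Lambda_l|}\sum_{x,y\in\Lambda_l}$ is bounded by a constant $D'$ independent of $l,\omega,\lambda$. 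This uniform bound is the key to the announced uniformity in $l$ for both parts of the lemma.

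Part (i) then follows from dominated convergence. For every fixed $\alpha$ and every fixed value of the integration variables in the Duhamel expansion, and for each of the $N$ terms of the finite sum defining $\xi^{(\omega,\lambda)}_{\nu,t,N}$, one has $e^{-i\tau\lambda V_\omega}\to\mathrm{Id}$ in operator norm as $\lambda\to 0^{+}$. Hence the integrand of $\mathbb{E}[\mathfrak{q}^{(\beta,\omega,\lambda)}_{\upsilon,\varepsilon,N,l}(t)]$ converges pointwise to its $\lambda=0$ counterpart, and the uniform bound $D'$ (together with the finiteness of the relevant integration domains) allows us to exchange the limit with $\mathbb{E}$, the $\alpha$-integration and the sum over $\Lambda_l$.

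For part (ii) I would expand $\mathfrak{\tilde{B}}^{(\omega,\lambda)}_{t+i\alpha,\upsilon,\varepsilon,N}(x,x-e_1,y,y-e_1)$ as a finite linear combination, indexed by pairs of lattice paths of length $<N$ joining the prescribed endpoints, of absolutely convergent oscillatory integrals of the schematic form
\begin{equation*}
\int\mathrm{d}\boldsymbol{\nu}\,K(\boldsymbol{\nu})\prod_{z\in\mathcal{P}}\exp\bigl(-i\lambda\,s_z(\boldsymbol{\nu})\,\omega(z)\bigr)\,,
\end{equation*}
where $\mathcal{P}$ is the set of sites visited by the union of the two paths and $s_z(\boldsymbol{\nu})$ is the signed net time spent at $z$. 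Inspection of the Duhamel expansion shows that every $s_z$ is a nontrivial linear form in $\boldsymbol{\nu}$. The i.i.d.\ structure of $\mathfrak{a}_\Omega$ then gives
\begin{equation*}
\mathbb{E}\Bigl[\prod_{z\in\mathcal{P}}e^{-i\lambda s_z(\boldsymbol{\nu})\omega(z)}\Bigr]=\prod_{z\in\mathcal{P}}\hat{\mathfrak{a}}_{\mathbf{0}}\bigl(\lambda\,s_z(\boldsymbol{\nu})\bigr)\,,
\end{equation*}
with $|\hat{\mathfrak{a}}_{\mathbf{0}}|\le 1$ everywhere. Under the absolute-continuity hypothesis, the Riemann--Lebesgue lemma yields $\hat{\mathfrak{a}}_{\mathbf{0}}(\lambda s)\to 0$ as $\lambda\to\infty$ for every $s\neq 0$. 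Since $\{\boldsymbol{\nu}:s_z(\boldsymbol{\nu})=0\}$ is a proper affine subspace for each $z\in\mathcal{P}$, it has Lebesgue measure zero, so for almost every $\boldsymbol{\nu}$ the product above tends to zero as $\lambda\to\infty$. Dominated convergence against the integrable kernel $K$ gives that each oscillatory integral vanishes in the limit; summing over the finitely many path pairs and using the uniform finite-range bound from the first paragraph yields the uniform limit in $l$.

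The main obstacle is the bookkeeping needed to check that every $s_z$ is indeed a nontrivial linear form in the integration variables. This relies on the explicit structure of the Duhamel expansion, where each intermediate site in a path carries a phase with argument of the form $\nu_j-\nu_{j+1}$, and the endpoint sites carry arguments involving $t$ and $\nu$; a site visited several times receives a sum of such differences, which remains nontrivial. The absolute continuity of $\mathfrak{a}_\mathbf{0}$ is essential only for part (ii): without it, $\hat{\mathfrak{a}}_\mathbf{0}$ need not decay at infinity and the $\lambda\to\infty$ argument breaks down, which accounts for the additional assumption in the statement.
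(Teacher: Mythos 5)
Your argument follows essentially the same route as the paper: develop the Duhamel approximant $\xi^{(\omega,\lambda)}_{\nu,t,N}$ into lattice paths, use the product structure of $\mathfrak{a}_{\Omega}$ to factorize the expectation of the resulting random phases over the distinct visited sites into characteristic functions $\hat{\mathfrak{a}}_{\mathbf{0}}(\lambda s_{z}(\boldsymbol{\nu}))$, observe that each exponent $s_{z}$ is a nontrivial affine form in the integration variables and hence vanishes only on a Lebesgue--null set, and conclude by Riemann--Lebesgue plus dominated convergence for $\lambda \rightarrow \infty$, and for $\lambda \rightarrow 0^{+}$ by the convergence of the individual factors to $1$ (your operator--norm estimate $\Vert \mathrm{e}^{-i\tau \lambda V_{\omega }}-\mathrm{Id}\Vert _{\mathrm{op}}\leq |\tau |\lambda$ is a clean, $\omega$--uniform way to get this). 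Your bookkeeping of the net times $s_{z}$ is in fact more explicit than the paper's, which simply asserts that the accumulated times $\mathfrak{t}_{j}$ at the distinct sites are strictly positive on the ordered integration simplex.

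The one step that is not justified as written is the uniformity in $l$. A uniform bound $D^{\prime }$ on the average $\frac{1}{|\Lambda _{l}|}\sum_{x,y}$ does not by itself upgrade termwise limits to a limit uniform in $l$: an average of $\mathcal{O}(|\Lambda _{l}|)$ uniformly bounded terms, each tending to zero, need not tend to zero uniformly in $l$ unless either the termwise convergence is uniform in $(x,y)$ or the number of genuinely distinct terms is $l$--independent. For $\lambda \rightarrow 0^{+}$ your operator--norm estimate is uniform in $(x,y,\omega )$, so you are safe; but for $\lambda \rightarrow \infty $ the Riemann--Lebesgue decay of $\hat{\mathfrak{a}}_{\mathbf{0}}(\lambda s)$ is not uniform in $s$ near $s=0$, so you must additionally invoke the translation invariance of $\mathfrak{a}_{\Omega }$: after taking $\mathbb{E}$ the summand depends only on $x-y$, which by your finite--range observation takes only finitely many ($l$--independent) values, the entire $l$--dependence being confined to a counting weight bounded by $1$. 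This is precisely how the paper secures the uniformity (its factor $\sum_{z\in \Lambda _{l}}\mathbf{1}[x+z\in \Lambda _{l}]/|\Lambda _{l}|$), and it is a one--line repair since you already use the i.i.d.\ structure elsewhere in your argument.
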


\begin{proof}
The function $\mathfrak{q}_{\upsilon ,\varepsilon ,N,l}^{(\beta ,\omega
,\lambda )}(t)$ is a finite sum of terms of the form
\begin{eqnarray*}
&&\frac{(-i)^{n_{1}+n_{2}}}{\left\vert \Lambda _{l}\right\vert }%
\sum\limits_{x,y\in \Lambda _{l}}\underset{\pi ,\pi ^{\prime }\in S_{2}}{%
\sum }\varepsilon _{\pi }\varepsilon _{\pi ^{\prime
}}\int\nolimits_{\upsilon }^{\beta -\upsilon }\mathrm{d}\alpha
\int\nolimits_{|\nu |<M_{\beta ,\upsilon ,\varepsilon }}\mathrm{d}\nu
\int\nolimits_{|u|<M_{\beta ,\upsilon ,\varepsilon }}\mathrm{d}u \\
&&\int_{\nu }^{t}\mathrm{d}\nu _{1}\cdots \int_{\nu }^{\nu _{n_{1}-1}}%
\mathrm{d}\nu _{n_{1}}\int_{u}^{-t}\mathrm{d}u_{1}\cdots
\int_{u}^{u_{n_{2}-1}}\mathrm{d}u_{n_{2}}\ \hat{F}_{\alpha }^{\beta }\left(
\nu \right) \hat{F}_{\beta -\alpha }^{\beta }\left( u\right) \\
&&\times \langle \mathfrak{e}_{x_{\pi (1)}},\mathrm{e}^{-i(t-\nu
_{1})\lambda V_{\omega }}\Delta _{\mathrm{d}}\mathrm{e}^{-i(\nu _{1}-\nu
_{2})\lambda V_{\omega }}\Delta _{\mathrm{d}}\cdots \\
&&\qquad \qquad \qquad \qquad \qquad \qquad \quad \cdots \mathrm{e}%
^{-i\left( \nu _{n_{1}-1}-\nu _{n_{1}}\right) \lambda V_{\omega }}\Delta _{%
\mathrm{d}}\mathrm{e}^{-i(\nu _{n_{1}}-\nu )\lambda V_{\omega }}\mathfrak{e}%
_{y_{\pi ^{\prime }(1)}}\rangle \\
&&\times \langle \mathfrak{e}_{y_{\pi ^{\prime }(2)}},\mathrm{e}%
^{-i(-t-u_{1})\lambda V_{\omega }}\Delta _{\mathrm{d}}\mathrm{e}%
^{-i(u_{1}-u_{2})\lambda V_{\omega }}\Delta _{\mathrm{d}}\cdots \\
&&\qquad \qquad \qquad \qquad \qquad \qquad \qquad \cdots \mathrm{e}%
^{i\left( u_{n_{2}-1}-u_{n_{2}}\right) \lambda V_{\omega }}\Delta _{\mathrm{d%
}}\mathrm{e}^{-i(u_{n_{2}}-u)\lambda V_{\omega }}\mathfrak{e}_{x_{\pi
(2)}}\rangle
\end{eqnarray*}%
for $n_{1},n_{2}\in \mathbb{N}_{0}\cap \left[ 0,N\right] $. Here, $%
(x_{1},x_{2}):=(x,x-e_{1})$, $(y_{1},y_{2}):=(y,y-e_{1})$. [By abuse of
notation, the case $n_{1}=0$ or $n_{2}=0$ means that there is no integral
but a term $\mathrm{e}^{-i(t-\nu )\lambda V_{\omega }}$ inside the
corresponding scalar product.] From this and the translation invariance of
the probability measure $\mathfrak{a}_{\Omega }$, we get that $\mathbb{E}[%
\mathfrak{q}_{\upsilon ,\varepsilon ,N,l}^{(\beta ,\omega ,\lambda )}(t)]$
is a finite sum of terms of the form
\begin{eqnarray}
&&(-i)^{n_{1}+n_{2}}\sum\limits_{x\in \mathfrak{L}}\underset{\pi ,\pi
^{\prime }\in S_{2}}{\sum }\varepsilon _{\pi }\varepsilon _{\pi ^{\prime
}}\int\nolimits_{\upsilon }^{\beta -\upsilon }\mathrm{d}\alpha
\int\nolimits_{|\nu |<M_{\beta ,\upsilon ,\varepsilon }}\mathrm{d}\nu
\int\nolimits_{|u|<M_{\beta ,\upsilon ,\varepsilon }}\mathrm{d}u
\label{ohm idiot1} \\
&&\int_{\nu }^{t}\mathrm{d}\nu _{1}\cdots \int_{\nu }^{\nu _{n_{1}-1}}%
\mathrm{d}\nu _{n_{1}}\int_{u}^{-t}\mathrm{d}u_{1}\cdots
\int_{u}^{u_{n_{2}-1}}\mathrm{d}u_{n_{2}}\hat{F}_{\alpha }^{\beta }\left(
\nu \right) \hat{F}_{\beta -\alpha }^{\beta }\left( u\right)  \notag \\
&& \sum\limits_{z \in \Lambda_l} \frac{\mathbf{1}[x+z \in \Lambda _{l}]}{%
|\Lambda_l |}\mathbb{E}%
%TCIMACRO{\TeXButton{\Big[}{\Big[}}%
%BeginExpansion
\Big[%
%EndExpansion
\langle \mathfrak{e}_{x_{\pi (1)}},\mathrm{e}^{-i(t-\nu _{1})\lambda
V_{\omega }}\Delta _{\mathrm{d}}\mathrm{e}^{-i(\nu _{1}-\nu _{2})\lambda
V_{\omega }}\Delta _{\mathrm{d}}\cdots  \notag \\
&&\qquad \qquad \qquad \qquad \qquad \qquad \quad \cdots \mathrm{e}%
^{-i\left( \nu _{n_{1}-1}-\nu _{n_{1}}\right) \lambda V_{\omega }}\Delta _{%
\mathrm{d}}\mathrm{e}^{-i(\nu _{n_{1}}-\nu )\lambda V_{\omega }}\mathfrak{e}%
_{y_{\pi ^{\prime }(1)}}\rangle  \notag \\
&&\times \langle \mathfrak{e}_{y_{\pi ^{\prime }(2)}},\mathrm{e}%
^{-i(-t-u_{1})\lambda V_{\omega }}\Delta _{\mathrm{d}}\mathrm{e}%
^{-i(u_{1}-u_{2})\lambda V_{\omega }}\Delta _{\mathrm{d}}\cdots  \notag \\
&&\qquad \qquad \qquad \qquad \qquad \qquad \quad \cdots \mathrm{e}%
^{-i\left( u_{n_{2}-1}-u_{n_{2}}\right) \lambda V_{\omega }}\Delta _{\mathrm{%
d}}\mathrm{e}^{-i(u_{n_{2}}-u)\lambda V_{\omega }}\mathfrak{e}_{x_{\pi
(2)}}\rangle
%TCIMACRO{\TeXButton{\Big]}{\Big]}}%
%BeginExpansion
\Big]%
%EndExpansion
\text{ },  \notag
\end{eqnarray}%
where $(x_{1},x_{2}):=(x,x-e_{1})$, $(y_{1},y_{2}):=(0,-e_{1})$. Note that
\begin{equation*}
\int\nolimits_{\upsilon }^{\beta -\upsilon }\mathrm{d}\alpha
\int\nolimits_{|\nu |<M_{\beta ,\upsilon ,\varepsilon }}\mathrm{d}\nu
\int\nolimits_{|u|<M_{\beta ,\upsilon ,\varepsilon }}\mathrm{d}u\left\vert
\hat{F}_{\alpha }^{\beta }\left( \nu \right) \hat{F}_{\beta -\alpha }^{\beta
}\left( u\right) \right\vert <\infty
\end{equation*}%
and the volume of integration in (\ref{ohm idiot1}) of the $\nu _{a}$-- and $%
u_{b}$--integrals, $a=1,\ldots ,n_{1}$, $b=1,\ldots ,n_{2}$, gives a factor
\begin{equation*}
\frac{|t-\nu |^{n_{1}}|t+u|^{n_{2}}}{n_{1}!n_{2}!}.
\end{equation*}%
By developing the Laplacians $\Delta _{\mathrm{d}}$, note that, whenever $%
t\neq \nu $, $t\neq -u$,
\begin{eqnarray*}
&&\sum\limits_{z \in \Lambda_l} \frac{\mathbf{1}[x+z \in \Lambda _{l}]}{%
|\Lambda_l |} \mathbb{E}%
%TCIMACRO{\TeXButton{\Big[}{\Big[}}%
%BeginExpansion
\Big[%
%EndExpansion
\langle \mathfrak{e}_{x_{\pi (1)}},\mathrm{e}^{-i(t-\nu _{1})\lambda
V_{\omega }}\Delta _{\mathrm{d}}\mathrm{e}^{-i(\nu _{1}-\nu _{2})\lambda
V_{\omega }}\Delta _{\mathrm{d}} \\
&&\qquad \qquad \qquad \qquad \qquad \quad \cdots \mathrm{e}^{-i\left( \nu
_{n_{1}-1}-\nu _{n_{1}}\right) \lambda V_{\omega }}\Delta _{\mathrm{d}}%
\mathrm{e}^{-i(\nu _{n_{1}}-\nu )\lambda V_{\omega }}\mathfrak{e}_{y_{\pi
^{\prime }(1)}}\rangle \\
&&\times \langle \mathfrak{e}_{y_{\pi ^{\prime }(2)}},\mathrm{e}%
^{-i(-t-u_{1})\lambda V_{\omega }}\Delta _{\mathrm{d}}\mathrm{e}%
^{-i(u_{1}-u_{2})\lambda V_{\omega }}\Delta _{\mathrm{d}} \\
&&\qquad \qquad \qquad \qquad \qquad \quad \cdots \mathrm{e}^{-i\left(
u_{n_{2}-1}-u_{n_{2}}\right) \lambda V_{\omega }}\Delta _{\mathrm{d}}\mathrm{%
e}^{-i(u_{n_{2}}-u)\lambda V_{\omega }}\mathfrak{e}_{x_{\pi (2)}}\rangle
%TCIMACRO{\TeXButton{\Big]}{\Big]}}%
%BeginExpansion
\Big]%
%EndExpansion
\end{eqnarray*}%
is a sum of $(2d+1)^{n_{1}+n_{2}}$ terms of the form, up to constants
bounded in absolute value by $(2d)^{n_{1}+n_{2}}$,
\begin{equation}
\sum\limits_{z \in \Lambda_l} \frac{\mathbf{1}[x+z \in \Lambda _{l}]}{%
|\Lambda_l |} \mathbf{1}[x\in \Lambda _{2N+1}]\mathbb{E}\left[ \mathrm{e}%
^{\pm i\mathfrak{t}_{1}\lambda V_{\omega }\left( x_{1}\right) }\cdots
\mathrm{e}^{\pm i\mathfrak{t}_{n}\lambda V_{\omega }\left( x_{n}\right) }%
\right]  \label{blqblq}
\end{equation}%
where $n\in \mathbb{N}$, $n\leq n_{1}+n_{2}\leq 2N$, $\mathfrak{t}%
_{1},\ldots ,\mathfrak{t}_{n}\in \mathbb{R}^{+}$ and $x_{1}\in \{x,x-e_{1}\}$%
, $x_{2}\ldots ,x_{n-1}\in \mathfrak{L}$, $x_{n}\in \{0,-e_{1}\}$ with $%
x_{j}\neq x_{p}$ for $j\neq p$. By Lebesgue's dominated convergence theorem,
it suffices to analyze (\ref{blqblq}) either in the limit $\lambda
\rightarrow \infty $ or $\lambda \rightarrow 0^{+}$. By (\ref{probability
measure}),
\begin{equation}
\mathbb{E}\left[ \mathrm{e}^{\pm i\mathfrak{t}_{1}\lambda V_{\omega }\left(
x_{1}\right) }\cdots \mathrm{e}^{\pm i\mathfrak{t}_{n}\lambda V_{\omega
}\left( x_{n}\right) }\right] =\mathbb{E}\left[ \mathrm{e}^{\pm i\mathfrak{t}%
_{1}\lambda V_{\omega }\left( x_{1}\right) }\right] \cdots \mathbb{E}\left[
\mathrm{e}^{\pm i\mathfrak{t}_{n}\lambda V_{\omega }\left( x_{n}\right) }%
\right]  \label{blqblqbis}
\end{equation}%
for any $n\in \mathbb{N}$, $\mathfrak{t}_{1},\ldots ,\mathfrak{t}_{n}\in
\mathbb{R}^{+}$ and $x_{1},\ldots ,x_{n}\in \mathfrak{L}$ with $x_{j}\neq
x_{p}$ for $j\neq p$. Since
\begin{equation*}
\underset{\lambda \rightarrow 0}{\lim }\mathbb{E}\left[ \mathrm{e}^{\pm i%
\mathfrak{t}\lambda V_{\omega }\left( x\right) }\right] =1
\end{equation*}%
for all $x\in \mathfrak{L}$ and $\mathfrak{t}\in \mathbb{R}^{+}$, we deduce
from (\ref{blqblqbis}) that
\begin{equation*}
\underset{\lambda \rightarrow 0}{\lim }\mathbb{E}\left[ \mathrm{e}^{\pm i%
\mathfrak{t}_{1}\lambda V_{\omega }\left( x_{1}\right) }\cdots \mathrm{e}%
^{\pm i\mathfrak{t}_{n}\lambda V_{\omega }\left( x_{n}\right) }\right] =1
\end{equation*}%
and one gets the first assertion of the lemma by Lebesgue's dominated
convergence theorem.

If, additionally, the probability measure $\mathfrak{a}_{\mathbf{0}}$ is a
absolutely continuous w.r.t. the Lebesgue measure, then from the
Riemann--Lebesgue lemma we have the limit
\begin{equation*}
\underset{\lambda \rightarrow \infty }{\lim }\mathbb{E}\left[ \mathrm{e}%
^{\pm i\mathfrak{t}\lambda V_{\omega }\left( x\right) }\right] =0
\end{equation*}%
for all $x\in \mathfrak{L}$ and $\mathfrak{t}\in \mathbb{R}^{+}$. From (\ref%
{blqblqbis}), we then obtain that
\begin{equation*}
\underset{\lambda \rightarrow \infty }{\lim }\mathbb{E}\left[ \mathrm{e}%
^{\pm i\mathfrak{t}_{1}\lambda V_{\omega }\left( x_{1}\right) }\cdots
\mathrm{e}^{\pm i\mathfrak{t}_{n}\lambda V_{\omega }\left( x_{n}\right) }%
\right] =0 \ .
\end{equation*}%
Using this and Lebesgue's dominated convergence theorem, one thus gets the
second assertion.
\end{proof}

We are now in position to compute the asymptotics, as $\lambda \rightarrow
0^{+}$ and $\lambda \rightarrow \infty $, of the paramagnetic conductivity $%
\mathbf{\sigma }_{\mathrm{p}}$, which equals (\ref{important estimate conduc}%
).

\begin{proposition}[Asymptotics of the paramagnetic conductivity]
\label{main 1 copy(5)}\mbox{
}\newline
Let $\beta \in \mathbb{R}^{+}$, $\lambda \in \mathbb{R}_{0}^{+}$ and $t\in
\mathbb{R}$. Then,%
\begin{equation*}
\underset{\lambda \rightarrow 0}{\lim }\mathbf{\sigma }_{\mathrm{p}}^{(\beta
,\lambda )}(t)=\mathbf{\sigma }_{\mathrm{p}}^{(\beta ,0)}(t)\ .
\end{equation*}%
If the probability measure $\mathfrak{a}_{\mathbf{0}}$ is in addition
absolutely continuous w.r.t. the Lebesgue measure then
\begin{equation*}
\underset{\lambda \rightarrow \infty }{\lim }\mathbf{\sigma }_{\mathrm{p}%
}^{(\beta ,\lambda )}(t)=0\ .
\end{equation*}
\end{proposition}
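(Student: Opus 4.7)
The plan is to reduce the $\lambda$--asymptotics of $\mathbf{\sigma}_{\mathrm{p}}^{(\beta,\lambda)}(t)$ to that of the truncated quantities $\mathbb{E}[\mathfrak{q}_{\upsilon,\varepsilon,N,l}^{(\beta,\omega,\lambda)}(t)]$ by chaining the approximations already established in Equation (\ref{important estimate conduc}), Lemma \ref{lemma limit cool}, and Lemma \ref{lemma conductivty4 copy(7)}, then letting the auxiliary parameters go to their limits in the correct order.

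First, for $\eta > 0$ fixed, I would choose $\upsilon \in (0,\beta/2)$ small enough and then $\varepsilon > 0$ small enough that the error terms $\mathcal{O}(\upsilon)$ and $\mathcal{O}_{\upsilon}(\varepsilon)$ in (\ref{important estimate conduc}) are each bounded by $\eta$, \emph{uniformly} in $\lambda \in \mathbb{R}_0^+$. This uniformity is crucial and should be checked from the construction of $\tilde{\Gamma}_{\upsilon,\varepsilon,1,1}$ in \cite{OhmIII}; I would expect it to follow directly, since the error estimates there depend only on $\beta$ and on the operator norm bounds for $\Delta_{\mathrm{d}} + \lambda V_\omega$ compressed via the Duhamel two--point function, which are $\lambda$--independent uniform bounds. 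Thus the problem reduces to controlling $\tilde{\Gamma}_{\upsilon,\varepsilon,1,1}^{(\lambda)}(t)$ for fixed $\upsilon,\varepsilon$.

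Next, I would use Lemma \ref{lemma limit cool} to choose $N \in \mathbb{N}$ large enough that, uniformly in $l \in \mathbb{R}^+$, $\omega \in \Omega$ and $\lambda \in \mathbb{R}_0^+$, the quantity $\mathbb{E}[\mathfrak{q}_{\upsilon,\varepsilon,N,l}^{(\beta,\omega,\lambda)}(t)]$ approximates the pre--limit expression defining $\tilde{\Gamma}_{\upsilon,\varepsilon,1,1}^{(\lambda)}(t)$ within $\eta$. Passing to $l \to \infty$, we get
\begin{equation*}
\bigl|\tilde{\Gamma}_{\upsilon,\varepsilon,1,1}^{(\lambda)}(t) - \lim_{l \to \infty} \mathbb{E}[\mathfrak{q}_{\upsilon,\varepsilon,N,l}^{(\beta,\omega,\lambda)}(t)]\bigr| \leq \eta\ , \qquad \lambda \in \mathbb{R}_0^+\ .
\end{equation*}
By Lemma \ref{lemma conductivty4 copy(7)}, for this fixed $N$ the limits $\lambda \to 0^+$ and $\lambda \to \infty$ of $\mathbb{E}[\mathfrak{q}_{\upsilon,\varepsilon,N,l}^{(\beta,\omega,\lambda)}(t)]$ hold uniformly in $l$, so they commute with the $l \to \infty$ limit. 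This yields
\begin{equation*}
\lim_{\lambda \to 0^+} \lim_{l \to \infty} \mathbb{E}[\mathfrak{q}_{\upsilon,\varepsilon,N,l}^{(\beta,\omega,\lambda)}(t)] = \lim_{l \to \infty} \mathbb{E}[\mathfrak{q}_{\upsilon,\varepsilon,N,l}^{(\beta,\omega,0)}(t)]\ ,
\end{equation*}
and the analogous limit is zero as $\lambda \to \infty$ under the absolute continuity hypothesis on $\mathfrak{a}_{\mathbf{0}}$.

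Finally, I assemble the pieces. From the two paragraphs above, $\limsup_{\lambda \to 0^+} |\tilde{\Gamma}_{\upsilon,\varepsilon,1,1}^{(\lambda)}(t) - \tilde{\Gamma}_{\upsilon,\varepsilon,1,1}^{(0)}(t)| \leq 2\eta$, and likewise at $t = 0$, so (\ref{important estimate conduc}) gives $\limsup_{\lambda \to 0^+} |\mathbf{\sigma}_{\mathrm{p}}^{(\beta,\lambda)}(t) - \mathbf{\sigma}_{\mathrm{p}}^{(\beta,0)}(t)| \leq 8d\eta + 2\eta$, and sending $\eta \to 0$ proves the first assertion. The second is obtained identically with $\tilde{\Gamma}_{\upsilon,\varepsilon,1,1}^{(0)}(t)$ replaced by $0$, using the second part of Lemma \ref{lemma conductivty4 copy(7)}. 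The main obstacle, and the step requiring most care, is justifying the uniformity of the $\mathcal{O}(\upsilon)$ and $\mathcal{O}_{\upsilon}(\varepsilon)$ error terms in (\ref{important estimate conduc}) with respect to $\lambda$; once this is in hand, the argument is a standard iterated--limit exchange based on the uniformity statements in Lemmata \ref{lemma limit cool} and \ref{lemma conductivty4 copy(7)}.
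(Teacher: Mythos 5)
Your proposal follows essentially the same route as the paper's own proof: reduce via Equation (\ref{important estimate conduc}) and (\ref{gamma tilde cool}) to the truncated quantities, invoke Lemma \ref{lemma limit cool} and Lemma \ref{lemma conductivty4 copy(7)}, and use the uniformity in $l$ (and in $\lambda$, $\omega$) to interchange the limits; your version merely makes the $\eta$--bookkeeping explicit where the paper is terse. The uniformity in $\lambda$ of the $\mathcal{O}(\upsilon)$ and $\mathcal{O}_{\upsilon}(\varepsilon)$ error terms that you rightly flag as the delicate point is indeed what the paper implicitly relies on, so your argument is correct as written.
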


\begin{proof}
Let $\beta \in \mathbb{R}^{+}$, $\lambda \in \mathbb{R}_{0}^{+}$ and $t\in
\mathbb{R}$. By Lemmata \ref{lemma limit cool}--\ref{lemma conductivty4
copy(7)},%
\begin{eqnarray*}
&&\underset{\lambda \rightarrow 0}{\lim }\frac{1}{\left\vert \Lambda
_{l}\right\vert }\sum\limits_{x,y\in \Lambda _{l}}\mathbb{E}\left[
\int\nolimits_{\upsilon }^{\beta -\upsilon }\mathfrak{B}_{t+i\alpha
,\upsilon ,\varepsilon }^{(\beta ,\omega ,\lambda )}(x,x-e_{1},y,y-e_{1})%
\mathrm{d}\alpha \right] \\
&=&\frac{1}{\left\vert \Lambda _{l}\right\vert }\sum\limits_{x,y\in \Lambda
_{l}}\mathbb{E}\left[ \int\nolimits_{\upsilon }^{\beta -\upsilon }\mathfrak{B%
}_{t+i\alpha ,\upsilon ,\varepsilon }^{(\beta ,\omega
,0)}(x,x-e_{1},y,y-e_{1})\mathrm{d}\alpha \right]
\end{eqnarray*}%
uniformly for all $l\in \mathbb{R}^{+}$, whereas
\begin{equation*}
\underset{\lambda \rightarrow \infty }{\lim }\frac{1}{\left\vert \Lambda
_{l}\right\vert }\sum\limits_{x,y\in \Lambda _{l}}\mathbb{E}\left[
\int\nolimits_{\upsilon }^{\beta -\upsilon }\mathfrak{B}_{t+i\alpha
,\upsilon ,\varepsilon }^{(\beta ,\omega ,\lambda )}(x,x-e_{1},y,y-e_{1})%
\mathrm{d}\alpha \right] =0
\end{equation*}%
uniformly for all $l\in \mathbb{R}^{+}$, provided the probability measure $%
\mathfrak{a}_{\mathbf{0}}$ is absolutely continuous w.r.t. the Lebesgue
measure. Thus, by using these limits together with (\ref{important estimate
conduc})--(\ref{gamma tilde cool}) we arrive at the assertions.
\end{proof}

Finally, to get Theorem \ref{main 4}, we need to compute explicitly the
paramagnetic conductivity $\mathbf{\sigma }_{\mathrm{p}}^{(\beta ,\lambda )}$
at $\lambda =0$. This is done in the next lemma:

\begin{lemma}[Paramagnetic conductivity at constant potential]
\label{lemma conductivty4 copy(1)}\mbox{
}\newline
For any $\beta \in \mathbb{R}^{+}\ $and $t\in \mathbb{R}$, $\mathbf{\sigma }%
_{\mathrm{p}}^{(\beta ,0)}(t)=0$.
\end{lemma}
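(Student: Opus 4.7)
The plan is to exploit the full translation invariance of the model at $\lambda = 0$, where the random Hamiltonian reduces to the deterministic discrete Laplacian $\Delta_{\mathrm{d}}$. By the identity (\ref{positivity eq1}), the claim reduces to showing that the function $\Gamma_{1,1}$ defined in (\ref{positivity eq2}) is constant in $t$ when $\lambda = 0$.

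At $\lambda = 0$, the two--point function (\ref{cond two--point correlation function}) is $\omega$--independent and depends only on $x^{(2)} - x^{(1)}$. Fourier--diagonalizing $\Delta_{\mathrm{d}}$ via the dispersion relation (\ref{dispertion relation section2}) gives
\begin{equation*}
C_{t+i\alpha}^{(\omega)}(x^{(1)},x^{(2)}) = \frac{1}{(2\pi)^{d}} \int_{[-\pi,\pi]^{d}} e^{-itE(p)}\, F_{\alpha}^{\beta}(E(p))\, e^{ip\cdot(x^{(2)}-x^{(1)})}\, dp\ .
\end{equation*}
Inserting this into each of the four summands of (\ref{map coolbis}) with $\mathbf{x}=(x, x-e_{1})$ and $\mathbf{y}=(y, y-e_{1})$ produces double integrals over $p,q \in [-\pi,\pi]^{d}$ in which the $t$--dependent phase is always the product $e^{-itE(p)} e^{itE(q)}$, the variable $p$ coming from the $C_{t+i\alpha}$ factor and $q$ from the $C_{-t+i(\beta-\alpha)}$ factor.

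Next I would perform the thermodynamic limit of $|\Lambda_{l}|^{-1}\sum_{x,y\in\Lambda_{l}}$ through the substitution $z:=y-x$, reducing the spatial averaging to a sum over $z \in \mathfrak{L}$. By Plancherel on the torus, this sum yields $(2\pi)^{d}\delta(p-q)$, collapsing the double momentum integral to a single one in which the phase becomes $e^{-itE(p)} e^{itE(p)} = 1$. The remaining $\pm e_{1}$ shifts produce only $t$--independent trigonometric factors of the form $e^{\pm 2ip_{1}}$, and the four contributions combine (with the appropriate signs $\varepsilon_{\pi}\varepsilon_{\pi'}$) into a manifestly $t$--independent integral of $F_{\alpha}^{\beta}(E(p))F_{\beta-\alpha}^{\beta}(E(p))$ against a trigonometric kernel. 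Hence $\Gamma_{1,1}(t)\equiv\Gamma_{1,1}(0)$ and (\ref{positivity eq1}) gives $\mathbf{\sigma}_{\mathrm{p}}^{(\beta,0)}(t)=0$.

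The only technical obstacle is to justify the interchange of the $z$--summation with the momentum integrations, since the distributional Plancherel identity $\sum_{z\in\mathbb{Z}^{d}} e^{i(q-p)\cdot z} = (2\pi)^{d}\delta(q-p)$ requires absolute convergence. Because $F_{\alpha}^{\beta}(E(\cdot))$ is smooth on the torus (uniformly in $\alpha\in[\upsilon,\beta-\upsilon]$ for any $\upsilon>0$), its Fourier coefficients $C_{t+i\alpha}(\cdot)$ decay faster than any polynomial in $|x^{(2)}-x^{(1)}|$, which makes the $z$--sum of products of two such coefficients absolutely convergent and legitimizes the Plancherel step. Alternatively, one can first replace $\mathfrak{C}$ by the regularized $\mathfrak{B}_{t+i\alpha,\upsilon,\varepsilon}$ of (\ref{gamma tilde cool}) as in Lemma \ref{lemma limit cool}, for which the cutoff $M_{\beta,\upsilon,\varepsilon}$ already ensures the integrals are uniformly bounded, and then take $\upsilon\rightarrow 0^{+}$, $\varepsilon\rightarrow 0^{+}$ at the end via (\ref{important estimate conduc}).
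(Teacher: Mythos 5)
Your argument is correct and follows essentially the same route as the paper: at $\lambda=0$ the model is translation invariant, Fourier diagonalization of $\Delta_{\mathrm{d}}$ turns the spatial average $|\Lambda_l|^{-1}\sum_{x,y}$ into a localization of the two momenta onto the set where the combined phase in $x-y$ vanishes, and there the evenness of $E$ cancels the $t$--dependent factors, so $\Gamma_{1,1}$ is constant in $t$ and (\ref{positivity eq1}) gives the claim. The only cosmetic difference is that the paper implements the localization via the Fej\'{e}r--type kernel $\gamma_l$ concentrating at $p+p'=0$ together with the bound $\mathfrak{d}_{t}(p)-\mathfrak{d}_{0}(p)=\mathcal{O}(|tp|)$, whereas you sum over $z=y-x$ and invoke Parseval on the torus, justified by the rapid decay of the Fourier coefficients that you correctly point out.
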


\begin{proof}
Let $\beta \in \mathbb{R}^{+}$. By (\ref{paramagnetic transport coefficient
macro}) and \cite[Lemma 5.2]{OhmII}, note that%
\begin{equation}
\mathbf{\sigma }_{\mathrm{p}}^{(\beta ,0)}(t)=\underset{l\rightarrow \infty }%
{\lim }\frac{1}{\left\vert \Lambda _{l}\right\vert }\underset{x,y\in \Lambda
_{l}}{\sum }\int\nolimits_{0}^{\beta }\left( \mathfrak{D}_{t+i\alpha }(x,y)-%
\mathfrak{D}_{i\alpha }(x,y)\right) \mathrm{d}\alpha \ ,  \label{idiot 00}
\end{equation}%
where, for any $x,y\in \mathfrak{L}$,%
\begin{equation*}
\mathfrak{D}_{t+i\alpha }(x,y):=\mathfrak{C}_{t+i\alpha }^{(\beta ,\omega
,0)}(x,x-e_{1},y,y-e_{1})\ .
\end{equation*}%
Observe also that $\mathfrak{C}_{t+i\alpha }^{(\beta ,\omega ,0)}$, which is
defined by (\ref{map coolbis}), does not depend on $\omega \in \Omega $.
Explicit computations show that $\mathfrak{D}_{t+i\alpha }(x,y)$ equals%
\begin{eqnarray*}
\mathfrak{D}_{t+i\alpha }(x,y) &=&\frac{2}{(2\pi )^{2d}}\int\nolimits_{\left[
-\pi ,\pi \right] ^{d}}\mathrm{d}^{d}p\int\nolimits_{\left[ -\pi ,\pi \right]
^{d}}\mathrm{d}^{d}p^{\prime }\ \frac{\mathrm{e}^{\beta E(p^{\prime })}%
\mathrm{e}^{\left( \alpha -it\right) \left( E(p)-E(p^{\prime })\right) }}{%
\left( 1+\mathrm{e}^{\beta E(p)}\right) \left( 1+\mathrm{e}^{\beta
E(p^{\prime })}\right) } \\
&&\times \left( 1-\cos \left( p_{1}-p_{1}^{\prime }\right) \right) \mathrm{e}%
^{i\left( p+p^{\prime }\right) \cdot \left( x-y\right) }
\end{eqnarray*}%
for any $t\in \mathbb{R}$, $\alpha \in \lbrack 0,\beta ]$ and $x,y\in
\mathfrak{L}$, with $E\left( p\right) =E\left( -p\right) $ being the
dispersion relation (\ref{dispertion relation section2}) of $\Delta _{%
\mathrm{d}}$. By performing the transformation $p\rightarrow p-p^{\prime }$
and then $p^{\prime }\rightarrow p^{\prime }+p/2$ together with $E\left(
p\right) =E\left( -p\right) $ we deduce that
\begin{equation}
\int\nolimits_{0}^{\beta }\mathfrak{D}_{t+i\alpha }(x,y)\mathrm{d}\alpha
=\int\nolimits_{\left[ -\pi ,\pi \right] ^{d}}\mathfrak{d}_{t}\left(
p\right) \mathrm{e}^{ip\cdot \left( x-y\right) }\mathrm{d}^{d}p
\label{eq idiot sup}
\end{equation}%
for all $t\in \mathbb{R}$ and $x,y\in \mathfrak{L}$, with $\mathfrak{d}_{t}$
being the function defined on $\left[ -\pi ,\pi \right] ^{d}$ by%
%TCIMACRO{%
%\TeXButton{eq}{\begin{eqnarray*}
%\mathfrak{d}_{t}\left( p\right)  &:=&\frac{2}{(2\pi )^{2d}}\int\nolimits_{\left[ -\pi ,\pi \right] ^{d}}\mathrm{d}^{d}p^{\prime }\ \frac{\mathrm{e}^{\beta E(p^{\prime }+p/2)}\mathrm{e}^{-it\left( E(p^{\prime
%}-p/2)-E(p^{\prime }+p/2)\right) }}{\left( 1+\mathrm{e}^{\beta E(p^{\prime
%}-p/2)}\right) \left( 1+\mathrm{e}^{\beta E(p^{\prime }+p/2)}\right) } \\
%&&\times \frac{\left( \mathrm{e}^{\beta \left( E(p^{\prime
%}-p/2)-E(p^{\prime }+p/2)\right) }-1\right) }{\left( E(p^{\prime
%}-p/2)-E(p^{\prime }+p/2)\right) }\left( 1-\cos \left( 2p_{1}^{\prime
%}\right) \right) \ .
%\end{eqnarray*}}}%
%BeginExpansion
\begin{eqnarray*}
\mathfrak{d}_{t}\left( p\right)  &:=&\frac{2}{(2\pi )^{2d}}\int\nolimits_{\left[ -\pi ,\pi \right] ^{d}}\mathrm{d}^{d}p^{\prime }\ \frac{\mathrm{e}^{\beta E(p^{\prime }+p/2)}\mathrm{e}^{-it\left( E(p^{\prime
}-p/2)-E(p^{\prime }+p/2)\right) }}{\left( 1+\mathrm{e}^{\beta E(p^{\prime
}-p/2)}\right) \left( 1+\mathrm{e}^{\beta E(p^{\prime }+p/2)}\right) } \\
&&\times \frac{\left( \mathrm{e}^{\beta \left( E(p^{\prime
}-p/2)-E(p^{\prime }+p/2)\right) }-1\right) }{\left( E(p^{\prime
}-p/2)-E(p^{\prime }+p/2)\right) }\left( 1-\cos \left( 2p_{1}^{\prime
}\right) \right) \ .
\end{eqnarray*}%
%EndExpansion
Consequently, using (\ref{eq idiot sup}) one gets, for any $l\in \mathbb{R}%
^{+}$ and $t\in \mathbb{R}$, the equality
\begin{equation}
\frac{1}{\left\vert \Lambda _{l}\right\vert }\sum\limits_{x,y\in \Lambda
_{l}}\int\nolimits_{0}^{\beta }\mathfrak{D}_{t+i\alpha }(x,y)\mathrm{d}%
\alpha =\int\nolimits_{\left[ -\pi ,\pi \right] ^{d}}\gamma _{l}\left(
p\right) \mathfrak{d}_{t}\left( p\right) \mathrm{d}^{d}p\ ,
\label{idiot 000}
\end{equation}%
where the function $\gamma _{l}$ is defined on $\left[ -\pi ,\pi \right]
^{d} $ by
\begin{equation*}
\gamma _{l}\left( p\right) :=\left\vert \frac{1}{\left\vert \Lambda
_{l}\right\vert ^{1/2}}\sum\limits_{x\in \Lambda _{l}}\mathrm{e}^{ip\cdot
x}\right\vert ^{2}=\frac{1}{\left\vert \Lambda _{l}\right\vert }%
\sum\limits_{x,y\in \Lambda _{l}}\mathrm{e}^{ip\cdot \left( x-y\right) }\ .
\end{equation*}
Observe that, for any $l\in \mathbb{R}^{+}$ and all $\varepsilon \in \mathbb{%
R}^{+}$,
\begin{equation*}
\int\nolimits_{\left[ -\pi ,\pi \right] ^{d}}\gamma _{l}\left( p\right)
\mathrm{d}^{d}p=(2\pi )^{2d}\qquad \text{and}\qquad \underset{l\rightarrow
\infty }{\lim }\int\nolimits_{\left[ -\pi ,\pi \right] ^{d}\backslash
\mathcal{B}\left( 0,\varepsilon \right) }\gamma _{l}\left( p\right) \mathrm{d%
}^{d}p=0\ ,
\end{equation*}%
where $\mathcal{B}\left( 0,\varepsilon \right) \subset \mathbb{R}^{d}$ is
the ball of radius $\varepsilon $ centered at $0$. From this we infer that
\begin{equation}
\underset{l\rightarrow \infty }{\lim }\left\vert \int\nolimits_{\left[ -\pi
,\pi \right] ^{d}}\gamma _{l}\left( p\right) \mathfrak{d}_{t}\left( p\right)
\mathrm{d}^{d}p-\int\nolimits_{\mathcal{B}\left( 0,\varepsilon \right)
}\gamma _{l}\left( p\right) \mathfrak{d}_{t}\left( p\right) \mathrm{d}%
^{d}p\right\vert =0  \label{idiot encore}
\end{equation}%
for all $\varepsilon \in \mathbb{R}^{+}$ and any $t\in \mathbb{R}$.
Meanwhile, remark that
\begin{equation*}
\mathfrak{d}_{t}\left( p\right) -\mathfrak{d}_{0}\left( p\right) =\mathcal{O}%
\left( |tp|\right) \ .
\end{equation*}%
Then, using the continuity of the function $\mathfrak{d}_{0}\left( \cdot
\right) $ together with (\ref{idiot 00}), (\ref{idiot 000}) and (\ref{idiot
encore}), it follows that $\mathbf{\sigma }_{\mathrm{p}}^{(\beta ,0)}(t)=0$
for all $t\in \mathbb{R}$.
\end{proof}

Therefore, Theorem \ref{main 4} (p) follows from Proposition \ref{main 1
copy(5)} and Lemma \ref{lemma conductivty4 copy(1)}.

\subsection{On the Strict Positivity of the Heat Production\label{Section
Strict Positivity}}

In this subsection we aim to prove Theorem \ref{main 2 copy(1)}: First, we
study the asymptotics of the paramagnetic conductivity $\mathbf{\sigma }_{%
\mathrm{p}}$ at $\beta ,\lambda ,t=0$. Then, we show that the behavior of $%
\mathbf{\sigma }_{\mathrm{p}}$ near this point implies strict positivity of
the heat production, at least for short pulses of the electric field and
small $\beta ,\lambda >0$. This result corresponds to Lemma \ref{main 2
copy(2)}. The latter can be extended at small $\beta ,\lambda >0$ by an
analyticity argument to all electric fields outside a meager set, see Lemma %
\ref{lemma conductivty4 copy(7)2}.

\begin{lemma}[Non--vanishing AC--conductivity measure -- I]
\label{main 2 copy(2)}\mbox{
}\newline
Let $\mathbf{A}\in C_{0}^{\infty }(\mathbb{R}\times {\mathbb{R}}^{d};{%
\mathbb{R}}^{d})\backslash \{0\}$ be such that, for some $k\in \{1,\ldots
,d\}$,
\begin{equation*}
\int\nolimits_{\mathbb{R}^{d}}\left( \int\nolimits_{\mathbb{R}}s\{E_{\mathbf{%
A}}(s,x)\}_{k}\mathrm{d}s\right) ^{2}\mathrm{d}^{d}x>0
\end{equation*}%
and define, for all $T\in \mathbb{R}^{+}$, the time--rescaled potential%
\begin{equation*}
\mathbf{A}^{(T)}(t,x):=\mathbf{A}(T^{-1}t,x)\ ,\quad t\in \mathbb{R},\ x\in
\mathbb{R}^{d}\ .
\end{equation*}%
For any $\lambda _{0}\in \mathbb{R}^{+}$, there are $\beta _{0},T_{0}\in
\mathbb{R}^{+}$ such that, for $\beta \in (0,\beta _{0})$, $\lambda \in
(\lambda _{0}/2,\lambda _{0})$ and $T\in (T_{0}/2,T_{0})$,
\begin{equation*}
\int\nolimits_{\mathbb{R}}\mathrm{d}s_{1}\int\nolimits_{\mathbb{R}}\mathrm{d}%
s_{2}\ \mathbf{\sigma }_{\mathrm{p}}(s_{1}-s_{2})\int\nolimits_{\mathbb{R}%
^{d}}\mathrm{d}^{d}x\left\langle E_{\mathbf{A}^{(T)}}(s_{2},x),E_{\mathbf{A}%
^{(T)}}(s_{1},x)\right\rangle >0\ .
\end{equation*}
\end{lemma}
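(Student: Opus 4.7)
The strategy has three stages: a Fourier rewriting of the bilinear form against $\mathbf{\sigma}_{\mathrm{p}}$, a short-pulse ($T\to 0^{+}$) asymptotic, and a non--triviality check on the paramagnetic conductivity measure $\mu_{\mathrm{p}}$ in the parameter window of interest.

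Denote the double integral in the conclusion of the lemma by $I(T,\beta,\lambda)$. Using the spectral representation of $\mathbf{\sigma}_{\mathrm{p}}$ provided by Theorem \ref{lemma sigma pos type copy(4)-macro} and Plancherel's identity on the two time variables, together with the scaling $E_{\mathbf{A}^{(T)}}(t,x)=T^{-1}E_{\mathbf{A}}(T^{-1}t,x)$ (so that $\hat{E}_{\mathbf{A}^{(T)}_k}(\nu,x)=\hat{E}_{\mathbf{A}_k}(T\nu,x)$) and the identity $\hat{E}_{\mathbf{A}_k}(0,x)=-\int_{\mathbb{R}}\partial_{t}\mathbf{A}_{k}(t,x)\mathrm{d}t=0$ (by compact support of $\mathbf{A}$ in $t$), the atomic correction at $\nu=0$ in (\ref{equ super cond}) cancels and I would obtain the manifestly non--negative representation
$$I(T,\beta,\lambda)=\sum_{k=1}^{d}\int_{\mathbb{R}^{d}}\mathrm{d}^{d}x\int_{\mathbb{R}}\bigl|\hat{E}_{\mathbf{A}_{k}}(T\nu,x)\bigr|^{2}\,\mu_{\mathrm{p}}^{(\beta,\lambda)}(\mathrm{d}\nu).$$

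By Paley--Wiener, $\hat{E}_{\mathbf{A}_{k}}(\xi,x)$ is entire in $\xi$ with $\hat{E}_{\mathbf{A}_{k}}(0,x)=0$ and $\partial_{\xi}\hat{E}_{\mathbf{A}_{k}}(0,x)=-i\int_{\mathbb{R}}s\{E_{\mathbf{A}}(s,x)\}_{k}\mathrm{d}s$. Hence the integrand above, once divided by $T^{2}$, converges pointwise in $(\nu,x)$ to $\nu^{2}\bigl(\int_{\mathbb{R}}s\{E_{\mathbf{A}}(s,x)\}_{k}\mathrm{d}s\bigr)^{2}$ as $T\to 0^{+}$. Truncating the $\nu$--integral to $|\nu|\le M$ (dominated convergence, available because $|\hat{E}_{\mathbf{A}_{k}}(T\nu,x)/T|$ is uniformly bounded by $|\nu|\sup_{\xi}|\partial_{\xi}\hat{E}_{\mathbf{A}_{k}}(\xi,x)|$ on $|\nu|\le M$), then letting $M\to\infty$ by monotone convergence, and finally applying Fatou in $x$, yields
$$\liminf_{T\to 0^{+}}T^{-2}\,I(T,\beta,\lambda)\;\geq\;\Bigl(\sum_{k=1}^{d}\int_{\mathbb{R}^{d}}\Bigl(\int_{\mathbb{R}}s\{E_{\mathbf{A}}(s,x)\}_{k}\mathrm{d}s\Bigr)^{\!2}\mathrm{d}^{d}x\Bigr)\int_{\mathbb{R}}\nu^{2}\,\mu_{\mathrm{p}}^{(\beta,\lambda)}(\mathrm{d}\nu).$$
The first factor is strictly positive by the hypothesis of the lemma, so the problem reduces to showing that $\int_{\mathbb{R}}\nu^{2}\mu_{\mathrm{p}}^{(\beta,\lambda)}(\mathrm{d}\nu)$ is strictly positive (and bounded away from zero by continuity in $(\beta,\lambda)$) for $\beta$ small and $\lambda\in(\lambda_{0}/2,\lambda_{0})$, from which $\beta_{0}$ and $T_{0}$ are extracted by compactness.

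For this non--triviality, I would invoke the remark following Theorem \ref{lemma sigma pos type copy(4)-macro}: exhibiting a single $t_{*}\neq 0$ with $\mathbf{\sigma}_{\mathrm{p}}^{(\beta,\lambda)}(t_{*})\neq 0$ forces $\mathbf{\sigma}_{\mathrm{p}}^{(\beta,\lambda)}(t)<0$ for all $t\neq 0$, equivalently $\mu_{\mathrm{p}}^{(\beta,\lambda)}(\mathbb{R}\backslash\{0\})>0$, and hence $\int\nu^{2}\mu_{\mathrm{p}}(\mathrm{d}\nu)>0$. Such a $t_{*}$ is produced by a small--$\beta$ expansion of $\mathbf{\sigma}_{\mathrm{p}}$ via the Duhamel representation of Section \ref{Section Asymptotics}: since Lemma \ref{lemma conductivty4 copy(1)} gives $\mathbf{\sigma}_{\mathrm{p}}^{(\beta,0)}\equiv 0$, the first surviving term must carry a positive power of $\lambda$; using the i.i.d.\ factorization (\ref{blqblqbis}) and the zero--mean condition (\ref{V expectation}), I would identify the lowest--order surviving coefficient in the expansion of $\mathbf{\sigma}_{\mathrm{p}}^{(\beta,\lambda)}(t)$ in $\beta$ and verify that it depends non--trivially on $t$ for $\lambda$ in the prescribed window. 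The main technical obstacle is precisely this step: one must simultaneously control the regularization parameters $\upsilon,\varepsilon$ of (\ref{important estimate conduc}), Taylor expand both $F_{\alpha}^{\beta}$ and the exponentials $\mathrm{e}^{\pm i\mathfrak{t}\lambda V_{\omega}}$, and track which combinatorial product of Laplacian hops survives the zero--mean cancellations --- an analysis in the same spirit as Lemmata \ref{lemma limit cool}--\ref{lemma conductivty4 copy(1)}, but now aimed at \emph{identifying} a non--vanishing leading term rather than merely bounding one away.
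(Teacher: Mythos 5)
Your first two stages are sound and in fact cleaner than the paper's: the Plancherel rewriting $I(T,\beta,\lambda)=\sum_{k}\int\mathrm{d}^{d}x\int|\hat{E}_{\mathbf{A}_{k}}(T\nu,x)|^{2}\mu_{\mathrm{p}}(\mathrm{d}\nu)$ (using $\hat{E}_{\mathbf{A}_{k}}(0,x)=0$ to kill the $-1$ in $\cos(t\nu)-1$) makes non--negativity manifest, and the $T\to0^{+}$ asymptotics via $\hat{E}_{\mathbf{A}_{k}}(T\nu,x)=-iT\nu\int s\{E_{\mathbf{A}}(s,x)\}_{k}\mathrm{d}s+\mathcal{O}(T^{2})$ is correct. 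The genuine gap is in stage three, and it is twofold. First, the step you yourself flag as ``the main technical obstacle'' --- identifying a non--vanishing leading term of $\mathbf{\sigma}_{\mathrm{p}}^{(\beta,\lambda)}$ --- is not a side issue but the entire content of the lemma; the paper carries it out via the Duhamel/Neumann expansions (\ref{Duhamel's formula encore})--(\ref{important inequality}), using the i.i.d.\ factorization and $\mathbb{E}[\omega(0)]=0$ so that the first surviving correction is the explicit $-\frac{\lambda^{2}t^{2}}{4}\mathbb{E}[V_{\omega}^{2}]\cdot(\cdots)$ term, yielding $\mathbf{\sigma}_{\mathrm{p}}(t)=-D\lambda^{2}\beta t^{2}+\mathcal{O}(\beta^{2}\lambda)+\mathcal{O}(\beta\lambda|t|^{3})$ with an explicit $D\in\mathbb{R}^{+}$ (Eq.\ (\ref{positivity 6})). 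Without that computation your proof has no source of strict positivity.

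Second, even granting qualitative non--triviality of $\mu_{\mathrm{p}}$, your uniformity argument fails: you claim $\int\nu^{2}\mu_{\mathrm{p}}^{(\beta,\lambda)}(\mathrm{d}\nu)$ is ``bounded away from zero by continuity'' on the window and that $\beta_{0},T_{0}$ follow ``by compactness,'' but the window $\beta\in(0,\beta_{0})$ is not compact, and $\mu_{\mathrm{p}}^{(\beta,\lambda)}\to0$ as $\beta\to0^{+}$ (this is visible from (\ref{positivity 6}): the whole measure is $\mathcal{O}(\beta)$), so no uniform positive lower bound exists. The lemma nevertheless holds because the \emph{leading} term scales like $\lambda^{2}\beta T^{2}$ while the errors scale like $\beta^{2}\lambda$ and $\beta\lambda T^{3}$; one then chooses $T_{0},\beta_{0}$ with $\lambda_{0}T_{0}^{2}\gg\beta_{0},T_{0}^{3}$, as in (\ref{positivity 7})--(\ref{positivity 7*}). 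In other words, the conclusion requires comparing the \emph{rates} at which the signal and the errors vanish as $\beta\to0^{+}$, not a fixed lower bound on a moment of $\mu_{\mathrm{p}}$; your reduction to ``show $\int\nu^{2}\mu_{\mathrm{p}}(\mathrm{d}\nu)>0$'' discards exactly the quantitative information needed to extract a single $(\beta_{0},T_{0})$ valid on the whole window.
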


\begin{proof}
Let $\lambda _{0}\in \mathbb{R}^{+}$. Using Duhamel's formula note first that%
\begin{equation}
\mathrm{e}^{\left( \alpha -it\right) \left( \Delta _{\mathrm{d}}+\lambda
V_{\omega }\right) }=\mathrm{e}^{\left( \alpha -it\right) \Delta _{\mathrm{d}%
}}+\int_{0}^{1}\mathrm{e}^{\left( \alpha -it\right) (1-\gamma )\Delta _{%
\mathrm{d}}}\left( \alpha -it\right) \lambda V_{\omega }\mathrm{e}^{\left(
\alpha -it\right) \gamma \left( \Delta _{\mathrm{d}}+\lambda V_{\omega
}\right) }\mathrm{d}\gamma  \label{Duhamel's formula encore}
\end{equation}%
for any $\alpha \in \lbrack 0,\beta ]$ and $t\in {\mathbb{R}}$. Since all
operators in this last equation are bounded, it follows that, if $\lambda
\in \lbrack 0,\lambda _{0}]$ and $\beta \in \mathbb{R}^{+}$ is sufficiently
small, the Neumann series for $\left( 1+\mathrm{e}^{\beta \left( \Delta _{%
\mathrm{d}}+\lambda V_{\omega }\right) }\right) ^{-1}$ absolutely converges:%
\begin{eqnarray}
&&\left( 1+\mathrm{e}^{\beta \left( \Delta _{\mathrm{d}}+\lambda V_{\omega
}\right) }\right) ^{-1}  \label{von neuman series} \\
&=&\underset{n=0}{\overset{\infty }{\sum }}\left\{ -\beta \lambda \left( 1+%
\mathrm{e}^{\beta \Delta _{\mathrm{d}}}\right) ^{-1}\int_{0}^{1}\mathrm{e}%
^{\beta (1-\gamma )\Delta _{\mathrm{d}}}V_{\omega }\mathrm{e}^{\beta \gamma
\left( \Delta _{\mathrm{d}}+\lambda V_{\omega }\right) }\mathrm{d}\gamma
\right\} ^{n}\left( 1+\mathrm{e}^{\beta \Delta _{\mathrm{d}}}\right) ^{-1}\ .
\notag
\end{eqnarray}%
By (\ref{Duhamel's formula encore})--(\ref{von neuman series}), one gets the
existence of a constant $D\in \mathbb{R}^{+}$ such that, for $\lambda \in
\lbrack 0,\lambda _{0}]$ and any sufficiently small $\beta \in \left(
0,1\right) $, $\alpha \in \lbrack 0,\beta ]$ and $\omega \in \Omega $,%
\begin{equation}
\left\Vert F_{\alpha }^{\beta }\left( \Delta _{\mathrm{d}}+\lambda V_{\omega
}\right) -F_{\alpha }^{\beta }\left( \Delta _{\mathrm{d}}\right) \right\Vert
_{\mathrm{op}}\leq D\beta \lambda  \label{positivity 1}
\end{equation}%
with $F_{\alpha }^{\beta }$ defined by (\ref{function F}).

We define the approximated complex--time two--point correlation function $%
\tilde{C}_{t+i\alpha }^{(\omega )}$, for any $\beta \in \mathbb{R}^{+}$, $%
\omega \in \Omega $, $\lambda \in \mathbb{R}_{0}^{+}$, $t\in {\mathbb{R}}$
and $\alpha \in \lbrack 0,\beta ]$, by%
\begin{equation}
\tilde{C}_{t+i\alpha }^{(\omega )}(\mathbf{x}):=\langle \mathfrak{e}%
_{x^{(2)}},\mathrm{e}^{-it\left( \Delta _{\mathrm{d}}+\lambda V_{\omega
}\right) }F_{\alpha }^{\beta }\left( \Delta _{\mathrm{d}}\right) \mathfrak{e}%
_{x^{(1)}}\rangle \ ,\quad \mathbf{x}:=(x^{(1)},x^{(2)})\in \mathfrak{L}%
^{2}\ ,  \label{positivity eq30}
\end{equation}%
compare with (\ref{cond two--point correlation function}), the original form
of $C_{t+i\alpha }^{(\omega )}$. For any $\mathbf{x}:=(x^{(1)},x^{(2)})\in
\mathfrak{L}^{2}$ and $\mathbf{y}:=(y^{(1)},y^{(2)})\in \mathfrak{L}^{2}$,
let us define
\begin{equation*}
\mathfrak{\tilde{C}}_{t+i\alpha }^{(\omega )}(\mathbf{x},\mathbf{y}):=%
\underset{\pi ,\pi ^{\prime }\in S_{2}}{\sum }\varepsilon _{\pi }\varepsilon
_{\pi ^{\prime }}\tilde{C}_{t+i\alpha }^{(\omega )}(y^{\pi ^{\prime
}(1)},x^{\pi (1)})\tilde{C}_{-t+i(\beta -\alpha )}^{(\omega )}(x^{\pi
(2)},y^{\pi ^{\prime }(2)})\ .
\end{equation*}%
From (\ref{map coolbis})--(\ref{positivity eq2}) and (\ref{positivity 1}) we
thus deduce that%
\begin{equation}
\Gamma _{1,1}(t)=\underset{l\rightarrow \infty }{\lim }\frac{1}{\left\vert
\Lambda _{l}\right\vert }\sum\limits_{x,y\in \Lambda _{l}}\mathbb{E}\left[
\int\nolimits_{0}^{\beta }\mathfrak{\tilde{C}}_{t+i\alpha }^{(\omega
)}(x,x-e_{1},y,y-e_{1})\mathrm{d}\alpha \right] +\mathcal{O}(\beta
^{2}\lambda )  \label{positivity eq3}
\end{equation}%
uniformly for $t\in {\mathbb{R}}$.

Next, we define an approximation of $\tilde{C}_{t+i\alpha }^{(\omega )}$ by%
%TCIMACRO{%
%\TeXButton{eq}{\begin{eqnarray}
%\hat{C}_{t+i\alpha }^{(\omega )}(\mathbf{x})&:=&\left\langle \mathfrak{e}_{x^{(2)}},\mathrm{e}^{-it\Delta _{\mathrm{d}}}F_{\alpha }^{\beta }\left(
%\Delta _{\mathrm{d}}\right) \mathfrak{e}_{x^{(1)}}\right\rangle
%\label{positivity eq3bis0} \\
%&&-\frac{\lambda }{2}\left\langle \mathfrak{e}_{x^{(2)}},\left( itV_{\omega
%}+\frac{t^{2}}{2}(V_{\omega }\Delta _{\mathrm{d}}+\Delta _{\mathrm{d}}V_{\omega }+\lambda V_{\omega }^{2})\right) \mathfrak{e}_{x^{(1)}}\right\rangle   \notag
%\end{eqnarray}}}%
%BeginExpansion
\begin{eqnarray}
\hat{C}_{t+i\alpha }^{(\omega )}(\mathbf{x})&:=&\left\langle \mathfrak{e}_{x^{(2)}},\mathrm{e}^{-it\Delta _{\mathrm{d}}}F_{\alpha }^{\beta }\left(
\Delta _{\mathrm{d}}\right) \mathfrak{e}_{x^{(1)}}\right\rangle
\label{positivity eq3bis0} \\
&&-\frac{\lambda }{2}\left\langle \mathfrak{e}_{x^{(2)}},\left( itV_{\omega
}+\frac{t^{2}}{2}(V_{\omega }\Delta _{\mathrm{d}}+\Delta _{\mathrm{d}}V_{\omega }+\lambda V_{\omega }^{2})\right) \mathfrak{e}_{x^{(1)}}\right\rangle   \notag
\end{eqnarray}%
%EndExpansion
for all $\beta \in \mathbb{R}^{+}$, $\omega \in \Omega $, $\lambda \in
\mathbb{R}_{0}^{+}$, $t\in {\mathbb{R}}$, $\alpha \in \lbrack 0,\beta ]$ and
$\mathbf{x}:=(x^{(1)},x^{(2)})\in \mathfrak{L}^{2}$. Indeed, by (\ref%
{Duhamel's formula encore}) and a power expansion of $F_{\alpha }^{\beta
}\left( \Delta _{\mathrm{d}}\right) $ at $\alpha ,\beta =0$, there is a
constant $D\in \mathbb{R}^{+}$ such that, for any $\lambda \in \lbrack
0,\lambda _{0}]$, sufficiently small $\beta \in \left( 0,1\right) $, $\alpha
\in \lbrack 0,\beta ]$, $\omega \in \Omega $ and $t\in {\mathbb{R}}$,
\begin{eqnarray}
&&\left\Vert \left( \mathrm{e}^{-it\left( \Delta _{\mathrm{d}}+\lambda
V_{\omega }\right) }-\mathrm{e}^{-it\Delta _{\mathrm{d}}}\right) F_{\alpha
}^{\beta }\left( \Delta _{\mathrm{d}}\right) +\frac{1}{2}\int_{0}^{1}\mathrm{%
e}^{-it(1-\gamma )\Delta _{\mathrm{d}}}it\lambda V_{\omega }\mathrm{e}%
^{-it\gamma \left( \Delta _{\mathrm{d}}+\lambda V_{\omega }\right) }\mathrm{d%
}\gamma \right\Vert _{\mathrm{op}}  \notag \\
&\leq &D\beta \lambda \left\vert t\right\vert \ .  \label{positivity eq3bis}
\end{eqnarray}%
Meanwhile, note that%
\begin{eqnarray}
&&\int_{0}^{1}\mathrm{e}^{-it(1-\gamma )\Delta _{\mathrm{d}}}itV_{\omega }%
\mathrm{e}^{-it\gamma \left( \Delta _{\mathrm{d}}+\lambda V_{\omega }\right)
}\mathrm{d}\gamma  \label{positivity eq3bis+1} \\
&=&itV_{\omega }+\frac{t^{2}}{2}\left( V_{\omega }\Delta _{\mathrm{d}%
}+\Delta _{\mathrm{d}}V_{\omega }+\lambda V_{\omega }^{2}\right) +\mathcal{O}%
(\left\vert t\right\vert ^{3})  \notag
\end{eqnarray}%
uniformly for $\lambda \in \lbrack 0,\lambda _{0}]$ and $\omega \in \Omega $%
. Thus, by combining (\ref{positivity eq30})--(\ref{positivity eq3bis+1}),
for $\lambda \in \lbrack 0,\lambda _{0}]$, we arrive at the equality
\begin{eqnarray}
\Gamma _{1,1}(t) &=&\underset{l\rightarrow \infty }{\lim }\frac{1}{%
\left\vert \Lambda _{l}\right\vert }\sum\limits_{x,y\in \Lambda _{l}}\mathbb{%
E}\left[ \int\nolimits_{0}^{\beta }\widehat{\mathfrak{C}}_{t+i\alpha
}^{(\omega )}(x,x-e_{1},y,y-e_{1})\mathrm{d}\alpha \right]  \notag \\
&&+\mathcal{O}(\beta ^{2}\lambda )+\mathcal{O}(\beta \lambda \left\vert
t\right\vert ^{3})  \label{positivity eq5}
\end{eqnarray}%
for sufficiently small $\beta $ and $\left\vert t\right\vert $, where
\begin{equation*}
\widehat{\mathfrak{C}}_{t+i\alpha }^{(\omega )}(\mathbf{x},\mathbf{y}):=%
\underset{\pi ,\pi ^{\prime }\in S_{2}}{\sum }\varepsilon _{\pi }\varepsilon
_{\pi ^{\prime }}\hat{C}_{t+i\alpha }^{(\omega )}(y^{\pi ^{\prime
}(1)},x^{\pi (1)})\hat{C}_{-t+i(\beta -\alpha )}^{(\omega )}(x^{\pi
(2)},y^{\pi ^{\prime }(2)})
\end{equation*}%
for all $\mathbf{x}:=(x^{(1)},x^{(2)})\in \mathfrak{L}^{2}$ and $\mathbf{y}%
:=(y^{(1)},y^{(2)})\in \mathfrak{L}^{2}$.

We now use that $V_{\omega }$ is an i.i.d. potential satisfying $\mathbb{E}%
[V_{\omega }(x)]=0$ for all $x\in \mathfrak{L}$ to compute that, for any $%
\mathbf{x}:=(x^{(1)},x^{(2)})$ and $\mathbf{y}:=(y^{(1)},y^{(2)})\in
\mathfrak{L}^{2}$, $x^{(1)}\neq x^{(2)}$, $y^{(1)}\neq y^{(2)}$,%
\begin{eqnarray}
&&\mathbb{E}\left[ \int\nolimits_{0}^{\beta }\widehat{\mathfrak{C}}%
_{t+i\alpha }^{(\omega )}(\mathbf{x},\mathbf{y})\mathrm{d}\alpha \right]
-\int\nolimits_{0}^{\beta }\mathfrak{C}_{t+i\alpha }^{(0)}(\mathbf{x},%
\mathbf{y})\mathrm{d}\alpha  \label{important inequality} \\
&=&-\frac{\lambda ^{2}t^{2}}{4}\mathbb{E}\left[ V_{\omega }^{2}\right]
\sum_{\pi ,\pi ^{\prime }\in S_{2}}\varepsilon _{\pi }\varepsilon _{\pi
^{\prime }}\left\{ \left( \int\nolimits_{0}^{\beta }\langle \mathfrak{e}%
_{x^{\pi (1)}},\mathrm{e}^{-it\Delta _{\mathrm{d}}}F_{\alpha }^{\beta
}\left( \Delta _{\mathrm{d}}\right) \mathfrak{e}_{y^{\pi ^{\prime
}(1)}}\rangle \mathrm{d}\alpha \right) \delta _{x^{\pi (2)},y^{\pi ^{\prime
}(2)}}\right.  \notag \\
&&+\left. \left( \int\nolimits_{0}^{\beta }\langle \mathfrak{e}_{y^{\pi
^{\prime }(2)}},\mathrm{e}^{it\Delta _{\mathrm{d}}}F_{\beta -\alpha }^{\beta
}\left( \Delta _{\mathrm{d}}\right) \mathfrak{e}_{x^{\pi (2)}}\rangle
\mathrm{d}\alpha \right) \delta _{y^{\pi ^{\prime }(1)},x^{\pi (1)}}\right\}
+\frac{\beta \lambda ^{2}t^{4}}{16}\mathbf{D}\left( \mathbf{x},\mathbf{y}%
\right) \ ,  \notag
\end{eqnarray}%
where, for any $\mathbf{x}=(x^{(1)},x^{(2)}),\mathbf{y}=(y^{(1)},y^{(2)})\in
\mathfrak{L}^{2}$, $x^{(1)}\neq x^{(2)}$, $y^{(1)}\neq y^{(2)}$,
\begin{eqnarray*}
\mathbf{D}\left( \mathbf{x},\mathbf{y}\right) &:=&\underset{\pi ,\pi
^{\prime }\in S_{2}}{\sum }\varepsilon _{\pi }\varepsilon _{\pi ^{\prime
}}\left\{ \lambda ^{2}\left( \mathbb{E}\left[ V_{\omega }^{2}\right] \right)
^{2}\delta _{y^{\pi ^{\prime }(1)},x^{\pi (1)}}\delta _{x^{\pi (2)},y^{\pi
^{\prime }(2)}}\right. \\
&&\left. +\mathbb{E}\left[ \langle \mathfrak{e}_{x^{\pi (1)}},\left(
V_{\omega }\Delta _{\mathrm{d}}+\Delta _{\mathrm{d}}V_{\omega }\right)
\mathfrak{e}_{y^{\pi ^{\prime }(1)}}\rangle \langle \mathfrak{e}_{y^{\pi
^{\prime }(2)}},\left( V_{\omega }\Delta _{\mathrm{d}}+\Delta _{\mathrm{d}%
}V_{\omega }\right) \mathfrak{e}_{x^{\pi (2)}}\rangle \right] \right\} \ .
\end{eqnarray*}%
Note that, for each $\lambda \in \mathbb{R}_{0}^{+}$ and $t\in \mathbb{R}$, $%
\mathbf{D}\equiv \mathbf{D}^{(\lambda )}$ can be seen as the kernel (w.r.t.
the canonical basis $\{\mathfrak{e}_{x}\otimes \mathfrak{e}_{x^{\prime
}}\}_{x,x^{\prime }\in \mathfrak{L}}$) of a bounded operator on $\ell ^{2}(%
\mathfrak{L})\otimes \ell ^{2}(\mathfrak{L})$ with operator norm uniformly
bounded w.r.t. $\lambda $ on compact sets. Therefore, it is straightforward
to deduce that%
\begin{equation}
\underset{l\rightarrow \infty }{\lim }\frac{1}{\left\vert \Lambda
_{l}\right\vert }\sum\limits_{x,y\in \Lambda _{l}}\mathbf{D}%
(x,x-e_{1},y,y-e_{1})=\mathcal{O}(1)  \label{positivity eq70}
\end{equation}%
uniformly for $\lambda $ in compact sets. For more details on the last
equation, see for instance the proofs of \cite[Lemma 5.3]{OhmII} and \cite[%
Lemma 5.10]{OhmIII}.

Because of Lemma \ref{lemma conductivty4 copy(1)} and (\ref{positivity eq2}%
)--(\ref{positivity eq1}), note that
\begin{eqnarray*}
&&\underset{l\rightarrow \infty }{\lim }\frac{1}{\left\vert \Lambda
_{l}\right\vert }\sum\limits_{x,y\in \Lambda _{l}}\int\nolimits_{0}^{\beta }%
\mathfrak{C}_{t+i\alpha }^{(0)}(x,x-e_{1},y,y-e_{1})\mathrm{d}\alpha \\
&=&\underset{l\rightarrow \infty }{\lim }\frac{1}{\left\vert \Lambda
_{l}\right\vert }\sum\limits_{x,y\in \Lambda _{l}}\int\nolimits_{0}^{\beta }%
\mathfrak{C}_{i\alpha }^{(0)}(x,x-e_{1},y,y-e_{1})\mathrm{d}\alpha
\end{eqnarray*}%
does not depend on $t\in \mathbb{R}$. Using this, for $\lambda \in \lbrack
0,\lambda _{0}]$, we infer from (\ref{positivity eq1}) and (\ref{positivity
eq5})--(\ref{positivity eq70}) the existence of a constant $D\in \mathbb{R}%
^{+}$ such that the paramagnetic conductivity $\sigma _{\mathrm{p}}$ is of
the form%
\begin{equation}
\sigma _{\mathrm{p}}(t)=-D\lambda ^{2}\beta t^{2}+\mathcal{O}(\beta
^{2}\lambda )+\mathcal{O}(\beta \lambda \left\vert t\right\vert ^{3})
\label{positivity 6}
\end{equation}%
for $\lambda \in \lbrack 0,\lambda _{0}]$ and sufficiently small $\beta
,\left\vert t\right\vert $.

Now we choose sufficiently small $\beta _{0},T_{0}>0$ and estimate the
energy increment caused by the time--rescaled potential $\mathbf{A}^{(T)}\in
C_{0}^{\infty }(\mathbb{R}\times {\mathbb{R}}^{d};{\mathbb{R}}%
^{d})\backslash \{0\}$ for $T\in (T_{0}/2,T_{0})$, $\lambda \in (\lambda
_{0}/2,\lambda _{0})$, $\beta \in (0,\beta _{0})$. We assume w.l.o.g. that $%
E_{\mathbf{A}}$ is zero in all but the first component which equals a
function $\mathcal{E}_{t}\in C_{0}^{\infty }(\mathbb{R}^{d};\mathbb{R})$ for
any $t\in \mathbb{R}$. Then, by (\ref{positivity 6}) and Fubini's theorem,
we have%
\begin{eqnarray}
&&\int\nolimits_{\mathbb{R}}\mathrm{d}s_{1}\int\nolimits_{\mathbb{R}}\mathrm{%
d}s_{2}\mathbf{\sigma }_{\mathrm{p}}(s_{1}-s_{2})\int\nolimits_{\mathbb{R}%
^{d}}\mathrm{d}^{d}x\left\langle E_{\mathbf{A}^{(T)}}(s_{2},x),E_{\mathbf{A}%
^{(T)}}(s_{1},x)\right\rangle  \notag \\
&=&-D\lambda ^{2}\beta T^{2}\int\nolimits_{\mathbb{R}^{d}}\mathrm{d}%
^{d}x\int\nolimits_{\mathbb{R}}\mathrm{d}s_{1}\int\nolimits_{\mathbb{R}}%
\mathrm{d}s_{2}(s_{1}-s_{2})^{2}\mathcal{E}_{s_{2}}(x)\mathcal{E}_{s_{1}}(x)
\notag \\
&&+\mathcal{O}(\beta ^{2}\lambda )+\mathcal{O}(\beta \lambda T^{3})\ .
\label{positivity 7}
\end{eqnarray}%
Because $\mathbf{A}\in C_{0}^{\infty }(\mathbb{R}\times {\mathbb{R}}^{d};{%
\mathbb{R}}^{d})\backslash \{0\}$, we infer from (\ref{V bar 0}) that
\begin{equation*}
\int\nolimits_{\mathbb{R}}\mathcal{E}_{s}(x)\mathrm{d}s=0
\end{equation*}%
and, for all $x\in \mathbb{R}^{d}$,
\begin{equation}
-\int\nolimits_{\mathbb{R}}\mathrm{d}s_{1}\int\nolimits_{\mathbb{R}}\mathrm{d%
}s_{2}(s_{1}-s_{2})^{2}\mathcal{E}_{s_{2}}(x)\mathcal{E}_{s_{1}}(x)=2\left(
\int\nolimits_{\mathbb{R}}s\mathcal{E}_{s}(x)\mathrm{d}s\right) ^{2}\ .
\label{positivity 7*}
\end{equation}%
As a consequence, if
\begin{equation*}
\int\nolimits_{\mathbb{R}^{d}}\left( \int\nolimits_{\mathbb{R}}s\mathcal{E}%
_{s}(x)\mathrm{d}s\right) ^{2}\mathrm{d}^{d}x>0\ ,
\end{equation*}%
then (\ref{positivity 7})--(\ref{positivity 7*}) yield the lemma, provided $%
\lambda _{0}T_{0}^{2}\gg \beta _{0},T_{0}^{3}$\ .
\end{proof}

Note that Lemma \ref{main 2 copy(2)} implies that, for any $\lambda \in
\mathbb{R}^{+}$ and sufficiently small $\beta \in \mathbb{R}^{+}$, the
AC--conductivity measure is non--zero, i.e.,
\begin{equation}
\mu _{\mathrm{AC}}\left( \mathbb{R}\backslash \{0\}\right) =\mu _{\mathrm{p}%
}\left( \mathbb{R}\backslash \{0\}\right) >0\ .
\label{non zero cond measure}
\end{equation}%
This property implies the following result:

\begin{lemma}[Non--vanishing AC--conductivity measure -- II]
\label{lemma conductivty4 copy(7)2}\mbox{
}\newline
If (\ref{non zero cond measure}) holds then the set
\begin{equation*}
\mathcal{Z}:=\left\{ \varphi \in \mathcal{S}\left( \mathbb{R};\mathbb{R}%
\right) :\int\nolimits_{\mathbb{R}}\mathrm{d}s_{1}\int\nolimits_{\mathbb{R}}%
\mathrm{d}s_{2}\ \mathbf{\sigma }_{\mathrm{p}}(s_{2}-s_{1})\varphi
(s_{1})\varphi (s_{2})=0\right\}
\end{equation*}%
is meager in the Fr\'{e}chet space $\mathcal{S}\left( \mathbb{R};\mathbb{R}%
\right) $ of Schwartz functions equipped with the usual locally convex
topology.
\end{lemma}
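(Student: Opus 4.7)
The plan is to realize $\mathcal{Z}$ as the zero set of a continuous, nontrivial quadratic form on $\mathcal{S}(\mathbb{R};\mathbb{R})$. A standard perturbation argument then shows such a zero set has empty interior; being also closed, it is nowhere dense, hence meager.

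\emph{Step 1 (spectral representation and continuity of $Q$).} Set
\begin{equation*}
Q(\varphi):=\int_{\mathbb{R}}\mathrm{d}s_{1}\int_{\mathbb{R}}\mathrm{d}s_{2}\ \mathbf{\sigma}_{\mathrm{p}}(s_{2}-s_{1})\varphi(s_{1})\varphi(s_{2})\ ,
\end{equation*}
and let $B$ denote the associated symmetric bilinear form. Inserting the spectral formula $\mathbf{\sigma}_{\mathrm{p}}(t)=\int(\cos(t\nu)-1)\mu_{\mathrm{p}}(\mathrm{d}\nu)$ from Theorem \ref{lemma sigma pos type copy(4)-macro}, applying Fubini's theorem (legitimate by finiteness of $\mu_{\mathrm{p}}$ and the inclusion $\mathcal{S}(\mathbb{R};\mathbb{R})\subset L^{1}(\mathbb{R})$), and using the elementary identity $\int\int\cos((s_{2}-s_{1})\nu)\varphi(s_{1})\varphi(s_{2})\mathrm{d}s_{1}\mathrm{d}s_{2}=|\hat{\varphi}(\nu)|^{2}$, with $\hat{\varphi}(\nu):=\int\mathrm{e}^{-i\nu s}\varphi(s)\mathrm{d}s$, I would obtain
\begin{equation*}
Q(\varphi)=\int_{\mathbb{R}}\bigl(|\hat{\varphi}(\nu)|^{2}-|\hat{\varphi}(0)|^{2}\bigr)\mu_{\mathrm{p}}(\mathrm{d}\nu)\ .
\end{equation*}
Since $\|\mathbf{\sigma}_{\mathrm{p}}\|_{\infty}\leq 2\mu_{\mathrm{p}}(\mathbb{R})<\infty$, the crude bound $|Q(\varphi)|\leq 2\mu_{\mathrm{p}}(\mathbb{R})\|\varphi\|_{1}^{2}$ together with the continuous embedding $\mathcal{S}(\mathbb{R};\mathbb{R})\hookrightarrow L^{1}(\mathbb{R})$ ensures that $Q$ is continuous on $\mathcal{S}(\mathbb{R};\mathbb{R})$; hence $\mathcal{Z}=Q^{-1}(\{0\})$ is closed.

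\emph{Step 2 (perturbation).} Suppose for contradiction that $\mathcal{Z}$ has nonempty interior. Pick $\varphi_{0}\in\mathcal{Z}$ and a neighborhood $U\subseteq\mathcal{Z}$ of $\varphi_{0}$. For each $\psi\in\mathcal{S}(\mathbb{R};\mathbb{R})$, the affine line $\varepsilon\mapsto\varphi_{0}+\varepsilon\psi$ lies in $U$ for all $\varepsilon$ in some interval around $0$, and on that interval
\begin{equation*}
0=Q(\varphi_{0}+\varepsilon\psi)=Q(\varphi_{0})+2\varepsilon B(\varphi_{0},\psi)+\varepsilon^{2}Q(\psi)=2\varepsilon B(\varphi_{0},\psi)+\varepsilon^{2}Q(\psi)\ .
\end{equation*}
Being a polynomial in $\varepsilon$ vanishing on an interval, this forces $Q(\psi)=0$ for every $\psi\in\mathcal{S}(\mathbb{R};\mathbb{R})$.

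\emph{Step 3 (nontriviality of $Q$).} It thus suffices to exhibit one $\psi^{\ast}\in\mathcal{S}(\mathbb{R};\mathbb{R})$ with $Q(\psi^{\ast})>0$. Using the hypothesis $\mu_{\mathrm{p}}(\mathbb{R}\setminus\{0\})>0$, I pick $\nu_{0}\in\mathrm{supp}(\mu_{\mathrm{p}})\setminus\{0\}$ and some $\delta\in(0,|\nu_{0}|)$ so that $\mu_{\mathrm{p}}((\nu_{0}-\delta,\nu_{0}+\delta))>0$. Choose a real, even function $\hat{\psi}^{\ast}\in C_{c}^{\infty}(\mathbb{R})$ supported in $(-\nu_{0}-\delta,-\nu_{0}+\delta)\cup(\nu_{0}-\delta,\nu_{0}+\delta)$ and strictly positive on a neighborhood of $\nu_{0}$; its inverse Fourier transform $\psi^{\ast}$ then lies in $\mathcal{S}(\mathbb{R};\mathbb{R})$ and satisfies $\hat{\psi}^{\ast}(0)=0$. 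Symmetry of $\mu_{\mathrm{p}}$ and the positivity of $|\hat{\psi}^{\ast}|^{2}$ near $\nu_{0}$ then yield
\begin{equation*}
Q(\psi^{\ast})=\int_{\mathbb{R}\setminus\{0\}}|\hat{\psi}^{\ast}(\nu)|^{2}\mu_{\mathrm{p}}(\mathrm{d}\nu)>0\ ,
\end{equation*}
contradicting Step 2. Thus $\mathcal{Z}$ has empty interior and, being closed, is nowhere dense and hence meager. The only step requiring genuine care is the Fubini-based spectral rewriting in Step 1; Steps 2 and 3 are routine.
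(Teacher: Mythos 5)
Your proof is correct, but it reaches the conclusion by a genuinely different route than the paper. Both arguments share the first half: the spectral representation of $\mathbf{\sigma }_{\mathrm{p}}$ from Theorem \ref{lemma sigma pos type copy(4)-macro} plus finiteness of $\mu _{\mathrm{p}}$ give continuity of the quadratic form $Q$ on $\mathcal{S}(\mathbb{R};\mathbb{R})$, hence closedness of $\mathcal{Z}$. They diverge on why $\mathcal{Z}$ is small. The paper shows the complement is \emph{dense} by an analyticity argument: for a nonzero $\varphi \in C_{0}^{\infty }$, the Paley--Wiener theorem makes $\hat{\varphi}$ entire, so its zeros do not accumulate, and an arbitrarily small dilation $\alpha ^{-1}\varphi (\alpha ^{-1}\cdot )$ moves any zero off a support point $\nu _{0}\neq 0$ of $\mu _{\mathrm{AC}}$, forcing $Q>0$ there. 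You instead show the interior is \emph{empty} by the purely algebraic observation that a continuous quadratic form vanishing on a nonempty open subset of a topological vector space vanishes identically, and then produce a single explicit witness $\psi ^{\ast }$ with $Q(\psi ^{\ast })>0$ by localizing $\hat{\psi}^{\ast }$ near $\nu _{0}$. For a closed set these two conclusions coincide, so both proofs are complete. Your route is more elementary (no Paley--Wiener) and, notably, more careful on one point: your Step 1 keeps the term $-|\hat{\varphi}(0)|^{2}\mu _{\mathrm{p}}(\mathbb{R}\backslash \{0\})$ in the spectral rewriting of $Q$, whereas the paper's displayed identity (\ref{first eq}) silently drops it, which is only legitimate when $\hat{\varphi}(0)=0$ (as it is in the intended application to electric fields $E_{\mathbf{A}}=-\partial _{t}\mathbf{A}$, but not for arbitrary $\varphi \in C_{0}^{\infty }$ with $\hat{\varphi}(\nu _{0})\neq 0$). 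Your witness has $\hat{\psi}^{\ast }(0)=0$ by construction, so the issue never arises. What the paper's approach buys in exchange is a constructive approximation scheme: it tells you how to perturb any given $\varphi$ (by small dilations) to leave $\mathcal{Z}$, rather than merely asserting that $\mathcal{Z}$ has no interior.
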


\begin{proof}
By (\ref{non zero cond measure}), there is at least one point $\nu _{0}\in
\mathbb{R}\backslash \{0\}$ such that $\mu _{\mathbf{\Sigma }}\left(
\mathcal{V}\right) \neq 0$ for all open neighborhoods $\mathcal{V}$ of $\nu
_{0}$. To see this, observe that
\begin{equation*}
\mathbb{R}\backslash \{0\}=\underset{n\in \mathbb{N}}{\mathop{\displaystyle
\bigcup }}\left[ \frac{1}{n},n\right] \cup \left[ -n,-\frac{1}{n}\right] \ ,
\end{equation*}%
and thus there is $n\in \mathbb{N}$ such that
\begin{equation*}
\mu _{\mathrm{AC}}\left( \left[ \frac{1}{n},n\right] \cup \left[ -n,-\frac{1%
}{n}\right] \right) >0\ .
\end{equation*}%
Then, by compactness, there is $\nu _{0}\in \left[ \frac{1}{n},n\right] \cup %
\left[ -n,-\frac{1}{n}\right] $ such that
\begin{equation*}
\mu _{\mathrm{AC}}\left( \mathcal{V}\cap \left( \left[ \frac{1}{n},n\right]
\cup \left[ -n,-\frac{1}{n}\right] \right) \right) \neq 0
\end{equation*}%
for all open neighborhoods $\mathcal{V}$ of $\nu _{0}$.

Take now any non--zero function $\varphi \in C_{0}^{\infty }\left( \mathbb{R}%
;\mathbb{R}\right) \subset \mathcal{S}\left( \mathbb{R};\mathbb{R}\right) $.
By the Palay--Wiener theorem, its Fourier transform $\hat{\varphi}:\mathbb{%
R\rightarrow C}$ uniquely extends to an entire function $\mathbb{%
C\rightarrow C}$, again denoted by $\hat{\varphi}$. Hence, the set of zeros
of $\hat{\varphi}$ has no accumulation points.

If $\hat{\varphi}\left( \nu _{0}\right) \neq 0$ then, by continuity of $\hat{%
\varphi}$,
\begin{equation}
\int\nolimits_{\mathbb{R}}\int\nolimits_{\mathbb{R}}\mathbf{\sigma }_{%
\mathrm{p}}(s_{1}-s_{2})\varphi (s_{2})\varphi (s_{1})\mathrm{d}s_{2}\mathrm{%
d}s_{1}=\int\nolimits_{\mathbb{R}\backslash \{0\}}\left\vert \hat{\varphi}%
(\nu )\right\vert ^{2}\mu _{\mathrm{AC}}\left( \mathrm{d}\nu \right) >0\ .
\label{first eq}
\end{equation}%
If $\hat{\varphi}\left( \nu _{0}\right) =0$ then, for all $\alpha \in (0,1)$%
, we define the rescaled function $\hat{\varphi}_{\alpha }\left( \nu \right)
$ by $\hat{\varphi}\left( \alpha \nu \right) $, which is the Fourier
transform of $\alpha ^{-1}\varphi \left( \alpha ^{-1}x\right) $. For
sufficiently small $\varepsilon \in \mathbb{R}^{+}$ and all $\alpha \in
(1-\varepsilon ,1)$,
\begin{equation*}
\int\nolimits_{\mathbb{R}\backslash \{0\}}\left\vert \hat{\varphi}_{\alpha
}\left( \nu \right) \right\vert ^{2}\mu _{\mathrm{AC}}\left( \mathrm{d}\nu
\right) >0\ ,
\end{equation*}%
because the set of zeros of $\hat{\varphi}$ has no accumulation points. On
the other hand, $\alpha ^{-1}\varphi \left( \alpha ^{-1}x\right) $ converges
in $\mathcal{S}\left( \mathbb{R};\mathbb{R}\right) $ to $\varphi \left(
x\right) $, as $\alpha \rightarrow 1$. Thus, the complement of $\mathcal{Z}$%
\ is dense in $\mathcal{S}\left( \mathbb{R};\mathbb{R}\right) $, by density
of the set $C_{0}^{\infty }\left( \mathbb{R};\mathbb{R}\right) $ in $%
\mathcal{S}\left( \mathbb{R};\mathbb{R}\right) $. Since $\mu _{\mathrm{AC}%
}:=\mu _{\mathrm{p}}|_{\mathbb{R}\backslash \{0\}}$ with $\mu _{\mathrm{p}}(%
\mathbb{R})<\infty $ (Theorem \ref{lemma sigma pos type copy(4)-macro}),
note that the map
\begin{equation*}
\hat{\varphi}\mapsto \int\nolimits_{\mathbb{R}\backslash \{0\}}\left\vert
\hat{\varphi}(\nu )\right\vert ^{2}\mu _{\mathrm{AC}}\left( \mathrm{d}\nu
\right)
\end{equation*}%
is continuous on $\mathcal{S}\left( \mathbb{R};\mathbb{R}\right) $. Because
the Fourier transform is a homeomorphism of $\mathcal{S}\left( \mathbb{R};%
\mathbb{R}\right) $, by the first equation in (\ref{first eq}), the map
\begin{equation*}
\varphi \mapsto \int\nolimits_{\mathbb{R}}\int\nolimits_{\mathbb{R}}\mathbf{%
\sigma }_{\mathrm{p}}(s_{1}-s_{2})\varphi (s_{2})\varphi (s_{1})\mathrm{d}%
s_{2}\mathrm{d}s_{1}
\end{equation*}%
is also continuous on $\mathcal{S}\left( \mathbb{R};\mathbb{R}\right) $ and
the complement of $\mathcal{Z}$ is hence an open set.
\end{proof}

\bigskip

\noindent \textit{Acknowledgments:} We would like to thank Volker Bach,
Horia Cornean, Abel Klein and Peter M\"{u}ller for relevant references and
interesting discussions as well as important hints. JBB and WdSP are also
very grateful to the organizers of the Hausdorff Trimester Program entitled
\textquotedblleft \textit{Mathematical challenges of materials science and
condensed matter physics}\textquotedblright\ for the opportunity to work
together on this project at the Hausdorff Research Institute for Mathematics
in Bonn. This work has also been supported by the grant MTM2010-16843 and the BCAM Severo Ochoa accreditation SEV-2013-0323
(MINECO) as well as the FAPESP grant 2013/13215--5 and the
Basque Government through the grant IT641-13 and the BERC 2014-2017 program.

\end{document}